\documentclass[11pt,a4paper]{article}

\usepackage[margin=1in]{geometry}
\usepackage{amsmath,amssymb,amsthm}
\usepackage{mathpartir}          
\usepackage{stmaryrd}            
\usepackage{xspace}              
\usepackage{listings}            
\usepackage{booktabs}            
\usepackage{subcaption}          
\usepackage{hyperref}            
\usepackage{cleveref}            
\usepackage[numbers]{natbib}              
\usepackage{xcolor}              
\usepackage{graphicx}
\usepackage{aliascnt}
\usepackage{placeins}
\usepackage{tabularx}
\usepackage{booktabs}
\usepackage{doi}

\hypersetup{
  colorlinks=true,
  linkcolor=blue!70!black,
  citecolor=green!50!black,
  urlcolor=blue!70!black,
}

\usepackage{aliascnt}

\crefname{theorem}{theorem}{theorems}
\Crefname{theorem}{Theorem}{Theorems}

\newaliascnt{lemma}{theorem}
\newtheorem{lemma}[lemma]{Lemma}
\aliascntresetthe{lemma}
\crefname{lemma}{lemma}{lemmas}
\Crefname{lemma}{Lemma}{Lemmas}

\newaliascnt{corollary}{theorem}

\aliascntresetthe{corollary}
\crefname{corollary}{corollary}{corollaries}
\Crefname{corollary}{Corollary}{Corollaries}

\newaliascnt{proposition}{theorem}

\aliascntresetthe{proposition}
\crefname{proposition}{proposition}{propositions}
\Crefname{proposition}{Proposition}{Propositions}

\theoremstyle{definition}
\newaliascnt{definition}{theorem}
\newtheorem{definition}[definition]{Definition}
\aliascntresetthe{definition}
\crefname{definition}{definition}{definitions}
\Crefname{definition}{Definition}{Definitions}

\newaliascnt{invariant}{theorem}

\aliascntresetthe{invariant}
\crefname{invariant}{invariant}{invariants}
\Crefname{invariant}{Invariant}{Invariants}

\theoremstyle{remark}
\newaliascnt{remark}{theorem}
\newtheorem{remark}[remark]{Remark}
\aliascntresetthe{remark}
\crefname{remark}{remark}{remarks}
\Crefname{remark}{Remark}{Remarks}

\newcommand{\AR}{\textsc{Agentic Redux}\xspace}
\newcommand{\REPO}{\url{https://github.com/Thistleseeds/agentic-redux}\xspace}

\newcommand{\project}{\mathsf{project}}

\newcommand{\State}{\mathsf{State}}

\newcommand{\Slice}{\mathsf{Slice}}
\newcommand{\Rejection}{\mathsf{Rejection}}

\newcommand{\Domain}{\ensuremath{{\mathsf{DOMAIN}}}}

\newcommand{\stateAt}[1]{\State@#1}

\newtheorem{procedure}{Procedure}
\crefname{procedure}{procedure}{procedures}
\Crefname{procedure}{Procedure}{Procedures}

\newcommand{\Occ}{\mathsf{Occ}}
\newcommand{\Time}{\mathsf{Time}}

\newcommand{\Action}{\mathsf{Action}}
\newcommand{\Roles}{\mathsf{Roles}}
\newcommand{\Fields}{\mathsf{Fields}}
\newcommand{\Footprint}{\mathsf{footprint}}
\newcommand{\Approval}{\mathsf{Approval}}
\newcommand{\Escalation}{\mathsf{Escalation}}
\newcommand{\Outcome}{\mathsf{Outcome}}
\newcommand{\InvResult}{\mathsf{InvResult}}
\newcommand{\AuditEntry}{\mathsf{AuditEntry}}
\newcommand{\ReviewQueue}{\mathsf{ReviewQueue}}
\newcommand{\PendingProposal}{\mathsf{PendingProposal}}
\newcommand{\FrameworkState}{\mathsf{FState}}
\newcommand{\AdjudicablePlus}{\ensuremath{\Adjudicable^{+}}}
\newcommand{\AgReduxPlus}{\ensuremath{\AgRedux^{+}}}
\newcommand{\Counselor}{\mathsf{Counselor}}
\newcommand{\CounselorOutcome}{\mathsf{CounselorOutcome}}
\newcommand{\cCommitted}{\mathsf{cCommitted}}
\newcommand{\cRejected}{\mathsf{cRejected}}
\newcommand{\awaitCounselor}{\mathsf{awaitCounselor}}
\newcommand{\authorized}{\mathrm{authorized}}

\newcommand{\Adjudicable}{\textsc{Adjudicable}}
\newcommand{\AgRedux}{\mathsf{AgenticRedux}}

\newcommand{\Agent}{\mathsf{Agent}}

\newcommand{\AgentPop}{\Pi}
\newcommand{\Coord}{\pi}
\newcommand{\Commit}{\kappa}
\newcommand{\Oracle}{\Omega}
\newcommand{\Arch}{\mathcal{A}}

\newcommand{\invariants}{\mathsf{invariants}}
\newcommand{\applyMutation}{\mathsf{applyMutation}}

\newcommand{\pass}{\mathsf{pass}}
\newcommand{\rejectRes}{\mathsf{reject}}
\newcommand{\escalateRes}{\mathsf{escalate}}
\newcommand{\approved}{\mathsf{approved}}
\newcommand{\rejected}{\mathsf{rejected}}
\newcommand{\escalated}{\mathsf{escalated}}
\newcommand{\config}[3]{\langle #1,\, #2,\, #3 \rangle}

\newcommand{\restr}{\!\upharpoonright\!}   

\title{Provably Auditable and Safe LLM Agents from Human-Authored Ontologies
\thanks{A supporting repository with code for the Agentic Redux 
kernel and all Problem Domain examples can be found at \REPO}}
\author{Aaron Sterling\\Thistleseeds\\\texttt{asterling@thistleseeds.com}}
\date{}

\begin{document}

\maketitle

\begin{abstract}
We introduce the LLM agent architecture Agentic Redux, intended for use with nontrivial problem domains that require linear auditability.
Using the typed lambda calculus, we prove that, run on appropriate domains, Agentic Redux executions are semantically guaranteed to be correct,
with all decisions recorded in an append-only ledger.
We present two production-grade appropriate domains, in healthcare billing compliance, and security vulnerability disclosure.
Working code for Agentic Redux run on both domains is available in a supporting code repository.
We also introduce Ontology-First Agent Design, a methodology for creation of agent frameworks on a problem domain, 
in which a human expert ontologizes the problem domain with Basic Formal Ontology, 
and then assigns an LLM to derive roles that agents and humans-in-the-loop can fill, in order to work the problems in the domain.
\end{abstract}

\section{Introduction}
\label{sec:intro}
\subsection{Semantic Verification of Safety}
\label{sec:intro-safety}

LLMs hallucinate.
Even when they respond correctly, they behave nondeterministically.
In safety-priority domains that require linear auditability, like finance or healthcare compliance, agent behavior can be hard to manage.
Currently, engineers manage the safety of agent behavior operationally.
For example, Anthropic's Managed Agents \cite{anthropicmanagedagents} make it easy to swap out a stale agent harness for an up-to-date harness.
The primary goal of this paper is to show that, for some problem domains, using the correct agent architecture means you never have to swap out the harness, ever.
The \emph{type-theoretic properties} of the architecture provide a \emph{semantic guarantee of safety}.

Programming language theory, and the typed lambda calculus, provide a sophisticated toolbox to prove that bad behaviors can never happen.
That said, while the theorems in this paper prove that our executions of interest are always correct, real world software is buggy.
I recommend that builders follow a defense-in-depth approach of wrapping each agent in a governance layer (such as the layer provided by Microsoft Agent Framework\cite{microsoftagentframework}) that watches for exactly the behavior the type system prevents.

While Managed Agents provides a runtime substrate, and Microsoft Agent Framework provides guardrails, this paper focuses on a third layer: agent \emph{architecture}, the rules about what types of agents can exist, what each agent can see and do, how agents communicate with one another, and how state changes happen.

To see why agent architecture matters, let's consider a naive approach. Define the \emph{Microservices Architecture} as the architecture where each agent behaves independently of all other agents, and all agents write to a shared log.
A two-agent system for which the Microservices Architecture fails is the following.
The system has a budget of \$100,000.
Agent A can either do nothing or spend \$45,000.
Agent B can either do nothing or spend \$60,000.
Each agent sees only its local state.
Since, in the Microservices Architecture, both agents act independently of each other, 
it is within the rules for both agents to spend money simultaneously, which causes the system to exceed its \$100,000 budget.
(Berenson \emph{et al}.\cite[]{berenson} termed this kind of failing execution a \emph{Write Skew}.)
If, instead, the two agent system were running an architecture in which Agents A and B 
could propose actions to a meta-agent, and the meta-agent with view of the global state would adjudicate the proposals, 
then the meta-agent could ensure that the system would never exceed the budget.

Robust microservices architectures mitigate Write Skew with error-reduction patterns (\emph{e.g.}, sagas).
Similarly, current agent frameworks are managing problems like Write Skew either by trivializing inter-agent coordination 
to sidestep the problem; or by gatekeeping state changes in ways that they hope will work, such as asking an LLM 
whether the state change is OK, then following the LLM's recomendation, with no consequence if the recommendation is incorrect.
The second approach is the ``LLM as Judge'' pattern of \cite{lakshmanan2025generative}.
Those approaches are used by all the multi-agent apps in \verb|awesome-llm-apps|\cite{awesomellmapps}, 
a public Github repository with over 100,000 stars.
A stronger approach would be to employ architecture with a mathematical guarantee that
 any change to the global state preserves ``nice properties.''

The rule (nice property), ``The system can spend at most \$100,000,'' is an \emph{invariant} of the system: 
it is a statement that must be always true about the global state.
To prevent Write Skew (and other bad behaviors), the agent architecture must preserve invariants when the global state transitions.
That \emph{preservation of invariants} can be shown to be a semantic guarantee of a correctly-typed architecture, 
ensuring the agent framework complies with, \emph{e.g.}, regulatory requirements that must always be true.

Real-world agent frameworks almost always deploy with a human-in-the-loop mechanism for a person in
charge to act as supervisor or problem-solver. All theorems in this paper about pure Agentic Redux
can be modified to cover Agentic Redux with Counselor Queue, where the Counselor is a human with authority over the system.
In particular, both preservation of invariants and linear auditability still hold.
For more on this topic, see \cref{sec:proofs-counselor-queue}.
\subsection{Threat Model (What is Safety)}
\label{sec:threat-model}
For purposes of this paper, I scope the notion of safety as follows.

\emph{In scope:} domain modules may contain bugs or misspecifications, 
agents may produce arbitrary proposals (whether from LLM hallucination, tool error, prompt injection,
or deliberate adversarial construction); 
and, in the proof machinery, the oracle $\omega$ may supply arbitrary scheduling and response sequences. 
Theorem 1 guarantees that, regardless of these failures, invariants are always preserved. 
Theorem 2 guarantees that every decision is recorded faithfully, 
so the system can be linearly audited at any time. 

\emph{Out of scope:} if an invariant is poorly specified, and does not support a real-world policy, the system will enforce it,
but misbehave with respect to real-world application.
While humans in the loop can make any policy decision, including commits that violate invariants,
their decisions must be formatted in a way that obeys the system's type contracts.
Confidentiality of the audit log itself is not in scope.
Some decisions in Security Vulnerability Disclosure depend on external signals,
like assignment of an ID number to a CVE;
recording those signals faithfully is the responsibility of the agent harness,
which is not in scope.
This paper focuses on safety requirements (bad things don't happen);
I consider liveness requirements (good things eventually happen) in \cite{semantic-correctness-2}.
%
\subsection{Agentic Redux}
\label{sec:intro-redux}
\subsubsection{Agentic Redux Architecture}
\label{sec:agentic-redux-architecture}
This paper introduces the agent architecture \emph{Agentic Redux}.
It is inspired by the front-end application state manager Redux\cite{redux}, created by Dan Abramov.
Abramov wanted to ensure that the most recent, fresh state was always the state displayed to the user.
Imagine you have the same web page open in two browser tabs.
You click a check box in one tab, then switch to the other tab.
Does the second tab display the checked box, or does it still display the stale state with the box unchecked?
The Redux solution to this problem is to make all components presentation-only, and to centralize
 all state changes through a single decisonmaker that can see the global state of the app.
A popular Redux saying is, \emph{``Smart container, dumb components.''}

In Agentic Redux, each agent sees only their local slice of the global state, and computes their own evaluation function based on that local state. 
The evaluation could be simple and synchronous, or complex and asynchronous, invoking LLMs or other tools.
If an agent wants to change its local slice of state (the analogue to displaying a checked box because the user clicked an unchecked box), 
it proposes that change to a meta-agent.
The meta-agent adjudicates all proposals from local agents.
The meta-agent's adjuducation function is written to preserve desired invariants of the global state.
The meta-agent can see the global state, and is the only entity in the system with the power to change the global state.
If the meta-agent changes the global state, it then sends each agent their local slice of the new global state.
Agents propose; the meta-agent decides and directs. See \cref{fig:agentic-redux-architecture} for a diagram.

\begin{figure}[t]
     \centering
     \includegraphics[width=\linewidth]{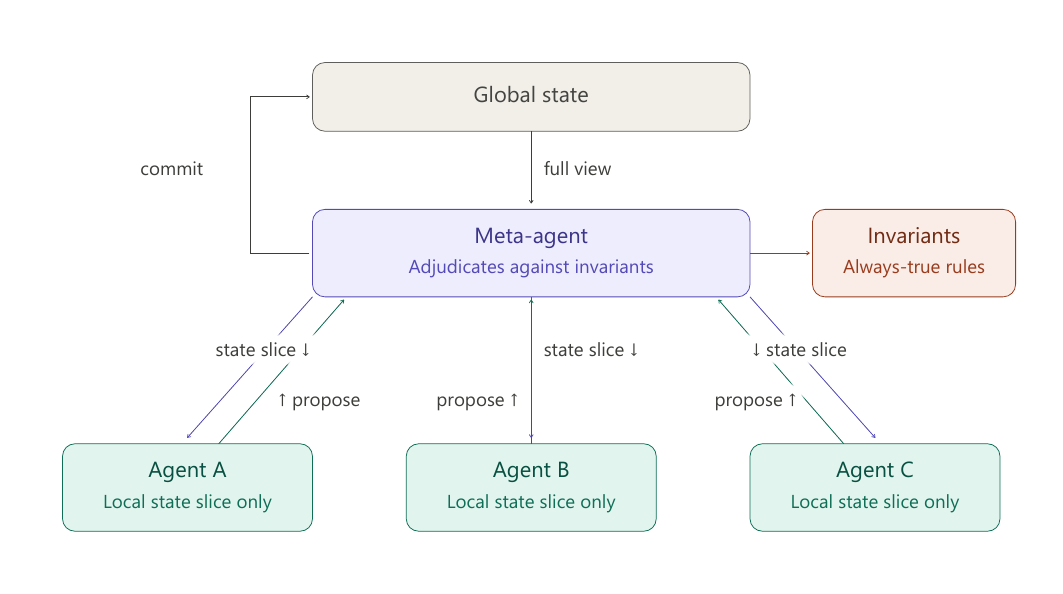}
     \caption{The \AR architecture. Sub-agents receive isolated state slices and propose actions to a meta-agent, 
              which adjudicates proposals against declared invariants before committing state transitions. Code is available at \REPO}
     \label{fig:agentic-redux-architecture}
   \end{figure}
\subsubsection{Linear Auditability}
\label{sec:linear-auditability}
The meta-agent keeps a log.
Each time the meta-agent declines a proposal from one of the subagents, 
the meta-agent records the denial, and why it denied the proposal (which invariant would not have been preserved if it had accepted the proposal).
Each time the meta-agent accepts a proposal and changes the global state, it logs the new global state and proof that all invariants are preserved.
This log provides an audit trail, ordered linearly in time, so every decision of the system can be reviewed by an external auditor.

Linear auditability benefits from other properties Agentic Redux provides:
the log is a \emph{ledger}, meaning no data is ever removed, and writing is append-only;
only the meta-agent writes to the log, and only when deciding whether to change the global state;
every decision by the meta-agent about whether to change the global state produces exactly one log entry.
The formal proof of all these benefits is discussed in \cref{sec:proofs-about-agentic-redux}.
Later in the paper, we will see architectures that allow either the meta-agent or a human-in-the-loop to write to the log.
\subsubsection{Agentic Redux May Be a ``Natural'' Architecture}
\label{sec:agentic-redux-natural-architecture}
I created Agentic Redux to handle issues I was running into when designing software for healthcare billing compliance. 
To provide evidence that Agentic Redux may be a ``natural'' agentic architecture, 
with application to a variety of problem domains, Agentic Redux appears near-isomorphic 
to the Risk Agents framework\cite{riskagents}, which was independently created by engineers at CashApp.
While CashApp created Risk Agents to solve problems in financial compliance, 
I stumbled across Risk Agents when reading minutes of a working group meeting that 
was discussing how to apply Risk Agents to automated social media moderation.

For reasons of space, I will defer discussion of social media moderation until future work.
However, the supporting code repository for this paper contains the full ontology and code examples 
for social media User Case Management, a problem domain that benefits from Agentic Redux (and from Risk Agents).
The ontology is derived in part from Osprey\cite{osprey}, an open-source social media moderation tool connected to that working group.

\subsection{Methodology}
\label{sec:intro-methodology}
\subsubsection{Domains and Functors}
\label{sec:intro-domains-functors}
To study the architecture separately from the problem it is helping to solve, 
I use a Programming Language Theory technique of separating a working system into three pieces: 
a \emph{Domain Module}, an \emph{Architecture Functor}, and a \emph{Client Program}.

The composition of the Domain Module and the Architecture Functor can be thought of as a module making a call to an operating system kernel.
In the case of Agentic Redux, a domain provides invariants, the initial state, 
the meta-agent's adjudication function, and the evaluate-and-propose functions run by the local agents.
The meta-agent's adjudication function is a synchronous checklist of invariants to preserve
in all global state changes,
with the subagent functions can be either synchronous or asynchronous.
A local agent, using its own slice of state, may consult an LLM (or any other tool) before deciding whether to propose an action.
The Agentic Redux kernel then runs the system, with the guardrail that, whenever global state might change, 
that change is allowed only if all invariants are preserved.

The Client Program handles everything else: sandbox, session liveness, harness, test scenarios, governance.
These features are important, but they are out of scope for this paper.
I will assume that all Client Program features are available and work.

I define the properties of the Architecture Functor with the \emph{typed lambda calculus}\cite{pierce2002types}, 
a formalization of programming language pseudocode.
The typed lambda calculus is well-studied, and its literature contains deep theorems.
\Cref{sec:overview-formal-proofs} shows how to prove that Agentic Redux is semantically guaranteed to preserve system invariants.

\subsubsection{Computational Ontology}
\label{sec:intro-ontology}
%
%
While a Domain Module is code that captures information about an area of interest, I will use \emph{Problem Domain} 
to refer to a real-world problem space that humans are trying to reason about.
I consider two Problem Domains in this paper: health care compliance, and security vulnerability disclosure.
These domains sound different from each another (and they are!), but they share common underlying structure.
Significantly, they both admit Write Skew executions if run on the Microservices Architecture, 
and they both avoid Write Skew executions if run on Agentic Redux.

I move from Problem Domain to Domain Module by performing an intermediate step: construction of an ontology.
While ontology dates back to the Greeks, the study of \emph{computational ontology} began in the 1970s 
as part of early AI research\cite{ontology-history}.
There are now many computational ontologies available, 
to provide machine-understandable structure to music, anatomy, and other disciplines.

My ontology of choice is the \emph{Basic Formal Ontology} (BFO)\cite{bfo}, 
because it was created to ontologize complex processes, including organic processes, like disease progression in medicine.
BFO is an ``upper ontology,'' meaning that it provides 
an abstract template for categories and relations that have been shown to be robust over time.
A subject matter expert starts with BFO and constructs a ``lower ontology,'' 
which is a structural description of a concrete Problem Domain that follows the rules of BFO.

If you are a subject matter expert interested in ontologizing your domain of expertise, 
I recommend the excellent book \cite{arp2015bfo}, which is now open access.

\subsubsection{Ontology-First Agent Design}
\label{sec:intro-onotlogy-fingerprint-architecture}
The methodology I adopted (which may be of interest independent of Agentic Redux) I have termed \emph{Ontology-First Agent Design}.
The methodology is: (1) ontologize a Problem Domain using BFO; (2) ask an LLM to assess the ontology for \emph{domain fingerprints} 
of interest; (3) if the domain has a fingerprint that Agentic Redux can help with, 
ask the LLM to convert the ontology into a Domain Module, i.e., into code; (4) run the Domain Module on Agentic Redux 
and observe the results, clarifying the ontology as needed.
See \cref{fig:ontology-first-agent-design} for a diagram.

\begin{figure}[t]
     \centering
     \includegraphics[width=\linewidth]{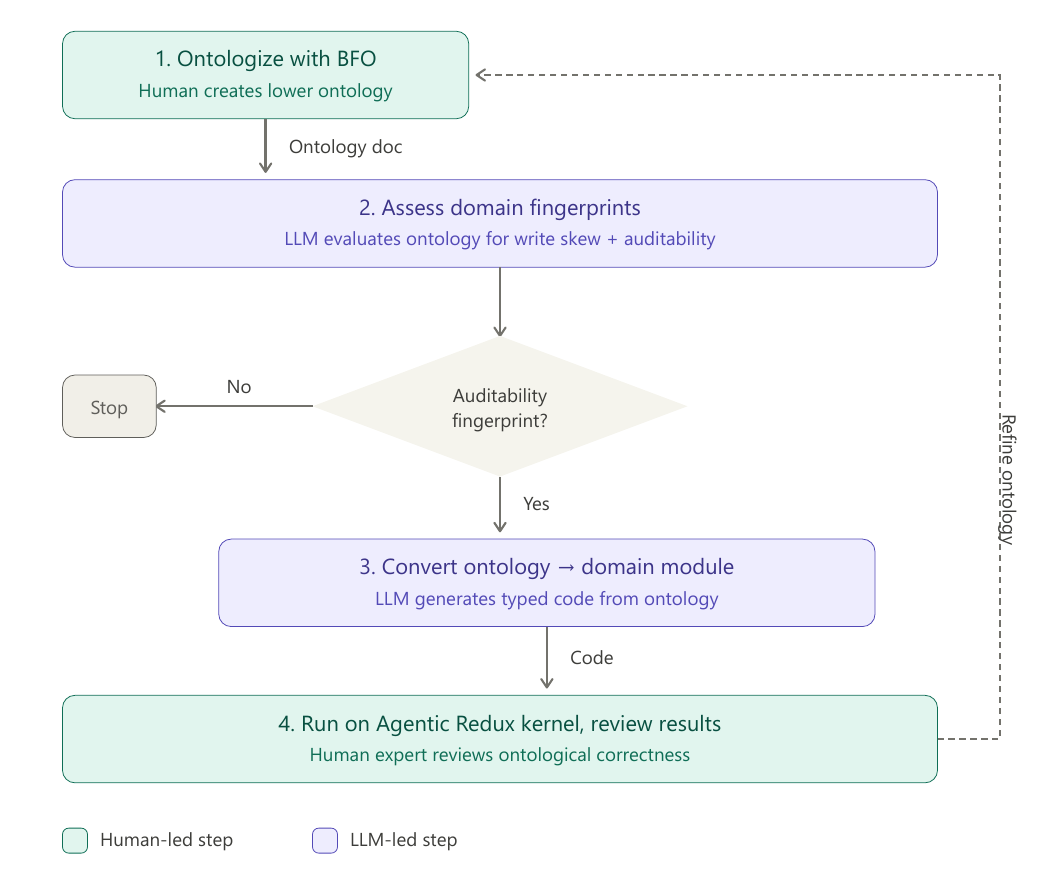}
     \caption{Ontology-First Agent Design. The ontologies referenced in this paper were produced by this method. Ontology files are available at \REPO}
     \label{fig:ontology-first-agent-design}
   \end{figure}

A domain fingerprint is an attribute of the Problem Domain that informs what is required to solve problems in that domain.
The fingerprint, ``All execution is seqential in a pipeline,'' makes Write Skew impossible, 
and is evidence that a simple architecture like the Microservices Architecture might be enough solve problems in that domain.
A fingerprint implying Write Skew indicates that a more sophisticated architecture is needed.
All Problem Domains considered in this paper imply Write Skew, and also have an \emph{Auditability Fingerprint}: 
the system must be able to report on every action the system performed, ordered linearly in time.
A Problem Domain with both Write Skew and Auditability fingerprints may be a strong candidate to benefit from Agentic Redux.
(A Problem Domain with just the Auditability fingerprint might be best run on a simple pipeline, if there are no nontrivial agent interactions.)

While I chose the two Problem Domains considered in this paper because they are structurally similar,
I ontologized each individually, and their ontologies are not a reskin.
Including the User Case Management ontology in the discussion for a moment, 
both the security vulnerability and social media moderation domains contained a timed governance relationship 
that the health care compliance domain did not; and a relationship present in both health care compliance 
and security vulnerabiity disclosure was absent in social media moderation.

Please consider this part of the paper a presentation of a case study of Ontology-First Agent Design.
Between this paper and its upcoming parts, I have successfully used ontologies to auto-derive agents for systems
with seven different problem domains and three different agent architectures (not just Agentic Redux).
It's a small sample size, but robust enough that I think it's likely someone else could benefit from it too.

\subsubsection{Human vs. LLM Ontology Creation}
\label{sec:human-llm-ontology-creation}
It is worth pausing for a moment to review ontology creation by humans and by LLMs, 
because there is empirical data that might look contradictory at first glance, but, in fact, paints a unifying picture.
LLMs are not as good as humans at ontology creation (sometimes called ``ontology learning''), 
as shown in \cite{LLMs40L_2025,llms4life,ontology_learning_llms}.
However, at least according to the OntoURL benchmarks, LLMs are better than humans at reasoning over an ontology that already exists\cite{OntoURL}.

Despite the previous results, the quality of LLM-generated ontologies 
can be higher than the quality of ontologies created by novice human engineers\cite{ontology_generation_llms}.
This is not a contradiction, because an LLM's ability to 1-shot ontology 
creation is directly related to how completely \emph{humans already ontologized the space through documentation}.
The need for humans to pre-ontologize the space can be seen in \cite{llm_adapt_domain}, 
which presented LLMs with well-structured gibberish, and the LLMs were unable to ontologize 
the gibberish, showing an inability to reason over semantic relations between concepts.

One goal of Ontology-First Agent Design is to focus human expert input where it is most needed: 
creation of the ontology (at the start), and refinements to the ontology to improve the system's functionality (a feedback loop at the end).
The LLM does the work in the middle, where it is most effective.

\section{From Ontology to Agents}
\label{sec:ontology-to-agents}

\subsection{Basic Formal Ontology}
\label{sec:bfo}

I will present just enough BFO to show how to define agents and system invariants, so Agentic Redux can run on those agents to preserve the invariants.
The most important BFO definitions for our purposes are independent continuant, process and role.

An \emph{independent continuant} is a thing with its own relations and qualities.
In the two ontologies considered in this paper, both have one primary independent continuant: 
a patient, or a security vulnerability.
The Security Vulnerability Ontology also has a secondary independent continuant,
which only plays a peripheral role as an occasional source of signals.

A \emph{process} is something that happens to an independent continuant, unfolding over time.
In the Security Vulnerability Disclosure ontology, there are twelve processes, including: 
DevelopPatch, ReviewPatch, ReleaseFix, ExecuteDisclosure.
Note the centrality of the independent continuant (the vulnerability) to each process:
the vulnerability has a patch developed for it, the fix for the vulnerability is released, until finally the vulnerability is disclosed.

A \emph{role} is an externally grounded, optional, property of an independent continuant.
``John is an employee at ACME Corporation,'' is an example of a role.
John, an independent continuant, holds the role relative to an independent continuant external to himself, and John would continue being John if he were no longer employed there.
We can extend the original ontology by introducing \emph{workers}: new independent continuants that bear ``employment'' roles grounded in the processes of the original ontology.
In the examples we consider in this paper, these derived ``employment roles'' or ``worker roles'' are separate from, and in addition to, any roles that may have existed in the original ontology.

A technical note: I strongly recommend you upload the pdf of BFO 2.0\cite{BFO2} to your LLM's knowledge base, instead of the BFO ISO standard\cite{BFO_ISO_standard} in OWL or OWL DL.
The OWL DL spec is the more useful format for tools like automated reasoners, but it lacks the lexical context that LLMs need when helping to create ontologies, as already discussed in \cref{sec:human-llm-ontology-creation}.
\subsection{Derivation of Agent Roles}
\label{sec:derivation-roles}
Given an ontology of a problem domain, the roles we add to that ontology are the employment opportunities available to work on problems in the domain.
Those roles are then filled either by agents, or by humans in the loop. 
Each role comes with a job description (which processes to perform) and a security clearance (which slice of the global state is available).

I will now present a procedure (displayed visually in \cref{fig:role-derivation-procedure}) for determining which roles to create in order to extend a problem domain's ontology.
This is most of the work required to perform Step 3 (Convert Ontology to Domain Module) of Ontology-First Agent Design.
As shown in the color coding of \cref{fig:ontology-first-agent-design}, this is a procedure LLMs can perform well, in my experience.

\begin{procedure}[Role Derivation]
\label{proc:role-derivation}
List all processes in the ontology.
For each process, list everything the process needs to read or write; call that list the process's \emph{state footprint}.
Group processes whose state footprints substantially overlap.
For each group of processes, define a role whose state footprint is the union of the group's footprints.
Verify that no role touches state that its processes do not need; if that check fails, redo the previous step to define more fine-grained roles.
\end{procedure}

The preference for fine-grained roles in the Role Derivation procedure is an architectural best practice, not a logical requirement.
Just one agent could perform all roles, and that would avoid any Write Skew executions, but it would come with all the drawbacks of monolithic architecture.
Fine-grained roles reduce the blast radius of a faulty LLM step, and provide agent confinement, a type-level analogue to the \emph{Principle of Least Privilege} (PoLP).

PoLP is an access control property that is enforced by a governance layer to ensure a worker has access only to the resources needed to perform its work.
\emph{Agent confinement}, a generalization to agents of the classic Confinement Problem\cite{confinement}, restricts both what an agent can observe and what an agent can directly modify: an agent sees only the minimal slice of state needed to perform its work, and cannot directly modify state at all, since all state changes are mediated by the meta-agent.

\begin{figure}[t]
     \centering
     \includegraphics[width=\linewidth]{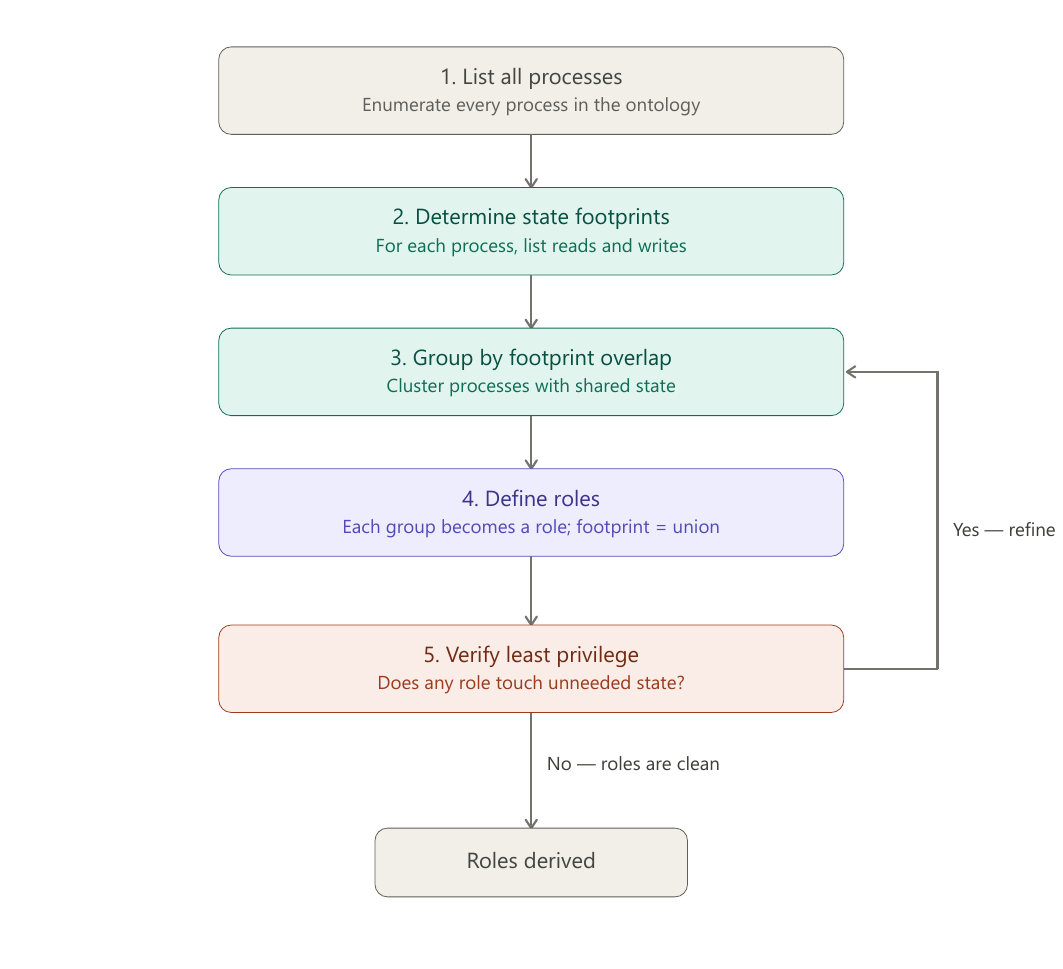}
     \caption{Role Derivation procedure.}
     \label{fig:role-derivation-procedure}
   \end{figure}

\subsection{Derivation of System Invariants}
\label{sec:system-invariants}
Once we have roles and slices of state that we can assign to workers, we need to determine the system invariants that we want the Agentic Redux meta-agent to enforce.
As with the Role Derivation procedure, I have found that LLMs are good at this step.
See \cref{fig:invariant-derivation-procedure} for a diagram.

Invariants come from the problem domain, not from the architecture.
The architecture's job is to \emph{enforce} invariants; the ontology's job is to \emph{declare} them.
BFO provides two natural sources of invariants: policy rules and process preconditions.

\begin{procedure}[Invariant Derivation]
\label{proc:invariant-derivation}
~

\noindent\textbf{Step 1: List all policy rules.}
Identify every \emph{information content entity} (ICE) in the ontology that represents a policy rule.
Each such ICE contributes at least one invariant.
List the state components each constraint references; call that list the invariant's \emph{state footprint}.

\noindent\textbf{Step 2: Extract process preconditions.}
For every process in the ontology, list the preconditions that must hold before the process can occur.
Each statement of type, ``Process P cannot occur unless these preconditions hold,'' is an invariant.
The invariant's state footprint includes the state written by any prerequisite process, and the state read by the gated process.
\end{procedure}
If you have an LLM derive the invariants, one human verifcation method would be to walk the life cycle of the ontology's independent continuant.
At every state transition, check that the conjunction of declared invariants is sufficient to prevent any transition the domain should prohibit.
If a transition is possible that the domain expert considers invalid but no invariant blocks, add the missing invariant as either a policy rule or a process prerequisite, then rerun the procedure.

Once you have listed all system invariants, classify them as either local or cross-cutting.
A \emph{local} invariant is an invariant whose state footprint is entirely contained within the state footprint of a worker.
A \emph{cross-cutting} invariant is an invariant whose state footprint intersects with the state footprints of at least two workers.
(A globally shared resource, like the \$100,000 budget in the system in the Introduction, is an example of a cross-cutting invariant.)
A local invariant can be enforced by the local worker whose state footprint contains it.
It is the meta-agent's responsibility, in Agentic Redux, to enforce that the cross-cutting invariants are preserved.

\begin{figure}[t]
  \centering
  \includegraphics[width=\linewidth]{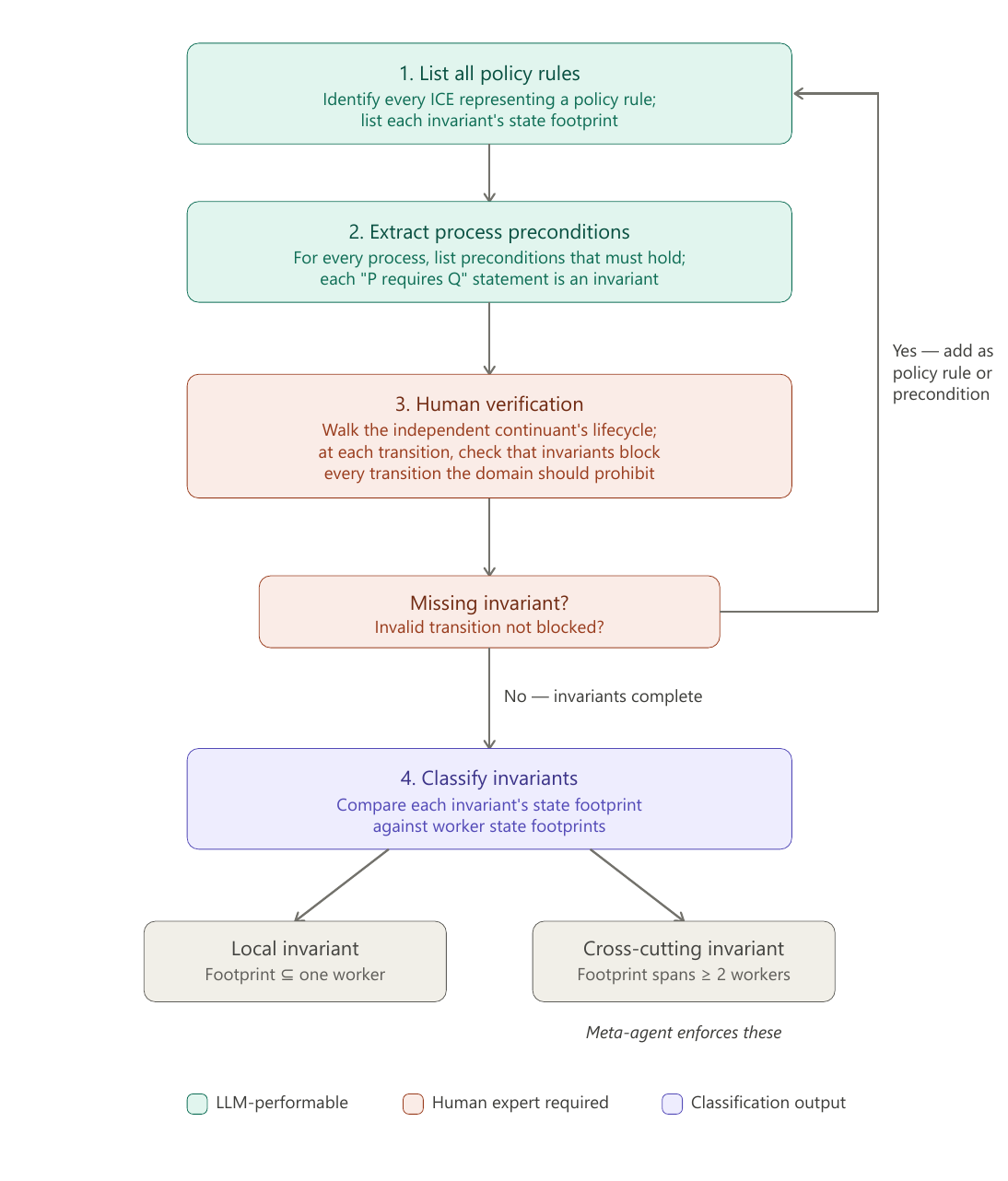}
  \caption{Deriving invariants, then assigning the cross-cutting invariants to the Agentic Redux meta-agent.}
  \label{fig:invariant-derivation-procedure}
\end{figure}
\subsection{Summary}
\label{sec:summary-ontology}
We've walked through how to derive worker roles and system invariants from a domain ontology.
(The ontologies available in the supporting code repository are BFO ontologies of problem domains, extended with roles and workers.)
In the ontologies considered in this paper, one worker is a human in the loop, while the rest of the workers are agents.
In other problem domains, perhaps it would make sense for all the workers to be agents.
The Agentic Redux meta-agent enforces preservation of a system invariant if that invariant affects the state footprint of at least two workers.

Now let's turn to the problem domains to see concrete examples.

\section{Motivating Problem Domains}
\label{sec:domains}
\subsection{UDT Compliance}
\label{sec:udt}

I created Agentic Redux to ensure compliance with billing requirements of Arizona's AHCCCS Program\cite{ahcccs-billing}.
These requirements can be difficult to parse, which can lead to service providers, or individuals with few resources, being denied reimbursement for services they thought were covered.
In particular, when a Urine Drug Test (UDT) is, or is not, covered, can be confusing.
My goal, when defining the UDT Compliance problem domain, was to be able to inform people with 100\% accuracy whether their UDT would be covered by AHCCCS.
This required coordination between billing and lab ordering, which led me to the construction of a meta-agent that could see both billing and lab ordering states.

There are two types of UDTs: a presumptive test, and a definitive test.
A \emph{presumptive test} uses immunoassay methods, and provides general results like, ``Opioids detected.''
A \emph{definitive test} uses a more specialized, more expensive method, to detect specific drugs, and can provide results like, ``Fentanyl at xyz concentration detected.''

The AHCCCS regulations prohibit an immunoassay test from being used to confirm a presumptive test; a definitive test is required to confirm the results of a presumptive test.

At the same time, the number of UDTs AHCCCS will cover in a week is dependent on an individual's \emph{abstinence tier}, which is based on the number of days since the individual's last positive test.
Someone with an abstinence tier of 90+ days, is eligible for three definitive tests in a 90-day period.
Someone with an abstinence tier of 0-30 days, is eligible for one definitive test every 7 days.

Putting everything together, we can see a billing pitfall that is structurally identical to a Write Skew execution.
A patient, call him David, is in the 90+ day abstinence tier.
He takes a definitive test on March 23rd, which is negative.
Then, on March 26th, he takes a presumptive test, which detects opioids.
The service provider then orders a definitive test, also on March 26th, because of the results of presumptive test.
However, this definitive test will not be covered, because it is the second definitive test in a 7-day period, and David's abstinence tier just reset to 0-30 days, which only allows one definitive test in a 7-day period.

If the biller is working with the now-stale state that David's abstinence tier is 90+ days, the new definitive test appears covered, since it is the second test in a 90-day period.
In order to evaluate coverage correctly, the biller needs current information from the lab-order team, and from the clinical team that tracks abstinence tiers.
Therefore, if we agentify the workflow, with a LabOrder agent, a Billing agent, and a Clinical agent, we also need a meta-agent that can see the global state in order to make a final decision.
See \cref{tab:udt-agents} for a list of agents and their slices of state, in this agentified workflow.

Agentic Redux as presented in \cref{sec:intro-redux} is not sufficient to model UDT Compliance.
Both medical best practices and AHCCCS regulations mandate that the clinical response to a patient's relapse be a human decision point.
Therefore, the original Agentic Redux presented in \cref{fig:agentic-redux-architecture} is augmented with a Counselor Queue, that allows for a human in the loop, as shown in \cref{fig:udt-agentic-redux}.

The ontology of the UDT Compliance problem domain is available in the supporting code repository, as is fully working code for a UDT Compliance domain module, and Agentic Redux augmented with a Counselor Queue.
The independent continuant of the UDT Compliance ontology is the Patient.
The roles are as shown in \cref{tab:udt-agents}: LabOrder, Billing, and Clinical.
\begin{table}[t]
\centering
\caption{Sub-agent roles in the UDT compliance domain.}
\label{tab:udt-agents}
\begin{tabular}{@{}llp{5.8cm}@{}}
\toprule
\textbf{Role} & \textbf{Agent} & \textbf{Visible State Slice} \\
\midrule
\textsf{LabOrder} & Lab Order Agent & Clinical rationale \\
\hline
\textsf{Billing} & Billing Agent & Claim history \\
\hline
\textsf{Clinical} & Clinical Analysis Agent & Abstinence tier tracking  \\
\bottomrule
\end{tabular}
\end{table}
\begin{figure}[t]
     \centering
     \includegraphics[width=\linewidth]{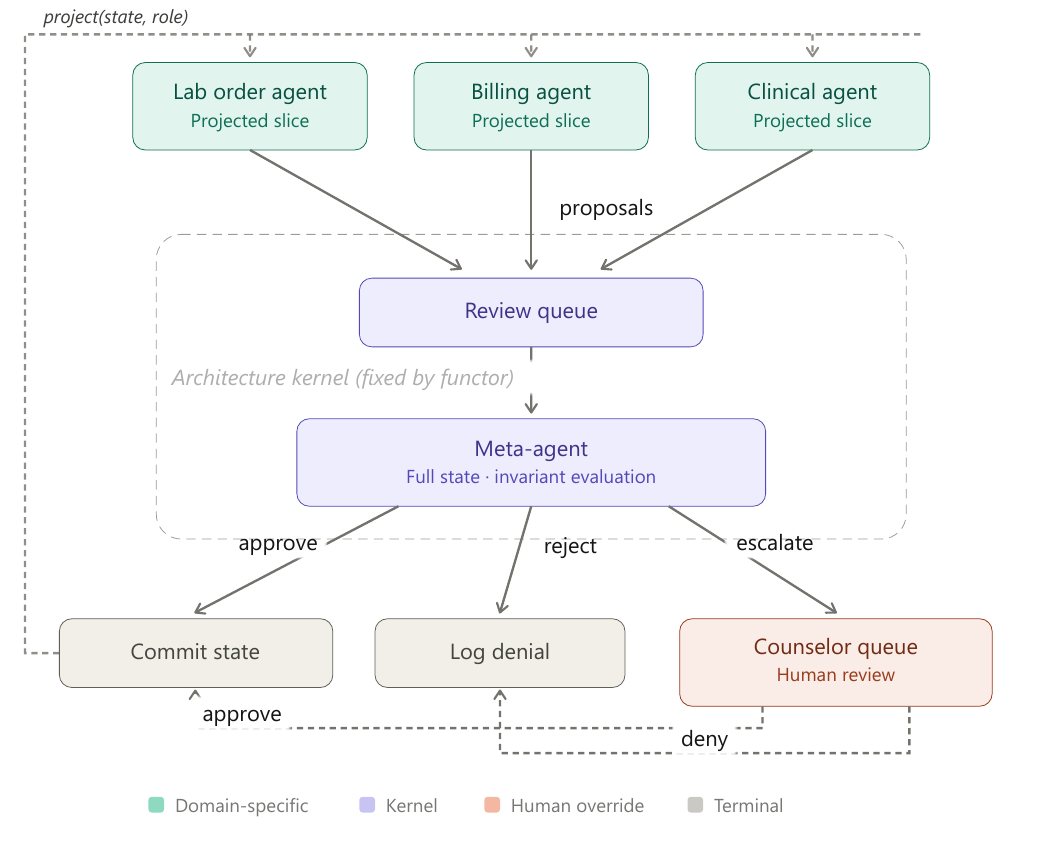}
     \caption{Agentic Redux expanded with a Counselor Queue for the UDT Compliance probem domain.}
     \label{fig:udt-agentic-redux}
   \end{figure}
\subsection{Security Vulnerability Disclosure}
\label{sec:security}
The Security Vulnerability Disclosure ontology is motivated by a bottleneck I first heard described
by Carlini in a video of a talk he gave at the [un]prompted security conference\cite{carlini-unprompted},
since explained in detail by Lynch\cite{lynch_explain_carlini},
and supported by an announcement from Anthropic Red\cite{anthropic_zero_days}.
The Anthropic Red announcement is particularly interesting for this paper,
because it shows that Anthropic Red has already converged on a methodology that is Agentic-Redux shaped.
The Anthropic Red methodology is less formal than Agentic Redux, and, at least to my mind, would benefit
from typed architecture that is correct by construction.
This section of the paper is my response to a call Carlini made at the end of his talk,
when he asked for assistance to get through the upcoming security crisis.

As with the UDT Compliance ontology, fully-tested source code for running
the Security Vulnerabiity Disclosure domain module on the Agentic Redux kernel
is available in the supporting code repository.
\subsubsection{Security Vulnerability Disclosure Ontology}
\label{sec:security-vulnerability-disclosure-ontology}
Following the method of Ontology-First Agent Design, I ontologized the Security Vulnerability Disclosure
problem domain, and then directed an LLM to derive roles that LLM agents and a human-in-the-loop would fill.
The scope of the problem domain was from just after the validation of a vulnerability to its eventual disclosure
to the organization owning the component containing the vulnerability.

While the UDT Compliance ontology has one independent continuant, the Patient,
the Security Vulnerability Disclosure ontology has two independent continuants: the \emph{Vulnerability},
and the \emph{Receiving Organization}. 
The Vulnerability is the primary entity that participates in
the disclosure adjudication pipeline; it is analogous to the Patient in the UDT Compliance ontology.
The Receiving Organization has no UDT Compliance analogue.
For purposes of this ontology (and this paper), it is primarily a source of signals that processes
in the disclosure adjudication pipeline wait for before proceeding.
If one were to ontologize the work to be done from disclosure to resolution,
the Receiving Organization would play a larger role.

Security Vulnerability Disclosure is a more complex problem domain than UDT Compliance,
as can be seen from the larger number of relations a Vulnerability can have with
Qualities and with Information Content Entities, see \cref{fig:vulnerability-relations}.
The differences are also visible if you compare the happy path of the independent
continuant's journey through each ontology, as shown in \cref{fig:journey-comparison}.

Despite the differences, \emph{the Agentic Redux architectures auto-derived from the two ontologies
are almost isomorphic}. (See \cref{fig:udt-agentic-redux} for the UDT Compliance architecture,
and \cref{fig:disclosure-agentic-redux} for the Security Vulnerability Disclosure architecture.)
The only difference is that there are three proposing subagents in UDT Compliance,
and four proposing subagents in Security Vulnerability Disclosure.
This is because the adjudication pipeline in both problem domains is essentially identical:
specalized team members with incomplete information propose actions to a decider with full
knowledge of the global state. Agentic Redux formalizes this shape to give it a semantic guarantee of safety.

An interesting feature of Security Vulnerability Disclosure is that, according to published reports,
Anthropic Red is already using an Agentic-Redux-shaped one-agent architecture to assist with 
the adjudication pipeline. Claude is positioned as both proposer and decider:
``I suggest this new action and here's why,'' ``Ok, let me decide based on everything else going on.''
This situation may allow Agentic Redux to offer \emph{complexity-theoretic} advantages (speedup),
not just safety.
Let's discuss complexity theory now.

\begin{figure}[t]
     \centering
     \includegraphics[width=\linewidth]{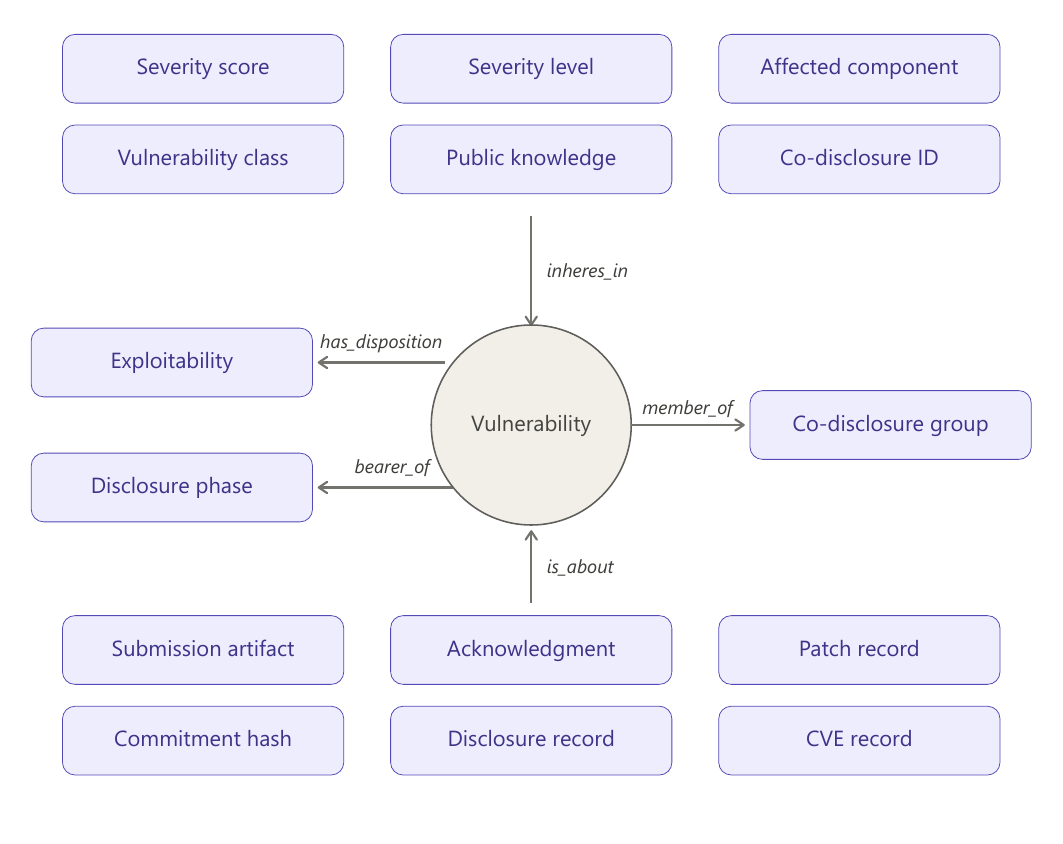}
     \caption{The independent continuant Vulnerability and its relations, in the Security Vulnerability Disclosure ontology.}
     \label{fig:vulnerability-relations}
   \end{figure}
\begin{figure}[t]
     \centering
     \includegraphics[width=\linewidth]{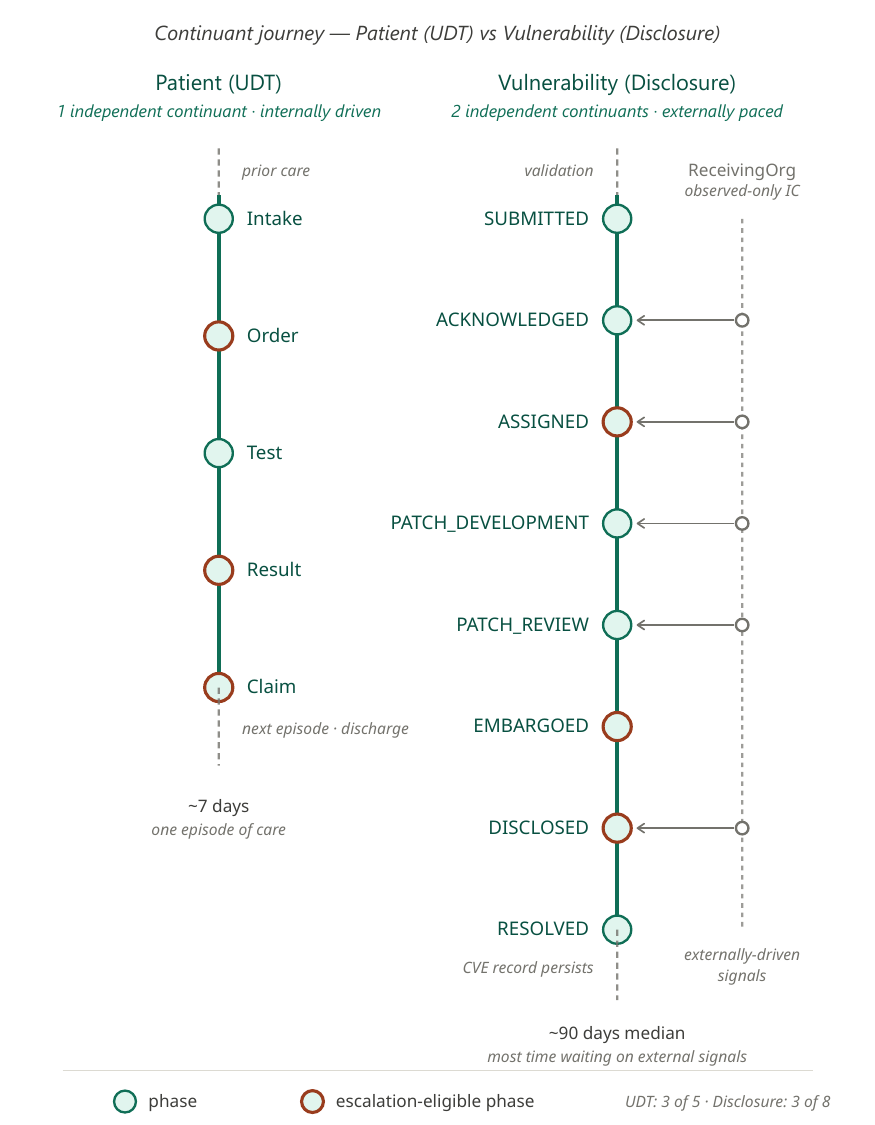}
     \caption{A comparison between the happy paths of the independent continuant's journey in the UDT Compliance ontology
     and the Security Vulnerability Disclosure ontology.}
     \label{fig:journey-comparison}
   \end{figure}
\begin{figure}[t]
     \centering
     \includegraphics[width=\linewidth]{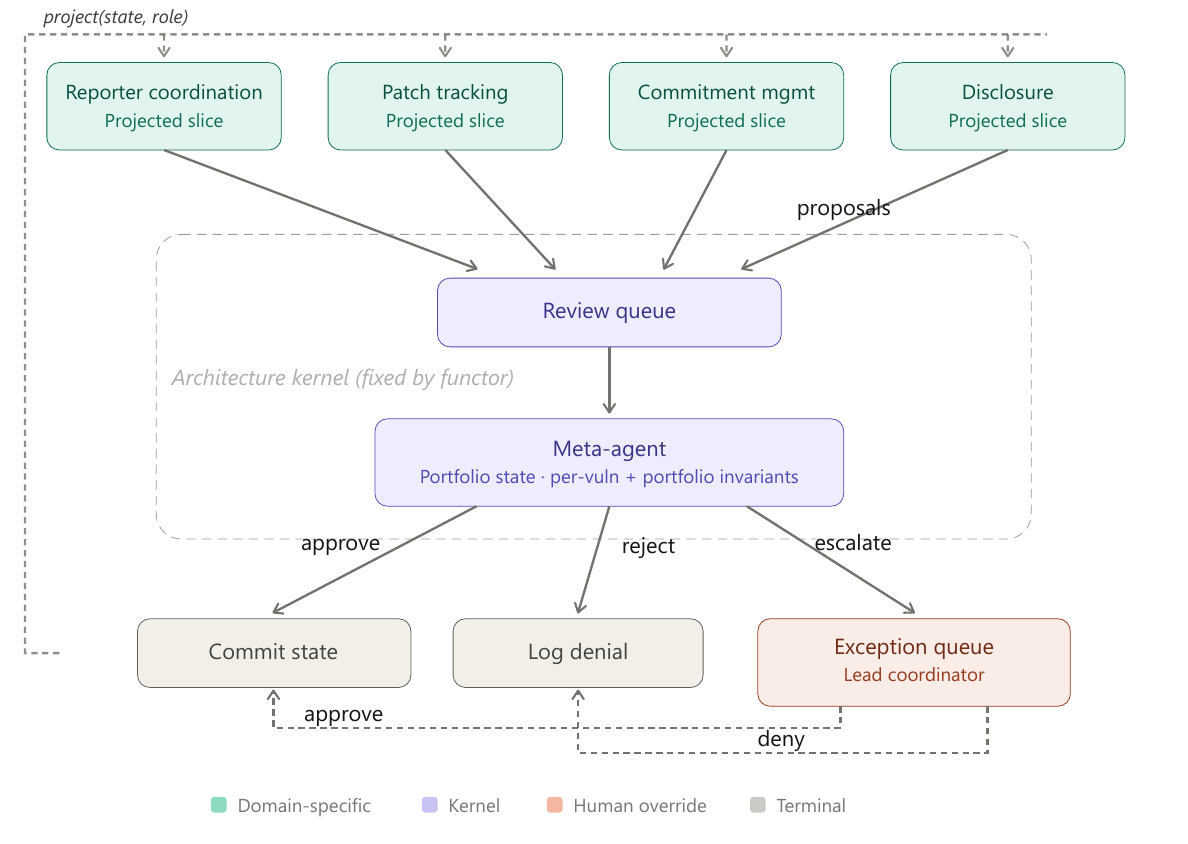}
     \caption{The Agentic Redux architecture for the Security Vulnerability Disclosure ontology.}
     \label{fig:disclosure-agentic-redux}
   \end{figure}
\subsubsection{Checkability Over Correctness}
\label{sec:checkability_over_correctness}
While the intuition that it is easier to verify than it is to create
has been part of computer science since at least the P/NP Problem,
I believe \emph{Checkability Over Correctness} was coined by Allan as a design principle
of the programming language Vera\cite{vera_programming_language}.
Applied to the context of Agentic Redux, the asymmetry between the potential complexity of subagents
and the simplicity of the meta-agent is good design.

The subagents have the hard job of proposing the correct next step for the global state.
This might involve LLM calls, tool calls, and the handling of ambiguity.
The meta-agent, by contrast, runs a simple check against a list of must-haves (the invariants).
If everything on the list checks green (all invariants are preserved), the meta-agent transitions the global state.
Otherwise, the meta-agent denies the proposal to transition.

While Agentic Redux ensures safety for every adjudicable problem domain,
in the case of Security Vulnerability Disclosure, \emph{the simplicity of the meta-agent may also provide speedup}.
Instead of the current reported Anthropic process, in which an LLM is on both sides of
the decision function (as both proposer and decider), under Agentic Redux, only the proposer is slow to act.
This time savings may add up when managing multiple vulnerabilities in the same co-disclosure group,
and potentially thousands of vulnerabilities overall.

For reasons of space, I defer machinery for liveness and 
complexity until \cite{semantic-correctness-2}.
Before leaving the topic, though, it's worth discussing an ontological difference
between Security Vulnerability \emph{Verification} and Security Vulnerability Disclosure.

I also ontologized Security Vulnerability Verification, the phase immediately before Security Vulnerability Disclosure,
in which a security finding is evaluated to determine whether it is a genuine vulnerability.
(I don't include that ontology in this paper, but it's available in the supporting code repository.)
The Agentic Redux architecture for Security Vulnerability Verification is literally
isomorphic to that of Security Vulnerability Disclosure---same diagram, different labels.
Yet, while Agentic Redux would provide a guarantee of safety to Security Vulnerability Verification, it may not provide
speedup. This is due to \emph{the ontological features of the human-in-the-loop queue}
that make vulnerability verification different from either UDT Compliance or Security Vulnerability Disclosure.

In both UDT Compliance and Security Vulnerability Disclosure, \emph{the human-in-the-loop queue is an exception handler}.
On a happy path, the agents may do all the work and never need human intervention.
With Security Vulnerability Verification, on the other hand, the verifier is a human,
so the happy path always includes the human-in-the-loop, and the human expert is responsible
for performing the verification. \emph{For vulnerability verification, checkability is hard},
so it seems difficult to gain speedup by leveraging Checkability Over Correctness.
\subsubsection{Security Agents and Invariants}
\label{sec:security-agents-invariants}
In the UDT Compliance ontology, all invariants were dependent only on internal information.
In the Security Vulnerability Disclosure ontology, some invariants depend on internal information,
while others depend on both internal and external information.
The external information arrives in the form of observed signals from the Receiving Organization,
that the disclosing organization records as data of the global state.

\texttt{MAINTAINER\_RATE\_LIMIT} is an example of an invariant of the first type.
It's an internal limit to prevent the overburdening of a software maintainer.
The meta-agent checks the current number of disclosures against an internal cap,
and denies the disclosure if the cap would be exceeded.

\texttt{CVE\_ATTRIBUTION\_CONSISTENCY} is an example of an invariant of the second type.
The system observes and records an external signal, assigning a ID number to a CVE.
The meta-agent then enforces that any proposal referring to that CVE must
use the externally-assigned ID number for that CVE.
The full list of invariants appears in \cref{tab:cvd-invariants}.

The subagents are very similar to the subagents of UDT Compliance.
Each works on, and makes proposals about, its own disjoint slice of state.
For example, the \emph{Patch Tracking Agent} updates patch information based on observations
made about the behavior of the Receiving Organization.
The Patch Tracking Agent's slice of state is limited to a single vulnerability,
so disclosure information cannot leak to other potential Receiving Organizations.
For a description of all four agents, see \cref{tab:disclosure-subagents}.
\begin{table}[htbp]
\centering
\caption{Domain invariants derived from the Security Vulnerability Disclosure ontology, 
grouped by the provenance of the state they read. 
The first group adjudicates against state populated entirely by the discovering organization's sub-agent actions; 
the second additionally reads recorded state populated by observation of external facts 
(CVE identifiers assigned by an external numbering authority, receiving-organization patch releases, public-knowledge transitions). 
Both groups are state predicates evaluated at the adjudication instant.}
\label{tab:cvd-invariants}
\footnotesize
\begin{tabularx}{\textwidth}{@{}lX@{}}
\toprule
\textbf{Invariant} & \textbf{Requirement} \\
\midrule

\multicolumn{2}{@{}l}{\textit{Adjudicable from discovering-organization state alone.}} \\
\midrule

\texttt{MAINTAINER\_RATE\_LIMIT}
& No more than $N$ reports to the same subsystem maintainer (within the same receiving organization) within a configured sliding window. The count aggregates across the portfolio and protects receiving-organization capacity. \\
\addlinespace

\texttt{COMMITMENT\_REGISTRY\_CONSISTENCY}
& The commitment registry is well-formed at every adjudication point: every published commitment hash corresponds to exactly one (artifact, vulnerability) pair; every resolution proposal targets an existing committed hash; and no resolution proposal targets a hash already resolved. The eventual-resolution obligation is deferred to Paper~2. \\
\addlinespace

\texttt{DISCLOSURE\_TIMELINE\_CONSISTENCY}
& When vulnerabilities are co-disclosed (members of the same \texttt{CoDisclosureGroup}), they must be at compatible disclosure phases at the planned joint disclosure timestamp. A disclosure-execution proposal for a group member is valid only if every other member is also at \texttt{EMBARGOED} (with disclosure preconditions met) or already \texttt{DISCLOSED} for the same group. \\
\addlinespace

\texttt{RECEIVING\_ORG\_CONFINEMENT}
& When the discovering organization is coordinating disclosure with multiple receiving organizations concurrently for distinct vulnerabilities, the patch-tracking role's slice for vulnerability $A$ in receiving organization $X$ must not contain details for vulnerability $B$ in receiving organization $Y$. State slices for concurrent disclosures with disjoint receiving organizations must themselves be disjoint. \\

\midrule

\multicolumn{2}{@{}l}{\textit{Adjudicable from recorded state, including observed external facts.}} \\
\midrule

\texttt{EMBARGO\_ENFORCEMENT}
& Vulnerability details cannot be publicly disclosed before the later of (a)~the maintainer's patch being released, or (b)~the embargo deadline elapsing. The specific embargo regime is configured via \texttt{EmbargoSpec}. Bypassed only when \texttt{PublicKnowledgeStatus} is \texttt{publicly-known}. \\
\addlinespace

\texttt{PATCH\_BEFORE\_DISCLOSURE}
& A vulnerability cannot be disclosed until a robust fix has been developed and released, except for vulnerabilities whose \texttt{PublicKnowledgeStatus} is \texttt{publicly-known} (fix released immediately upon availability) or whose embargo deadline has elapsed without a fix. \\
\addlinespace

\texttt{CVE\_ATTRIBUTION\_CONSISTENCY}
& If a CVE identifier is assigned, the disclosure record must reference the same identifier; if a CVE record already exists from intra-organizational deduplication, the disclosure must use that identifier rather than requesting a new one. \\
\addlinespace

\texttt{CVE\_ATTRIBUTION\_UNIQUENESS}
& Across the portfolio, each CVE identifier binds to exactly one vulnerability. No two distinct vulnerabilities share the same CVE attribution. \\
\addlinespace

\texttt{DISCLOSURE\_COMPLETENESS}
& The disclosure record at publication time must include (a)~a link to the released patch (or a statement of why the patch is unavailable for public-knowledge or embargo-elapsed cases), (b)~all resolved commitment hashes for the vulnerability, (c)~the assigned CVE identifier if one exists, and (d)~attribution. \\

\bottomrule
\end{tabularx}
\end{table}
\begin{table}[h]
\centering
\begin{tabular}{>{\raggedright\arraybackslash}p{0.28\linewidth} p{0.66\linewidth}}
\hline
\textbf{Agent} & \textbf{Work} \\
\hline
Reporter Coordination Agent
& Submits validated vulnerability reports to receiving organizations, records acknowledgment receipts, and proposes embargo extensions. Operates over the outbound submission queue and per-(receiving-org, maintainer) rate-limit status without portfolio-wide visibility. \\
\addlinespace
Patch Tracking Agent
& Updates the recorded \texttt{PatchRecord} status from externally-observable receiving-side signals (commits, release notes, advisories) for the assigned vulnerability and proposes severity revisions originating from the receiving organization. Slice is confined to one vulnerability so concurrent disclosures cannot leak across receiving organizations. \\
\addlinespace
Commitment Management Agent
& Publishes cryptographic commitment hashes for held artifacts (report, PoC, writeup) and proposes their resolution at disclosure time. Sees only the assigned vulnerability's hashes; the meta-agent checks registry-wide uniqueness and resolution targeting. \\
\addlinespace
Disclosure Agent
& Proposes co-disclosure groupings, asserts disclosure readiness, and executes public disclosure once the meta-agent clears the embargo, patch, completeness, CVE-attribution, and timeline-consistency invariants. Holds group-peer visibility to form proposals while leaving timeline adjudication centralized. \\
\hline
\end{tabular}
\caption{Subagents derived from the BFO-grounded security vulnerability disclosure ontology, one per \texttt{Role(Agent)} entry.}
\label{tab:disclosure-subagents}
\end{table}
%
\section{Overview of Formal Proofs}
\label{sec:overview-formal-proofs}

\subsection{Proofs About Agentic Redux}
\label{sec:proofs-about-agentic-redux}
Intuitively, a proof that Agentic Redux always preserves invariants might be: As shown in \cref{fig:agentic-redux-architecture}, global state cannot change unless the meta-agent changes it, and the meta-agent will only commit to a new global state if that new state preserves invariants.
While straightforward, it takes a fair amount of machinery to prove formally.
I have deferred the formal work to Appendix \cref{app:sec:calculus}, and I will discuss key points here.

To reason clearly about nondeterministic executions, the proofs use a standard technique of converting nondeterministic executions into deterministic executions that receive information from an oracle $\omega$.
You can think of $\omega$ as a possibly infinite sequence of data that contains each LLM response and tool use response, in order, received by the agents in a now-deterministic execution.
The theorem formalizing, ``Agentic Redux always preserves invariants,'' is as follows.
\paragraph{Theorem 1 (Invariant Preservation).}
\emph{For every $D : \Adjudicable$, every client program~$e_0$
well-typed over $\AgRedux(D)$, every initial framework state~$s_0$
of type $\FrameworkState_D$ with
$s_0.\mathsf{domain} \models \invariants_D$, and every oracle trace
$\omega \in \Oracle$: every reachable domain state
in~$\mathsf{Exec}(e_0, s_0, \omega)$ satisfies $\invariants_D$.}

The theorem statement informally translates to, ``For every domain module with the Auditability Fingerprint, 
if the initial state of the system preserves invariants, then every global state reachable through any nondeterministic execution 
of an Agentic Redux system preserves invariants.''
This theorem provides the semantic safety promised in the Introduction of the paper.
The proof of Theorem 1 appears in \cref{app:proof-invariant-safety}.

The guarantee of Theorem 1 is stronger than the Redux slogan, ``Smart container, dumb components.''
Even if the domain is buggy or malicious, the meta-agent of Agentic Redux ensures invariants are preserved.
\emph{As long as the invariants capture real-world policies, the rest of the components can be evil, not just dumb}.

With the Invariant Preservation theorem, we can now formalize two claims we have been making informally.
The first is \emph{Agent Information Confinement}: agents, at all steps of an execution, can only see, and only affect, their assigned slice of state.
That is formalized as Proposition 1, and proved in \cref{app:proof-confinement}.
The second claim is: \emph{Write Skew executions cannot occur under Agentic Redux}.
A Write Skew execution produces a global state that violates invariant preservation, which, by Theorem 1, Agentic Redux does not permit.
This argument is formalized as Corollary 1 in \cref{app:proof-write-skew}.

Other than preservation of invariants, the most important property we need from Agentic Redux is \emph{linear auditability}.
We obtain a formal proof that Agentic Redux is linearly auditable from Theorem 2 (Audit Log Integrity).
The theorem statement refers to several technical definitions, so I defer both the theorem statement and its proof to \cref{app:proof-audit-integrity}.
Informally, the Audit Log Integrity theorem asserts, ``If the system changes state, the action that caused the change is logged with the tag APPROVED; 
if a proposed state change is declined, either by REJECT or ESCALATE, the decision and the reason for the decision are both logged.
The log grows by exactly one entry per adjudication, from no other computation step, and entries are never modified once written.''

The final proof about Agentic Redux in the Appendix is Theorem 3 (Type Safety), stated and proved in \cref{app:proof-type-safety}.
We don't need Type Safety for any results in this paper.
However, Type Safety demonstrates that the Agentic Redux Calculus is well-behaved from the perspective of a programming language theorist.
It is also a property I intend to build on when considering other agent architectures in future work.
\subsection{Extending Proofs to the Counselor Queue}
\label{sec:proofs-counselor-queue}
The domain examples of \cref{sec:domains} include a Counselor Queue:
a mechanism by which an escalated proposal is resolved by a human
in the loop (\cref{fig:udt-agentic-redux}). When the meta-agent
encounters an invariant failure whose domain-declared outcome is
$\escalateRes$, the proposal is deposited into a review queue to
await a counselor's decision.

For purposes of a formal proof about all executions, the counselor has full policy authority.
The counselor, i.e., the supervising human in the loop, can change the global state in a way that violates invariants.
This may cause the system to behave badly, and it's the harness's job to clean things up, which is critically important in real life, but out of the scope of this paper.

Please note: \emph{the counselor has policy authority, but is type-theoretically constrained}.
The counselor cannot ``do anything,'' but must write a decision to the log, using the same consistent format as the meta-agent.
The main difference is that the counselor can violate invariants, while the meta-agent cannot.
To use the language of distributed systems, counselor decisions are not Byzantine failures.

The proof machinery assumes that, once the meta-agent has escalated a decision to the counselor, 
\emph{all activity stops until the counselor makes a decision}, to guarantee that the counselor decides with respect to the current global state.
That is a strong requirement, but it accurately models the real-world behavior of the problem domains considered in this paper.
In UDT Compliance, if the supervising MD needs to make a clinical decision, everyone waits for the decision, then responds accordingly.
Similarly, if a decision about a security vulnerability escalates to a human expert, 
work on that vulnerability report pauses until the expert makes a decision.
 
Let's call a state change directed by the counselor a \emph{Safe Counselor Commit} if the state change preserves all invariants.
After defining additional machinery (for example, now two writers can write to the log, not just one),
it's possible to show by induction that the system preserves invariants and maintains audit integrity, 
if every counselor commit is a Safe Counselor Commit.
Intuitively, if all invariants hold at time step 0, each commit of the meta-agent or the counselor preserves invariants, 
and each decision by the meta-agent or the counselor is appended to the log,
then every state of the system must preserve invariants, and the log is linearly auditable, as before.

Formal proofs of Counselor Queue extensions of theorems for Invariant Preservation, 
Audit Log Integrity, and Type Safety appear in \cref{app:counselor-base-preservation,app:proof-audit-integrity-ext}.
%
\section{Conclusion and Future Work}
\label{sec:conclusion-future-work}
This paper has focused on safety properties to support problem domains that require linear auditability.
It introduced the LLM agent architecture Agentic Redux, which contains a meta-agent
that enforces that every state change must respect global invariants, \emph{i.e.}, respect safety properties.
It also presented a procedure to semi-automatically design agents that solve a problem of interest,
in which a human expert ontologizes the problem domain,
then uses an LLM to derive agents that perform the roles needed to solve problems in that domain.

The followup paper\cite{semantic-correctness-2} will consider liveness properties,
and will introduce complexity measures to compare how different coding agent strategies perform on coding tasks.
The goal is to develop a framework with which it is possible to compare different programs run on the same
agent architecture, and the same program run on different agent architectures.
%
\section*{Acknowledgments}
\label{sec:acknowledgements}
I am grateful to my Thistleseeds cofounder Yuri Downing for raising the funds that made this research possible; 
and to Soma Chaudhuri, who taught me how to reason about distributed systems.

\FloatBarrier
\bibliographystyle{plainnat}
\bibliography{references}

\appendix
%
\section{The Agentic Redux Calculus}
\label{app:sec:calculus}

\emph{Note about the proofs in this Appendix:} The text in the main body of the paper
is 99\% text I wrote myself. The text in the Appendix, though, is about 10\%
mine and 90\% LLM-generated. 
I'll have more to say about the methodology soon.
For purposes of this paper: the prompts I provided were extensive, and the process was
iterative, not a one-shot. I caught errors in earlier versions of some proofs,
but the ones here now look good to me. I believe that these are the first LLM-enabled proofs
of program correctness written in the typed lambda calculus. 
Perhaps this approach can eventually become
a form of agentic self-verification.

\subsection{Preliminaries}
\label{sec:preliminaries}

We introduce the typed lambda calculus in which \AR{} is defined as
an architecture functor. Following the three-entity structure of
\cref{sec:intro-domains-functors}, a running system is obtained by
composing three pieces: a \emph{domain module} providing a conforming
domain signature, the \AR{} \emph{architecture functor} consuming
that signature, and a \emph{client program} that instantiates the
composition and sequences its operations. This section fixes the
calculus-level vocabulary in which each of these pieces is expressed
and in which the theorems of \cref{sec:metatheory} are stated. The
Role Derivation and Invariant Derivation procedures of
\cref{sec:ontology-to-agents} are the inputs to the formalisation; in
particular, the notion of a \emph{state footprint} introduced there
becomes a first-class object in the calculus.

This section fixes the vocabulary needed to state the five theorems of
\cref{sec:metatheory}, one proposition, and one corollary. No theorem
is proved here.

\subsubsection{State Signatures and Footprints}
\label{sec:state-signatures}

Procedure~\ref{proc:role-derivation} treats a state footprint as ``a
list of state components a process reads or writes.'' We make this
precise by fixing, per domain, a finite set of atomic state
components.

\begin{definition}[State signature]
\label{def:state-signature}
A \emph{state signature} for a domain~$D$ is a pair
$\Sigma_D = (\Fields_D, \tau_D)$ where:
\begin{itemize}
  \item $\Fields_D$ is a finite set of \emph{field names}, one per
        BFO-atomic state component declared by the ontology (a
        quality of the independent continuant, a role binding, an
        information content entity bound to the continuant, etc.);
  \item $\tau_D : \Fields_D \to \mathsf{Type}$ assigns a type to each
        field name.
\end{itemize}
The \emph{state type} of $D$ is the dependent record type
\[
  \State_D \;\triangleq\; \prod_{f \in \Fields_D} \tau_D(f).
\]
We write $s.f$ for the component of $s : \State_D$ at field~$f$.
\end{definition}

\begin{remark}[Atomicity]
\label{rem:bfo-atomicity}
The granularity of $\Fields_D$ is set by the ontology, not by the
representation chosen in code. In the UDT compliance domain, the BFO
ontology distinguishes the determinable \texttt{consecutiveDays} from
the derived determinable \texttt{tier}, so both appear as separate
elements of $\Fields_{\mathrm{UDT}}$ even though in code they are
siblings under an \texttt{abstinence} record. Record-level groupings
in the implementation are an encoding convenience; the state
signature is flat.
\end{remark}

\begin{definition}[Footprint]
\label{def:footprint}
A \emph{footprint} in~$D$ is a subset $F \subseteq \Fields_D$. The
\emph{restriction} of $s : \State_D$ to $F$ is the sub-record
\[
  s \restr F \;\triangleq\; (s.f)_{f \in F}
  \;:\; \prod_{f \in F} \tau_D(f).
\]
\end{definition}

Footprints form a lattice under subset ordering, with union and
intersection; both operations are used in
\cref{proc:role-derivation,proc:invariant-derivation}.

\subsubsection{Derived Footprints: Processes, Roles, Invariants}
\label{sec:derived-footprints}

\cref{proc:role-derivation,proc:invariant-derivation} assign
footprints to processes, roles, and invariants. We lift those
assignments into the calculus.

\begin{remark}[Whose roles?]
\label{rem:whose-roles}
BFO treats roles uniformly as qualities borne by independent
continuants. In the ontologies of \cref{sec:ontology-to-agents},
roles appear in two places:
\begin{itemize}
  \item \emph{Roles of the domain's independent continuant.} The
        Vulnerability bears a \texttt{PipelinePhase} role with
        values ranging over \texttt{DISCOVERED}, \texttt{TRIAGED},
        \ldots, \texttt{DISCLOSED}; the Patient bears a
        \texttt{ProgramPhase} role. These are qualities of the
        continuant that evolve as it moves through its lifecycle.
        Formally, they live in $\Fields_D$ as ordinary state
        fields, typically with an enumerated type. They do not
        mandate the existence of an agent.
  \item \emph{Roles of workers.} Procedure~\ref{proc:role-derivation}
        introduces \emph{workers} --- new independent continuants
        bearing roles such as \texttt{StaticAnalysisRole},
        \texttt{LabOrderRole}, \texttt{TriageRole}. These roles
        come with footprints and process assignments; they are
        filled by agents or by humans in the loop; they are the
        subject of Agent Confinement.
\end{itemize}
In this section, $\Roles_D$ refers to the second kind only: the
worker roles produced by Procedure~\ref{proc:role-derivation}. A
reference to a pipeline-phase-type field in an invariant footprint
(e.g., ``Disclosure requires \texttt{PipelinePhase = EMBARGOED}'')
is a reference to a $\Fields_D$ element, not to a $\Roles_D$
element.
\end{remark}

\begin{definition}[Process footprint]
\label{def:process-footprint}
For each process $p$ declared in the ontology, $\Footprint(p)
\subseteq \Fields_D$ is the footprint specified by
Procedure~\ref{proc:role-derivation}: every field read or written
during~$p$.
\end{definition}

\begin{definition}[Agent role and role footprint]
\label{def:role-footprint}
An \emph{agent role} (or \emph{worker role}) $r \in \Roles_D$ is a
label drawn from a finite set fixed by the domain, together with an
assignment $r \mapsto \mathsf{procs}(r)$ of a set of processes
to~$r$. The \emph{role footprint} is
\[
  \Footprint(r) \;\triangleq\;
  \bigcup_{p \in \mathsf{procs}(r)} \Footprint(p).
\]
In the remainder of this section and in all subsequent theorems,
``role'' without qualification refers to an agent role in this
sense (cf.\ Remark~\ref{rem:whose-roles}).
\end{definition}

\begin{definition}[Slice]
\label{def:slice}
The \emph{slice} for role~$r$ is the restricted record type
\[
  \Slice_D(r) \;\triangleq\;
  \prod_{f \in \Footprint(r)} \tau_D(f),
\]
and the \emph{projection function} is
$\project_D(s, r) \triangleq s \restr \Footprint(r)$.
\end{definition}

\begin{definition}[Well-formed role assignment]
\label{def:wf-roles}
A domain~$D$ has a \emph{well-formed role assignment} iff for every
role~$r$ and every field $f \in \Footprint(r)$ there exists a process
$p \in \mathsf{procs}(r)$ with $f \in \Footprint(p)$.
\end{definition}

Well-formedness formalises the final check of
Procedure~\ref{proc:role-derivation} (``no role touches state its
processes do not need''). All theorems in \cref{sec:metatheory}
assume the role assignment is well-formed.

\begin{definition}[Invariant footprint]
\label{def:invariant-footprint}
Let $\invariants_D$ be the ordered list of invariants declared by
Procedure~\ref{proc:invariant-derivation}. For each invariant $\iota
\in \invariants_D$, $\Footprint(\iota) \subseteq \Fields_D$ is the
footprint declared in that procedure: the fields $\iota$ references.
\end{definition}

\begin{definition}[Local and cross-cutting invariants]
\label{def:local-crosscutting}
An invariant $\iota$ is \emph{local to role~$r$} iff
$\Footprint(\iota) \subseteq \Footprint(r)$. It is
\emph{cross-cutting} iff $\Footprint(\iota) \not\subseteq
\Footprint(r)$ for every $r \in \Roles_D$.
\end{definition}

This matches the classification of
\cref{sec:system-invariants}: an invariant is cross-cutting precisely
when no single role's slice suffices to evaluate it.

\begin{definition}[Coordination-requiring invariant set]
\label{def:coordination-requiring}
The invariant set $\invariants_D$ is \emph{coordination-requiring}
iff there exist a state $s \in \State_D$ with $s \models
\invariants_D$, actions $a_1, a_2 \in \Action_D$, and timestamps
$t_1, t_2 \in \Time$ such that
$\applyMutation_D(s, a_i, t_i) \models \invariants_D$ for
$i \in \{1, 2\}$, yet applying $a_1$ and $a_2$ independently
from~$s$ can yield a state violating some $\iota \in \invariants_D$.
\end{definition}

A coordination-requiring invariant set is the footprint-based
counterpart of the failure of \emph{I-confluence} in the sense of
Bailis \emph{et al.}~\cite{bailis2014coordination}: no coordination-free
execution of agents producing $a_1$ and $a_2$ from disjoint slices
can be guaranteed to preserve $\invariants_D$. The Write Skew
example of the Introduction is the canonical witness: the budget
invariant is coordination-requiring because independent
\$45{,}000 and \$60{,}000 proposals both preserve it from the
initial state yet their composition violates it. A domain whose
invariant set is coordination-requiring needs an architecture that
provides coordination; \AR{} is one such architecture.

\subsubsection{Domain Signatures}
\label{sec:domain-signatures}

The architecture functor consumes a weak domain signature
(\Adjudicable); the stronger signature (\Domain) carries the role
structure needed to state Agent Confinement.

\begin{definition}[Three-valued invariant result]
\label{def:inv-result}
An invariant returns a value in the sum type
\[
  \InvResult \;\triangleq\;
  \{\pass\} \;+\; \rejectRes(\mathsf{String}) \;+\; \escalateRes(\mathsf{String})
\]
signalling, respectively, that the proposed action is compatible
with the invariant; that it violates the invariant, with an
accompanying diagnostic message; or that it requires human review
before commit, with an accompanying explanation. We write
$\rejectRes(m)$ and $\escalateRes(m)$ for the payload-carrying
values.
\end{definition}

\begin{definition}[Signature \Adjudicable]
\label{def:adjudicable}
A structure~$D$ satisfies \Adjudicable{} iff it provides:
\begin{itemize}
  \item a state signature $\Sigma_D = (\Fields_D, \tau_D)$;
  \item a type $\Action_D : \Occ$ of proposable actions;
  \item an ordered list of invariants $\invariants_D = [\iota_1,
        \ldots, \iota_n]$ with each $\iota_i$ equipped with:
        \begin{itemize}
          \item an identifier $\mathrm{id}(\iota_i) \in
                \mathsf{InvariantId}$, distinct across the list,
          \item a footprint $\Footprint(\iota_i) \subseteq \Fields_D$,
          \item a predicate
                $\iota_i : \State_D \to \InvResult$;
        \end{itemize}
  \item a total pure function
        $\applyMutation_D : \State_D \times \Action_D \times \Time
        \to \State_D$.
\end{itemize}
\end{definition}

\begin{remark}[Invariants as state predicates]
\label{rem:state-predicates}
The predicate $\iota_i$ takes only a domain state; it has no
dependence on the action whose proposal triggered evaluation, nor on
any pre-state separate from the one it is checking, nor on the
oracle trace: the predicate is fully determined by its state
argument, as declared by the signature $\State_D \to \InvResult$.
Invariants are fixed by the domain module at time~$0$, before any
oracle is consulted, and the meta-agent cannot approve a state that
violates them regardless of what oracle string its decision
function reads. This is the
central semantic choice of the architecture: the meta-agent alone
bears responsibility for preservation of $\invariants_D$, and in
particular does not delegate that responsibility to the domain's
$\applyMutation_D$ function. A mutation that produces a state
violating some $\iota_i$ --- whether through a bug, a
misspecification, or adversarial input --- is rejected by the
post-check before any commit is performed. See
Definition~\ref{def:commit-ar} for the commit discipline that
realises this.
\end{remark}

\begin{definition}[Signature \Domain]
\label{def:domain-signature}
A structure~$D$ satisfies \Domain{} iff it satisfies \Adjudicable{}
and, in addition, provides:
\begin{itemize}
  \item a finite set of \emph{agent roles} $\Roles_D$ --- exactly the
        worker roles produced by
        Procedure~\ref{proc:role-derivation}, not the BFO roles of
        the domain's independent continuant (cf.\
        Remark~\ref{rem:whose-roles});
  \item a process assignment
        $\mathsf{procs} : \Roles_D \to \wp(\mathsf{Procs}_D)$,
        inducing a footprint $\Footprint(r)$ for each role;
  \item a projection function
        $\project_D : \State_D \times (r : \Roles_D) \to \Slice_D(r)$
        with $\project_D(s, r) = s \restr \Footprint(r)$;
\end{itemize}
subject to the well-formed role assignment condition
(Definition~\ref{def:wf-roles}).
\end{definition}

\begin{remark}[The footprint determines the slice]
\label{rem:footprint-determines-slice}
Under \Domain, the slice type $\Slice_D(r)$ is not a free parameter;
it is computed from $\Footprint(r)$, which is in turn computed from
$\mathsf{procs}(r)$. A reviewer can audit ``is this role seeing only
what it should?'' by reading the role's process assignment and
comparing against the footprint definition. This is the formal
content of Procedure~\ref{proc:role-derivation}.
\end{remark}

\subsubsection{Invariant Satisfaction}
\label{sec:invariants-sat}

\begin{definition}[Invariant satisfaction]
\label{def:invariant-satisfaction}
A state $s : \State_D$ \emph{satisfies the invariant pipeline}
of~$D$, written $s \models \invariants_D$, iff for every invariant
$\iota \in \invariants_D$, $\iota(s) = \pass$.
\end{definition}

The sequential, short-circuited evaluation of the kernel's pipeline
is captured by the commit discipline (Definition~\ref{def:commit-ar})
rather than by $\models$. The relation $\models$ records the model
condition; the first-non-$\pass$ ordering records which invariant
witnessed a failure.

\begin{definition}[Failure witness]
\label{def:failure-witness}
If $s \not\models \invariants_D$, the \emph{failure witness} is the
first invariant $\iota_k$ in list order for which
$\iota_k(s) \ne \pass$. The failure witness determines whether the
outcome is $\rejected$ or $\escalated$.
\end{definition}

\begin{remark}[Phase gating in lieu of action filtering]
\label{rem:phase-gating}
An invariant with an opinion only in certain domain-state shapes is
expressed by having its predicate early-return $\pass$ outside those
shapes. A typical pattern reads a phase-valued field from
$\Fields_D$ and returns $\pass$ unless that field takes a specified
value. This is a domain-authoring convention, not a kernel concept:
the kernel runs every invariant on every candidate, and phase-gated
predicates become no-ops in irrelevant phases. Any scoping that might
otherwise have been expressed as ``this invariant applies only to
these action types'' is recast as ``this invariant has an opinion
only in these state shapes.''
\end{remark}

\subsubsection{Oracles and Arbitration}
\label{sec:oracles}

Agents and the meta-agent may consult external computations (LLMs,
calculators, lookup tables, human counsellors at a queue) whose
responses are not determined by the domain state. The client program
may also schedule agents in different orders when their proposals
arrive simultaneously in real time. We absorb all such
non-determinism into a single oracle trace consulted by the
execution.

\begin{definition}[Oracle trace]
\label{def:oracle}
Fix an abstract countable alphabet $\mathbb{O}$ of oracle responses.
An \emph{oracle trace} is a finite or infinite sequence $\omega \in
\mathbb{O}^{*} \cup \mathbb{O}^{\omega}$. The space of oracle traces
is denoted $\Oracle$. A configuration that consumes an oracle
advances through~$\omega$ one element per consultation.
\end{definition}

Oracle consultations include: LLM inference outputs, random-number
outputs, counsellor-queue decisions, wall-clock timestamps, and the
arbitration order in which the coordination protocol elicits
proposals from agents. Nothing in $\omega$ is a function of the
domain state; any such dependency enters an agent via its slice or
the meta-agent via its state input.

\subsubsection{Agents}
\label{sec:agents}

\begin{definition}[Agent]
\label{def:agent}
Let $D : \Domain$ and $r \in \Roles_D$. An \emph{agent for role~$r$
over~$D$} is a computable partial function
\[
  \alpha_r \;:\; \Slice_D(r) \times \Oracle
             \rightharpoonup \Action_D(r)
\]
satisfying:
\begin{enumerate}
  \item \emph{Typed output.} When $\alpha_r(\sigma, \omega)$ is
        defined, its value is an action classified as belonging to
        role~$r$ (i.e.\ tagged with~$r$ in whatever discipline the
        domain uses).
  \item \emph{Slice-only state dependence.} The function takes no
        other inputs that depend on $\State_D$; in particular,
        $\alpha_r$ does not receive state from outside
        $\Footprint(r)$ and does not observe the slices of other
        agents.
  \item \emph{Oracle determinism.} For fixed $\sigma$ and~$\omega$,
        $\alpha_r(\sigma, \omega)$ is deterministic.
\end{enumerate}
The space of agents for role~$r$ is written $\Agent_D(r)$.
\end{definition}

\begin{remark}[Realisations]
\label{rem:agent-realisations}
Condition~2 constrains the \emph{interface}, not the
\emph{implementation}. An agent function may be realised by a
classical algorithm, by an LLM consulted with prompts constructed
from~$\sigma$, by a composition of pure tool calls over~$\sigma$, or
by a human typing decisions at a terminal. In each realisation, the
LLM response, tool output, or human decision appears as an element
consumed from~$\omega$. Synchronous and asynchronous implementations
are equally admissible; asynchronicity is an operational detail
invisible at the calculus level. A shared-oracle semantics is used
throughout (one~$\omega$ per execution, consulted by all agents and
the meta-agent); a per-role reformulation with projections
$\omega \restr r$ is equivalent.
\end{remark}

\begin{remark}[No peer-to-peer communication]
\label{rem:no-peer}
The type of $\alpha_r$ has no channel for messages from other
agents. The calculus has no primitive permitting Agent~A to observe
Agent~B's slice, to receive B's proposal, or to await B's completion.
The only indirect flow of information from~B to~A is through the
meta-agent: if B's proposal is committed and the commit touches a
field in $\Footprint(A) \cap \Footprint(B)$, then A's next
invocation will see the new value in its slice. This is mediated by
the store, not by a message channel. The hub-and-spoke topology of
\cref{fig:agentic-redux-architecture} is therefore not merely
stipulated but enforced by the agent type.
\end{remark}

\begin{remark}[Tools as pure functions of the slice]
\label{rem:tools}
A tool invoked by an agent that read a field $f \notin \Footprint(r)$
would inject state outside the slice into $\alpha_r$'s output,
violating Condition~2. At the calculus level this is ruled out
structurally: a well-typed $\alpha_r$ can only construct queries
from~$\sigma$ and consume responses from~$\omega$. A governance
layer such as Microsoft Agent Framework~\cite{microsoftagentframework}
provides operational enforcement of the same discipline at runtime.
\end{remark}

\subsubsection{The Meta-Agent}
\label{sec:meta-agent}

\begin{definition}[Meta-agent]
\label{def:meta-agent}
Let $D : \Adjudicable$. A \emph{meta-agent for~$D$} is a computable
partial function
\[
  \mu_D \;:\;
  \State_D \times \Action_D \times \Time \times \Oracle
  \rightharpoonup \Outcome_D,
\]
where $\Outcome_D \triangleq \Approval(\State_D) + \Rejection +
\Escalation(\PendingProposal_D)$. The meta-agent has full state
access and is the only component permitted to produce an
$\approved$-tagged outcome.
\end{definition}

The meta-agent's function is supplied by the architecture, not by the
domain. Agentic Redux fixes~$\mu_D$ to be the concrete adjudication
function defined in \cref{sec:coord-commit}. A different
architecture (e.g., an LLM-judged adjudicator) may supply a different
$\mu_D$ with the same type.

\begin{remark}[The meta-agent as a serialisation point]
\label{rem:serialisation-point}
The meta-agent is the architecture's \emph{serialisation point} in
the sense of classical concurrency control: the component through
which all state-changing proposals pass in a well-defined order.
Alternative architectures may instantiate the serialisation point
differently --- for example, by admitting a bounded window of
speculative parallel proposals that are validated against a
snapshot, or by layering partition-local serialisation points
beneath a global one. Such variants are beyond the scope of this
paper; we note only that the architecture signature of
\cref{def:agent-architecture} is rich enough to accommodate them.
\end{remark}

\subsubsection{Agent Population}
\label{sec:agent-population}

\begin{definition}[Agent population]
\label{def:agent-population}
An \emph{agent population} over $D : \Domain$ is a family
\[
  \AgentPop \;=\; (\alpha_r)_{r \in \Roles_D},
  \quad \alpha_r \in \Agent_D(r)
\]
containing exactly one agent per role.
\end{definition}

The ``given agents A, B, C'' template is made precise by fixing
$\AgentPop = (\alpha_A, \alpha_B, \alpha_C)$. An architecture
\emph{for} these agents is the data of
\cref{sec:architecture-definition}.

\subsubsection{Framework State}
\label{sec:framework-state}

The kernel's operational state is a framework envelope around the
domain state, carrying the audit log and the review queue.

\begin{definition}[Framework state]
\label{def:framework-state}
For $D : \Adjudicable$, the \emph{framework state} type is
\[
  \FrameworkState_D \;\triangleq\;
  \mathsf{EntityId} \times \State_D \times
  \mathsf{List}(\AuditEntry_D) \times \ReviewQueue_D.
\]
We write $s.\mathsf{domain}$, $s.\mathsf{audit}$, $s.\mathsf{queue}$
for the projections onto the domain state, the audit log, and the
review queue respectively.
\end{definition}

\begin{definition}[Audit entry]
\label{def:audit-entry}
An \emph{audit entry} records the outcome of an adjudication:
\begin{align*}
  \AuditEntry_D \;\triangleq\;
  &\mathsf{Id} \times \Time \times
   \{\approved, \rejected, \escalated\} \\
  &\times \Action_D \times
   \mathsf{Option}(\mathsf{InvariantId} \times \mathsf{String}).
\end{align*}
\end{definition}

\begin{definition}[Review queue and pending proposal]
\label{def:review-queue}
\[
  \PendingProposal_D \;\triangleq\;
  \mathsf{Id} \times \Action_D \times \mathsf{InvariantId}
  \times \mathsf{String} \times \Time,
\]
\[
  \ReviewQueue_D \;\triangleq\;
  \mathsf{List}(\PendingProposal_D).
\]
\end{definition}

The audit log and review queue evolve deterministically with every
adjudication: an approved commit appends one audit entry; a
rejection appends an audit entry and leaves the queue unchanged; an
escalation appends both an audit entry and a pending proposal.
Domain invariants $\invariants_D$ are statements about
$s.\mathsf{domain}$, not about $s.\mathsf{audit}$ or
$s.\mathsf{queue}$; integrity of the audit log and queue is a
separate kernel-level property.

\subsubsection{Coordination Protocol and Commit Discipline}
\label{sec:coord-commit}

An architecture is individuated by two independent pieces of data:
who runs when, and how proposals become committed transitions.

\begin{definition}[Coordination protocol]
\label{def:coord}
A \emph{coordination protocol} is a computable partial function
\[
  \Coord \;:\;
  \FrameworkState_D \times \AgentPop \times \Oracle
  \rightharpoonup \Roles_D \times \Action_D.
\]
Given the current framework state, the
agent population, and the oracle trace, $\Coord$ selects a
role~$r$, invokes $\alpha_r$ on $\project_D(s.\mathsf{domain}, r)$,
and returns the pair $(r, a)$ where $a$ is the returned action.
\end{definition}

\begin{definition}[Commit discipline]
\label{def:commit}
A \emph{commit discipline} is a computable partial function
\[
  \Commit \;:\;
  \FrameworkState_D \times \Action_D \times \Time \times \Oracle
  \rightharpoonup \FrameworkState_D.
\]
Given the current framework state, a proposed action, a timestamp,
and the oracle trace, $\Commit$ produces the next framework state,
including any updates to the audit log and review queue.
\end{definition}

The separation is deliberate: $\Coord$ decides what is proposed and
by whom; $\Commit$ decides what happens to proposals. Agentic Redux,
and the extensions previewed for future work, differ in
$(\Coord, \Commit)$ while sharing the same agent population type.

\subsubsection{Agent Architecture}
\label{sec:architecture-definition}

\begin{definition}[Agent architecture]
\label{def:agent-architecture}
An \emph{agent architecture} over a domain $D : \Domain$ is a tuple
\[
  \Arch \;=\; (\AgentPop,\; \mu_D,\; \Coord,\; \Commit)
\]
consisting of an agent population, a meta-agent, a coordination
protocol, and a commit discipline.
\end{definition}

The ``given agents A, B, C, an architecture is Y'' template reads, in
this vocabulary: an architecture for $(\alpha_A, \alpha_B,
\alpha_C)$ is any tuple $(\AgentPop, \mu_D, \Coord, \Commit)$ with
$\AgentPop = (\alpha_A, \alpha_B, \alpha_C)$. Agentic Redux is the
specific instance of \cref{sec:agentic-redux-instance} below.

\subsubsection{Agentic Redux as an Instance}
\label{sec:agentic-redux-instance}

\begin{definition}[Agentic Redux coordination protocol]
\label{def:coord-ar}
The \emph{Agentic Redux coordination protocol} $\Coord^{\AR}$
operates as follows. At each step, given a framework state~$s$, an
agent population $\AgentPop$, and an oracle trace~$\omega$:
\begin{enumerate}
  \item consult~$\omega$ to select a role $r \in \Roles_D$ (the
        \emph{arbitration choice});
  \item compute the slice
        $\sigma = \project_D(s.\mathsf{domain}, r)$;
  \item invoke $\alpha_r(\sigma, \omega)$ to obtain an action~$a$;
  \item return $(r, a)$.
\end{enumerate}
At most one $(r, a)$ pair is in flight to the commit discipline at
any time. Simultaneity among real-world agents is absorbed into the
arbitration choice in Step~1.
\end{definition}

\begin{definition}[Agentic Redux commit discipline]
\label{def:commit-ar}
The \emph{Agentic Redux commit discipline} $\Commit^{\AR}$ is the
function that, on input $(s, a, t, \omega)$:
\begin{enumerate}
  \item computes the \emph{candidate domain state}
        $\hat{s} \;\triangleq\;
        \applyMutation_D(s.\mathsf{domain}, a, t)$;
  \item walks $\invariants_D$ in list order, evaluating each
        invariant on~$\hat{s}$;
  \item halts on the first invariant $\iota_k \in \invariants_D$
        whose evaluation on~$\hat{s}$ does not return~$\pass$:
        \begin{itemize}
          \item if $\iota_k(\hat{s}) = \rejectRes(m)$, appends to
                $s.\mathsf{audit}$ a single entry with tag
                $\rejected$, action~$a$, timestamp~$t$, and witness
                slot $\mathsf{Some}(\mathrm{id}(\iota_k), m)$, and
                returns the framework state with
                $\mathsf{domain}$ and $\mathsf{queue}$ components
                unchanged;
          \item if $\iota_k(\hat{s}) = \escalateRes(m)$, appends to
                $s.\mathsf{audit}$ a single entry with tag
                $\escalated$, action~$a$, timestamp~$t$, and
                witness slot
                $\mathsf{Some}(\mathrm{id}(\iota_k), m)$, and
                appends to $s.\mathsf{queue}$ a single pending
                proposal carrying action~$a$,
                $\mathrm{id}(\iota_k)$, $m$, and~$t$, returning the
                framework state with $\mathsf{domain}$ component
                unchanged;
        \end{itemize}
  \item if every invariant returns $\pass$ on~$\hat{s}$, appends to
        $s.\mathsf{audit}$ a single entry with tag $\approved$,
        action~$a$, timestamp~$t$, and witness slot
        $\mathsf{None}$, and returns the framework state with
        $\mathsf{domain}$ component~$\hat{s}$ and $\mathsf{queue}$
        component unchanged.
\end{enumerate}
Each branch appends exactly one audit entry; the approve and reject
branches leave the queue unchanged, and the escalate branch appends
exactly one pending proposal.
\end{definition}

\begin{remark}[Eager validation and strict serialisation]
\label{rem:eager-validation}
$\Commit^{\AR}$ produces a candidate domain state via
$\applyMutation_D$ and then evaluates every invariant on that
candidate \emph{before} committing it to the framework state (Step~4
of \cref{def:commit-ar}). In the terminology of concurrency control,
this is \emph{eager validation} combined with strict serialisation:
proposals are dispatched one at a time
(\cref{rem:seq-and-simultaneity}), invariants are checked
synchronously against the candidate, and the candidate is committed
only when every invariant has returned $\pass$. The isolation level
is serializable in the sense of Berenson \emph{et
al.}~\cite{berenson}: Write Skew (anomaly A5B) is structurally
excluded (\cref{sec:metatheory}, Corollary~1).
\end{remark}

\begin{definition}[The Agentic Redux architecture]
\label{def:ar-architecture}
For $D : \Domain$, and for any agent population $\AgentPop$
over~$D$, the \emph{Agentic Redux architecture over~$D$ with
population~$\AgentPop$} is the tuple
\[
  \AgRedux(D, \AgentPop) \;\triangleq\;
  (\AgentPop,\; \mu_D^{\AR},\; \Coord^{\AR},\; \Commit^{\AR}),
\]
where $\mu_D^{\AR}$ is the meta-agent whose behaviour is specified
by $\Commit^{\AR}$.
\end{definition}

When the population is clear from context we write simply
$\AgRedux(D)$.

\subsubsection{Configurations, Reduction, Executions}
\label{sec:executions}

The calculus's operational semantics is a deterministic small-step
reduction on configurations parameterised by an oracle trace. The
reduction relation is sequential: at most one step occurs per
transition, and there is no concurrent or parallel composition.

\begin{definition}[Configuration]
\label{def:configuration}
A \emph{configuration} is a triple
$\config{e}{s}{\omega}$ where:
\begin{itemize}
  \item $e$ is the client program term under reduction;
  \item $s : \FrameworkState_D$ is the current framework state;
  \item $\omega \in \Oracle$ is the remaining oracle trace.
\end{itemize}
The oracle position advances as the reduction consumes trace
elements.
\end{definition}

\begin{definition}[Well-typed configuration over $\AgRedux(D)$]
\label{def:wt-config}
A configuration $\config{e}{s}{\omega}$ is \emph{well-typed over}
$\AgRedux(D)$ iff:
\begin{enumerate}
  \item $e$ is closed and $\emptyset \vdash_D e : \tau$ for some
        type~$\tau$, per the typing rules of \cref{sec:typing};
  \item $s : \FrameworkState_D$ and
        $s.\mathsf{domain} : \State_D$.
\end{enumerate}
\end{definition}

\begin{remark}[Slice freshness via the reduction rules, not typing]
\label{rem:freshness-operational}
An earlier draft of Definition~\ref{def:wt-config} included a
third condition constraining slices passed to agents to be derived
from the current framework state. That content is enforced
structurally by \textsc{E-Coordinate}, which reads~$s$ from the
enclosing configuration at the moment the rule fires and computes
$\project_D(s.\mathsf{domain}, r)$ at that same moment, together
with the term grammar's lack of a value form for slices
(Remark~\ref{rem:slice-freshness}), which prevents any reduction
from introducing a captured slice for later invocation. The
dropped typing condition is therefore redundant with the
operational semantics.
\end{remark}

\begin{definition}[Reduction]
\label{def:reduction}
The \emph{reduction relation}
\[
  \config{e}{s}{\omega}
  \;\longrightarrow\;
  \config{e'}{s'}{\omega'}
\]
is a deterministic single-step relation on configurations. Its
defining rules appear in \cref{sec:reduction-rules}; the key
architecture rules are \textsc{E-Coordinate} (invoking $\Coord^{\AR}$
to elicit a proposal) and \textsc{E-Adjudicate} (invoking
$\Commit^{\AR}$ to produce an outcome). Each rule consumes zero or
more elements from~$\omega$ and advances the trace accordingly. We
write $\longrightarrow^{*}$ for the reflexive-transitive closure.
\end{definition}

\begin{definition}[Execution]
\label{def:execution}
An \emph{execution of~$e_0$ from initial framework state~$s_0$ with
oracle trace~$\omega$} is the unique (finite or infinite) sequence
\[
  \config{e_0}{s_0}{\omega}
  \;\longrightarrow\;
  \config{e_1}{s_1}{\omega_1}
  \;\longrightarrow\;
  \config{e_2}{s_2}{\omega_2}
  \;\longrightarrow\;
  \cdots
\]
generated by the reduction relation, where each $\omega_{i+1}$ is
the suffix of $\omega_i$ remaining after the reduction step consumed
its oracle reads. We write $\mathsf{Exec}(e_0, s_0, \omega)$ for
this sequence.
\end{definition}

\begin{remark}[Sequentiality absorbs simultaneity]
\label{rem:seq-and-simultaneity}
The reduction relation is a function of the configuration and the
oracle: each non-terminal configuration has exactly one successor.
Concurrent real-world activity --- two agents producing proposals at
the same wall-clock moment, or two users clicking a button
simultaneously --- is modelled by selecting an arbitration order
from the oracle in $\Coord^{\AR}$'s Step~1. A theorem that ranges
over all oracle traces therefore ranges over all possible
arbitration orders, which is to say over all possible concurrent
schedules. This is the architecture's answer to ``what happens if
two agents act at once?'': the meta-agent serialises them in
oracle-chosen order, and the theorems hold for every order.
\end{remark}

\begin{definition}[Reachable domain state]
\label{def:reachable}
A domain state $s \in \State_D$ is \emph{reachable} in
$\mathsf{Exec}(e_0, s_0, \omega)$ iff there exists an index~$i$
such that $s_i.\mathsf{domain} = s$.
\end{definition}

\subsubsection{Operational Semantics}
\label{sec:opsem}
\label{sec:reduction-rules}

We now give the reduction rules declared in
Definition~\ref{def:reduction}. The relation operates on
configurations $\config{e}{s}{\omega}$ (Definition~\ref{def:configuration})
and is deterministic: each non-terminal configuration has exactly
one successor.

\paragraph{Core term language.}
\label{sec:opsem-term-language}

The reduction rules refer to a small core term language sufficient to
house the architecture primitives. Here we fix only the constructors
needed by the rules below.

\begin{align*}
e \;::=\;\;
  & v
  \;\mid\; x
  \;\mid\; \mathsf{let}\; x = e_1 \;\mathsf{in}\; e_2 \\
  & \;\mid\; \mathsf{coordinate} \\
  & \;\mid\; \mathsf{adjudicate}(e_1, e_2) \\
  & \;\mid\; \mathsf{entry}(e) \\
  & \;\mid\; \mathsf{done} \\
\intertext{with values ranged over by~$v$:}
v \;::=\;\;
  & ()
  \;\mid\; (r, a)
  \;\mid\; \mathsf{approved}
  \;\mid\; \mathsf{rejected}
  \;\mid\; \mathsf{escalated}
\end{align*}

The pair value~$(r, a)$ is a coordinated proposal: a role paired
with the action that role's agent returned. The three outcome tags
$\mathsf{approved}$, $\mathsf{rejected}$, $\mathsf{escalated}$ are
values; they are the result tags attached to audit entries by
Definition~\ref{def:audit-entry}.

The term $\mathsf{coordinate}$ has no explicit arguments because, by
Definition~\ref{def:coord-ar}, $\Coord^{\AR}$ reads the current
framework state and oracle trace directly from the configuration.
The term $\mathsf{adjudicate}(e_1, e_2)$ takes a coordinated
proposal~$e_1$ (reducing to a pair $(r, a)$) and a timestamp~$e_2$
(reducing to a value in $\Time$). The term $\mathsf{entry}(e)$
indicates a commit-cycle entry point --- the client program's only
way to request that the kernel perform one coordinate/adjudicate
cycle --- and reduces to the outcome tag of the cycle. The term
$\mathsf{done}$ marks a terminal program.

This is the minimum surface needed to state the reduction rules. A
realistic client program will wrap these primitives in loops or
recursive calls; such wrappers are themselves \emph{client} code and
reduce by the standard rules of the ambient typed lambda
calculus~\cite{pierce2002types}.

\paragraph{Evaluation contexts.}
\label{sec:opsem-eval-contexts}

Reduction is defined by a small set of primitive rules together with
a compatible-closure rule under evaluation contexts:
\[
  E \;::=\; [\,\cdot\,]
       \;\mid\; \mathsf{let}\; x = E \;\mathsf{in}\; e
       \;\mid\; \mathsf{adjudicate}(E, e)
       \;\mid\; \mathsf{adjudicate}(v, E)
       \;\mid\; \mathsf{entry}(E).
\]

\begin{mathpar}
\inferrule*[left=\textsc{E-Ctx}]
  { \config{e}{s}{\omega}
      \;\longrightarrow\;
    \config{e'}{s'}{\omega'}
  }
  { \config{E[e]}{s}{\omega}
      \;\longrightarrow\;
    \config{E[e']}{s'}{\omega'}
  }
\end{mathpar}

\noindent All other rules operate on terms at the top of the
evaluation context; \textsc{E-Ctx} lifts them to arbitrary program
positions. Evaluation is left-to-right and call-by-value.

\paragraph{Administrative reductions.}
\label{sec:opsem-admin}

The following rules are standard and consume no oracle. They govern
sequencing of the client program.

\begin{mathpar}
\inferrule*[left=\textsc{E-Let}]
  {\;}
  { \config{\mathsf{let}\; x = v \;\mathsf{in}\; e}{s}{\omega}
      \;\longrightarrow\;
    \config{e[v/x]}{s}{\omega}
  }
\end{mathpar}

No other administrative rules are needed: the remaining client-level
constructors (application, case analysis, etc.) step by the standard
call-by-value rules of the typed lambda
calculus~\cite{pierce2002types}, lifted through \textsc{E-Ctx} onto
configurations without consulting~$s$ or~$\omega$.

\paragraph{Architecture reductions.}
\label{sec:opsem-arch}

The two rules below are the operational heart of \AR. They invoke,
respectively, the Agentic Redux coordination protocol
(Definition~\ref{def:coord-ar}) and the Agentic Redux commit
discipline (Definition~\ref{def:commit-ar}). Writing $\omega
\Rightarrow_{k} \omega'$ for ``$\omega'$ is the suffix of $\omega$
remaining after $k$ consultations,'' we let each rule consume
whatever oracle prefix the invoked function consumes. The consumed
count is not fixed by the calculus --- $\Coord^{\AR}$ consumes
oracle for the arbitration choice and for any oracle reads
performed by the invoked agent, while $\Commit^{\AR}$ consumes no
oracle, since invariants are pure state predicates
(\cref{rem:state-predicates}) --- but is a deterministic function
of the inputs passed to $\Coord^{\AR}$ or $\Commit^{\AR}$.

\paragraph{Coordinate.} The $\mathsf{coordinate}$ term reads the
current framework state and oracle, invokes $\Coord^{\AR}$ to
arbitrate a role, project its slice, and invoke its agent, and
reduces to the returned $(r, a)$ pair. The framework state is
unchanged: $\Coord^{\AR}$ is a pure function of its inputs and
produces a proposal, not a commit.

\begin{mathpar}
\inferrule*[left=\textsc{E-Coordinate}]
  { \Coord^{\AR}(s, \AgentPop, \omega) \;=\; (r, a)
  \\\\
    \omega \;\Rightarrow_{k}\; \omega'
      \text{ where $k$ is the number of oracle reads performed by $\Coord^{\AR}$}
  }
  { \config{\mathsf{coordinate}}{s}{\omega}
      \;\longrightarrow\;
    \config{(r, a)}{s}{\omega'}
  }
\end{mathpar}

\noindent The agent population $\AgentPop$ is fixed by the client
program's instantiation of $\AgRedux(D, \AgentPop)$
(Definition~\ref{def:ar-architecture}); it is a parameter of the
reduction relation, not a runtime value. If $\Coord^{\AR}(s,
\AgentPop, \omega)$ is undefined --- for example, because
$\AgentPop$ is empty, or because the selected agent's partial
function does not terminate on the given slice --- the configuration
is stuck; \textsc{E-Coordinate} does not apply. Stuckness of
well-typed configurations is ruled out by Type Safety
(Theorem~5).

\paragraph{Adjudicate.} The $\mathsf{adjudicate}$ term takes a
coordinated proposal~$(r, a)$ and a timestamp~$t$ and invokes
$\Commit^{\AR}$ to produce a new framework state and an outcome tag.
By Definition~\ref{def:commit-ar}, $\Commit^{\AR}$ either appends an
approval audit entry and applies the mutation, or appends a
rejection/escalation audit entry and leaves $s.\mathsf{domain}$
unchanged. We split the rule by outcome so that the three cases are
individually inspectable in proofs.

\begin{mathpar}
\inferrule*[left=\textsc{E-Adjudicate-Approve}]
  { \Commit^{\AR}(s, a, t, \omega) \;=\; s'
  \\\\
    s'.\mathsf{audit} \;=\; s.\mathsf{audit} \,{+\!+}\, [\,\text{approval entry}\,]
  \\
    s'.\mathsf{queue} \;=\; s.\mathsf{queue}
  \\
    s'.\mathsf{domain} \;=\; \applyMutation_D(s.\mathsf{domain}, a, t)
  \\\\
    \omega \;\Rightarrow_{k}\; \omega'
  }
  { \config{\mathsf{adjudicate}((r, a), t)}{s}{\omega}
      \;\longrightarrow\;
    \config{\mathsf{approved}}{s'}{\omega'}
  }
\end{mathpar}

\begin{mathpar}
\inferrule*[left=\textsc{E-Adjudicate-Reject}]
  { \Commit^{\AR}(s, a, t, \omega) \;=\; s'
  \\\\
    s'.\mathsf{audit} \;=\; s.\mathsf{audit} \,{+\!+}\, [\,\text{rejection entry}\,]
  \\
    s'.\mathsf{queue} \;=\; s.\mathsf{queue}
  \\
    s'.\mathsf{domain} \;=\; s.\mathsf{domain}
  \\\\
    \omega \;\Rightarrow_{k}\; \omega'
  }
  { \config{\mathsf{adjudicate}((r, a), t)}{s}{\omega}
      \;\longrightarrow\;
    \config{\mathsf{rejected}}{s'}{\omega'}
  }
\end{mathpar}

\begin{mathpar}
\inferrule*[left=\textsc{E-Adjudicate-Escalate}]
  { \Commit^{\AR}(s, a, t, \omega) \;=\; s'
  \\\\
    s'.\mathsf{audit} \;=\; s.\mathsf{audit} \,{+\!+}\, [\,\text{escalation entry}\,]
  \\
    s'.\mathsf{queue} \;=\; s.\mathsf{queue} \,{+\!+}\, [\,\text{pending proposal}\,]
  \\
    s'.\mathsf{domain} \;=\; s.\mathsf{domain}
  \\\\
    \omega \;\Rightarrow_{k}\; \omega'
  }
  { \config{\mathsf{adjudicate}((r, a), t)}{s}{\omega}
      \;\longrightarrow\;
    \config{\mathsf{escalated}}{s'}{\omega'}
  }
\end{mathpar}

\noindent The three rules are exhaustive and mutually exclusive:
$\Commit^{\AR}(s, a, t, \omega)$ is a deterministic function
(Definition~\ref{def:commit-ar}) whose output falls into exactly one
of the three branches. The ``approval entry,'' ``rejection entry,''
``escalation entry,'' and ``pending proposal'' appearing in the
rules are the concrete audit and queue items constructed by
$\Commit^{\AR}$; their exact shape is fixed by
Definitions~\ref{def:audit-entry} and~\ref{def:review-queue}.

\paragraph{Entry and termination.} The $\mathsf{entry}$ constructor
provides a syntactic marker for a complete commit cycle. Its rule
is administrative:

\begin{mathpar}
\inferrule*[left=\textsc{E-Entry}]
  {\;}
  { \config{\mathsf{entry}(v)}{s}{\omega}
      \;\longrightarrow\;
    \config{v}{s}{\omega}
  }
\end{mathpar}

\noindent A configuration $\config{\mathsf{done}}{s}{\omega}$ is
terminal: no rule applies. Values other than $\mathsf{done}$ are
terminal only when no enclosing evaluation context can step them via
\textsc{E-Ctx} or \textsc{E-Let}.

\paragraph{Structural lemmas.}
\label{sec:opsem-lemmas}

Five lemmas about the reduction relation will be cited by the
metatheory of \cref{sec:metatheory}. All are proved by inspection of
the rules of \cref{sec:opsem-arch}.

\begin{lemma}[Determinism]
\label{lem:determinism}
$\longrightarrow$ is a partial function on configurations.
\end{lemma}
\begin{proof}
The primitive reduction rules operate on disjoint top-level term
constructors ($\mathsf{let}$, $\mathsf{coordinate}$,
$\mathsf{adjudicate}$, $\mathsf{entry}$), so at most one primitive
rule applies at any evaluation-context redex. The three
\textsc{E-Adjudicate-*} rules are discriminated by the output of
$\Commit^{\AR}$, which is a computable partial function of its
inputs (Definition~\ref{def:commit-ar}). \textsc{E-Ctx} lifts
primitive rules through evaluation contexts left-to-right and
call-by-value; the evaluation context for a given non-value term is
unique. Each rule that consults the oracle does so through a single
call to $\Coord^{\AR}$ or $\Commit^{\AR}$, each a computable partial
function of its inputs (Definitions~\ref{def:coord},
\ref{def:commit}); the number of oracle elements consumed, and the
resulting suffix~$\omega'$, are therefore functions of the
configuration alone, not additional nondeterministic choices.
\end{proof}

\begin{remark}[Rule classification]
\label{rem:rule-classification}
The reduction rules of \cref{sec:opsem-arch} fall into two
classes by the shape of their conclusion's domain clause: a rule
is \emph{applyMutation-applying} if its conclusion fixes
$s'.\mathsf{domain} = \applyMutation_D(s.\mathsf{domain}, a, t)$
for some action and timestamp, and \emph{domain-preserving}
otherwise --- meaning its conclusion either fixes
$s'.\mathsf{domain} = s.\mathsf{domain}$ explicitly or leaves
$s$ entirely unchanged. The unique applyMutation-applying rule
of the base calculus is \textsc{E-Adjudicate-Approve};
\textsc{E-Let}, \textsc{E-Coordinate},
\textsc{E-Adjudicate-Reject},
\textsc{E-Adjudicate-Escalate}, and \textsc{E-Entry} are
domain-preserving. \textsc{E-Ctx} inherits the class of the
rule it lifts.
\end{remark}

The metatheory below depends on the rule set only through this
classification: an extension that declares each new rule
applyMutation-applying or domain-preserving extends every
result whose proof appeals only to the classification, with
Lemma~\ref{lem:domain-localization} and the domain-preserving
case of Theorem~1 extending verbatim. The applyMutation-applying
case requires the extension to discharge its own analogue of
Lemma~\ref{lem:approval-soundness}, a non-classification
property.

\begin{lemma}[Domain-state localization]
\label{lem:domain-localization}
For every reduction step
$\config{e}{s}{\omega} \longrightarrow \config{e'}{s'}{\omega'}$
generated by a domain-preserving rule, possibly lifted through
\textsc{E-Ctx}, $s'.\mathsf{domain} = s.\mathsf{domain}$.
\end{lemma}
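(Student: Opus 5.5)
The plan is to argue by induction on the derivation of the reduction step, splitting on the last rule applied. There are two possibilities: the step is an instance of a primitive domain-preserving rule applied at top level, or it is an instance of \textsc{E-Ctx} lifting some premise step. By Remark~\ref{rem:rule-classification}, \textsc{E-Ctx} inherits the class of the rule it lifts. So in the \textsc{E-Ctx} case the premise step $\config{e_0}{s}{\omega} \longrightarrow \config{e_0'}{s'}{\omega'}$ is itself generated by a domain-preserving rule. It also has the same $s$ and $s'$, because \textsc{E-Ctx} passes the framework states through unchanged. The induction hypothesis then gives $s'.\mathsf{domain} = s.\mathsf{domain}$ directly.

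For the base cases, I go through the five domain-preserving primitive rules one at a time. \textsc{E-Let}, \textsc{E-Coordinate}, and \textsc{E-Entry} have conclusions whose framework state is syntactically the same $s$ on both sides, so $s' = s$ and in particular the domains agree. For \textsc{E-Coordinate}, this depends on $\Coord^{\AR}$ producing only a proposal and never a new state, as its rule displays. \textsc{E-Adjudicate-Reject} and \textsc{E-Adjudicate-Escalate} each list $s'.\mathsf{domain} = s.\mathsf{domain}$ explicitly as a premise, so the claim is immediate. These premises agree with Definition~\ref{def:commit-ar}: the reject and escalate branches return the framework state with the $\mathsf{domain}$ component unchanged, so the premises can actually be satisfied.

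One side condition needs care: the client-level administrative rules inherited from the ambient call-by-value calculus, such as application and case analysis. The paper says these are lifted onto configurations ``without consulting $s$ or $\omega$''. I would treat them as rules whose conclusion leaves $s$ entirely unchanged, which makes them domain-preserving in the sense of Remark~\ref{rem:rule-classification}, and handle them exactly like \textsc{E-Let}.

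I expect the main (mild) obstacle to be making the induction over \textsc{E-Ctx} precise. The issue is that a single step may be lifted through nested contexts. I would handle this by noting that a derivation is a finite stack of \textsc{E-Ctx} applications over one primitive rule, or equivalently by collapsing nested contexts into one composite context $E$. After that, the whole argument reduces to the primitive-rule case analysis above.
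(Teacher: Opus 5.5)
Your proposal is correct and follows essentially the same route as the paper: case analysis on the primitive domain-preserving rules (\textsc{E-Let}, \textsc{E-Entry}, and \textsc{E-Coordinate} leave $s$ unchanged, while \textsc{E-Adjudicate-Reject} and \textsc{E-Adjudicate-Escalate} carry $s'.\mathsf{domain} = s.\mathsf{domain}$ as an explicit premise), with \textsc{E-Ctx} handled as the inductive step via the hypothesis on the inner step. Your treatment of the ambient call-by-value rules as domain-preserving is a slightly more careful addition that the paper leaves implicit.
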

\begin{proof}
By case analysis on the primitive rule invoked, with
\textsc{E-Ctx} as the inductive step. Rules \textsc{E-Let} and
\textsc{E-Entry} leave $s$ entirely unchanged. Rule
\textsc{E-Coordinate} carries $s$ through its conclusion without
modification. Rules \textsc{E-Adjudicate-Reject} and
\textsc{E-Adjudicate-Escalate} both include
$s'.\mathsf{domain} = s.\mathsf{domain}$ as an explicit premise.
\textsc{E-Ctx} preserves
$s'.\mathsf{domain} = s.\mathsf{domain}$ by the inductive
hypothesis applied to the inner step. This is the formal
content, at the level of the reduction relation, of ``the
meta-agent is the only component permitted to produce an
$\approved$-tagged outcome'' (Definition~\ref{def:meta-agent}):
the applyMutation-applying rule \textsc{E-Adjudicate-Approve} is
the sole route by which $s.\mathsf{domain}$ can change in the
base calculus.
\end{proof}

\begin{lemma}[Approval soundness]
\label{lem:approval-soundness}
If
$\config{\mathsf{adjudicate}((r, a), t)}{s}{\omega} \longrightarrow
\config{\mathsf{approved}}{s'}{\omega'}$
under \textsc{E-Adjudicate-Approve}, then
$s'.\mathsf{domain} \models \invariants_D$.
\end{lemma}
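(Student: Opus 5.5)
The plan is to read the conclusion straight off the premises of \textsc{E-Adjudicate-Approve} and the definition of $\Commit^{\AR}$ (Definition~\ref{def:commit-ar}). First, invert the reduction step. The term $\mathsf{adjudicate}((r,a),t)$ sits at the top of the configuration, and its result tag is $\mathsf{approved}$. By Lemma~\ref{lem:determinism} and the remark that the three \textsc{E-Adjudicate-*} rules are exhaustive and mutually exclusive, the step must therefore be generated by \textsc{E-Adjudicate-Approve}. That rule supplies the premises $\Commit^{\AR}(s,a,t,\omega) = s'$ and $s'.\mathsf{domain} = \applyMutation_D(s.\mathsf{domain},a,t)$. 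Write $\hat{s} \triangleq \applyMutation_D(s.\mathsf{domain},a,t)$; this is exactly the candidate domain state computed in Step~1 of Definition~\ref{def:commit-ar}.

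Second, identify which branch of $\Commit^{\AR}$ produced $s'$. Definition~\ref{def:commit-ar} is a case split on the result of walking $\invariants_D$ in list order over $\hat{s}$:
\begin{itemize}
  \item If some first invariant $\iota_k$ returns $\rejectRes(m)$, the output has $\mathsf{domain}$ unchanged and its last audit entry is tagged $\rejected$.
  \item If $\iota_k$ returns $\escalateRes(m)$, the output has $\mathsf{domain}$ unchanged, a $\escalated$-tagged entry, and one more queue element.
  \item Only when every invariant returns $\pass$ does the output carry an $\approved$ entry with $\mathsf{domain} = \hat{s}$.
\end{itemize}
The rule premise says the appended entry is an approval entry. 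Since $\Commit^{\AR}$ is a deterministic function, its output lies in exactly one branch, and the approval-entry premise forces the all-pass branch. Hence $\iota(\hat{s}) = \pass$ for every $\iota \in \invariants_D$. By Definition~\ref{def:invariant-satisfaction}, $\hat{s} \models \invariants_D$, and since $s'.\mathsf{domain} = \hat{s}$ the conclusion follows.

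The main obstacle is the second step: making rigorous that the ``approval entry'' in the rule premise really forces the all-pass branch. The tempting argument is that the domain changed, but this fails when $\hat{s} = s.\mathsf{domain}$, for instance when a no-op mutation is rejected, so the distinction must come from the audit tag instead. I would appeal to Definition~\ref{def:audit-entry}, where the tag is a component of $\{\approved,\rejected,\escalated\}$ and the three tags are distinct. Comparing the last element of $s'.\mathsf{audit}$ against the three branch outputs excludes the reject and escalate branches. I would also note, following Remark~\ref{rem:state-predicates}, that the invariants are evaluated on $\hat{s}$ itself rather than on the pre-state. This matters because it means no property of $\applyMutation_D$ or of the oracle is needed: the post-check alone discharges the obligation, even for buggy or adversarial mutations.
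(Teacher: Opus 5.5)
Your proposal is correct and follows the paper's proof. Both identify \textsc{E-Adjudicate-Approve} with the all-$\pass$ branch (Step~4) of $\Commit^{\AR}$, note that the invariants were evaluated on the candidate $\hat{s} = \applyMutation_D(s.\mathsf{domain}, a, t)$, and conclude from the rule's premise $s'.\mathsf{domain} = \hat{s}$; your audit-tag comparison spells out why the approval-entry premise forces that branch, which the paper simply asserts, while your opening inversion step is redundant because the hypothesis already names the rule.
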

\begin{proof}
The \textsc{E-Adjudicate-Approve} rule applies when
$\Commit^{\AR}(s, a, t, \omega)$ follows the $\pass$-all branch of
Definition~\ref{def:commit-ar}. By Step~1 of that definition,
$\Commit^{\AR}$ computes the candidate
$\hat{s} = \applyMutation_D(s.\mathsf{domain}, a, t)$ before walking
the invariant list. By Step~2, each invariant is evaluated
on~$\hat{s}$. The $\pass$-all branch (Step~4) is reached precisely
when every $\iota \in \invariants_D$ returns $\pass$ on~$\hat{s}$,
which by Definition~\ref{def:invariant-satisfaction} is
$\hat{s} \models \invariants_D$. The rule's premises fix
$s'.\mathsf{domain} = \applyMutation_D(s.\mathsf{domain}, a, t) =
\hat{s}$, yielding $s'.\mathsf{domain} \models \invariants_D$.
\end{proof}

\begin{lemma}[Non-approval witness]
\label{lem:non-approval-witness}
If
$\config{\mathsf{adjudicate}((r, a), t)}{s}{\omega} \longrightarrow
\config{\mathsf{rejected}}{s'}{\omega'}$
under \textsc{E-Adjudicate-Reject}, or
$\config{\mathsf{adjudicate}((r, a), t)}{s}{\omega} \longrightarrow
\config{\mathsf{escalated}}{s'}{\omega'}$
under \textsc{E-Adjudicate-Escalate}, then there exist
$\iota_k \in \invariants_D$ and $m \in \mathsf{String}$ such that,
letting $\hat{s} = \applyMutation_D(s.\mathsf{domain}, a, t)$:
\begin{itemize}
  \item on the \textsc{E-Adjudicate-Reject} branch,
        $\iota_k(\hat{s}) = \rejectRes(m)$;
  \item on the \textsc{E-Adjudicate-Escalate} branch,
        $\iota_k(\hat{s}) = \escalateRes(m)$;
  \item in particular $\iota_k(\hat{s}) \ne \pass$; and
  \item the entry appended to $s'.\mathsf{audit}$ by this step has
        witness slot $\mathsf{Some}(\mathrm{id}(\iota_k), m)$.
\end{itemize}
\end{lemma}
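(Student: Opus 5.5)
The argument mirrors Lemma~\ref{lem:approval-soundness}: we unfold the \textsc{E-Adjudicate-Reject} and \textsc{E-Adjudicate-Escalate} rules back to the branches of Definition~\ref{def:commit-ar} that produce them.

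\emph{Which branch fired.} Both rules have the premise $\Commit^{\AR}(s, a, t, \omega) = s'$. By Lemma~\ref{lem:determinism} and the remark that the three adjudicate rules are exhaustive and mutually exclusive, the rule that fired pins down the branch of Definition~\ref{def:commit-ar} that $\Commit^{\AR}$ took. Step~1 fixes the candidate $\hat{s} = \applyMutation_D(s.\mathsf{domain}, a, t)$. The pass-all branch (Step~4) is the only one yielding $s'.\mathsf{domain} = \hat{s}$ together with an $\approved$ entry. So a reject or escalate outcome means the walk of Step~2 halted at Step~3. That halt happens at the first $\iota_k \in \invariants_D$ in list order with $\iota_k(\hat{s}) \ne \pass$, which is the failure witness of Definition~\ref{def:failure-witness}.

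\emph{Extracting $m$.} By Definition~\ref{def:inv-result}, $\InvResult$ is a three-way sum. Since $\iota_k(\hat{s}) \ne \pass$, we have $\iota_k(\hat{s}) = \rejectRes(m)$ or $\iota_k(\hat{s}) = \escalateRes(m)$ for a unique $m \in \mathsf{String}$.
\begin{itemize}
  \item The $\rejectRes$ sub-branch appends a $\rejected$-tagged entry and leaves the queue unchanged. This matches exactly the premises of \textsc{E-Adjudicate-Reject}.
  \item The $\escalateRes$ sub-branch appends an $\escalated$-tagged entry and one pending proposal. This matches \textsc{E-Adjudicate-Escalate}.
\end{itemize}
Hence the reject rule forces $\iota_k(\hat{s}) = \rejectRes(m)$ and the escalate rule forces $\iota_k(\hat{s}) = \escalateRes(m)$. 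The third bullet of the lemma, $\iota_k(\hat{s}) \ne \pass$, is immediate. For the fourth bullet, Step~3 of Definition~\ref{def:commit-ar} constructs the single appended entry with witness slot $\mathsf{Some}(\mathrm{id}(\iota_k), m)$ in both sub-branches, and the rule premises identify $s'.\mathsf{audit}$ as $s.\mathsf{audit}$ with exactly that one entry appended.

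\emph{Main obstacle.} The inference rules write the appended item only informally, as ``rejection entry'' or ``escalation entry''. The fourth bullet depends on reading these as the concrete entries built by $\Commit^{\AR}$, which is the convention the paper states right after the rules. The proof must invoke that convention explicitly, so that the witness slot is taken from Definition~\ref{def:commit-ar} and not from the rule schema itself.

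Once that convention is invoked, the remaining steps are a direct case analysis on the sum type $\InvResult$. No induction is needed, because the lemma concerns a single top-level step. If the step is lifted through \textsc{E-Ctx}, the lemma follows trivially, since \textsc{E-Ctx} changes neither $s$ nor the effect of the step on $s$.
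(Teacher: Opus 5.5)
Your proposal is correct and follows the paper's proof. Both arguments unfold the premise $\Commit^{\AR}(s,a,t,\omega)=s'$ to the non-$\pass$-all branch of Definition~\ref{def:commit-ar}, take $\iota_k$ as the first invariant failing on $\hat{s}$, split the sum type $\InvResult$ to obtain $m$, and read the witness slot $\mathsf{Some}(\mathrm{id}(\iota_k), m)$ off Step~3. Your matching of each rule to its sub-branch, and your explicit appeal to the convention that the entries named in the rules are the concrete items built by $\Commit^{\AR}$, just spell out what the paper's shorter proof leaves implicit.
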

\begin{proof}
In either case, $\Commit^{\AR}(s, a, t, \omega)$ follows the
non-$\pass$-all branch of Definition~\ref{def:commit-ar}. Step~1
computes $\hat{s} = \applyMutation_D(s.\mathsf{domain}, a, t)$;
Step~2 evaluates each invariant on~$\hat{s}$; Step~3 halts on the
first invariant $\iota_k \in \invariants_D$ with
$\iota_k(\hat{s}) \ne \pass$. By Definition~\ref{def:inv-result},
$\iota_k(\hat{s})$ must have the form $\rejectRes(m)$ or
$\escalateRes(m)$ for some $m \in \mathsf{String}$. On the reject
branch of Step~3, Definition~\ref{def:commit-ar} appends to
$s.\mathsf{audit}$ an entry with witness slot
$\mathsf{Some}(\mathrm{id}(\iota_k), m)$; analogously on the
escalate branch. In both cases $\iota_k(\hat{s}) \ne \pass$.
\end{proof}

\begin{lemma}[Audit append discipline]
\label{lem:audit-append}
For every reduction step
$\config{e}{s}{\omega} \longrightarrow \config{e'}{s'}{\omega'}$:
\begin{itemize}
  \item if the step is an instance of
        \textsc{E-Adjudicate-Approve},
        \textsc{E-Adjudicate-Reject}, or
        \textsc{E-Adjudicate-Escalate} (possibly lifted through
        \textsc{E-Ctx}), then
        $s'.\mathsf{audit} = s.\mathsf{audit} \,{+\!+}\,
        [\,\text{entry}\,]$ for exactly one entry, fixed by
        Definition~\ref{def:commit-ar};
  \item otherwise, $s'.\mathsf{audit} = s.\mathsf{audit}$.
\end{itemize}
\end{lemma}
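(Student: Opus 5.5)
The plan is to proceed by induction on the derivation of the reduction step, exactly mirroring the proof of Lemma~\ref{lem:domain-localization}. We do a case analysis on the primitive rule at the root of the derivation, and use \textsc{E-Ctx} as the only inductive case. For \textsc{E-Ctx}, the conclusion carries the framework states $s$ and $s'$ of the premise unchanged, with only the term wrapped in $E[\cdot]$. Both the audit-relation of the premise and the question of which primitive rule was used therefore transfer directly to the conclusion. This disposes of the ``possibly lifted through \textsc{E-Ctx}'' clause.

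For the three adjudication rules, I would read the audit clause off each rule's premises. \textsc{E-Adjudicate-Approve}, \textsc{E-Adjudicate-Reject}, and \textsc{E-Adjudicate-Escalate} each state $s'.\mathsf{audit} = s.\mathsf{audit} \,{+\!+}\, [\,\text{entry}\,]$ as an explicit premise. To show the entry is the one fixed by Definition~\ref{def:commit-ar}, I appeal to the fact that $s' = \Commit^{\AR}(s,a,t,\omega)$. I then check that in each of the three branches of Definition~\ref{def:commit-ar} (the reject and escalate halts of Step~3, and the all-$\pass$ Step~4) exactly one entry is appended, and that the remaining audit prefix equals $s.\mathsf{audit}$ unchanged. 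The closing sentence of Definition~\ref{def:commit-ar} records precisely this, and the rules' premises are consistent with it.

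For every other primitive rule the audit log is left alone. \textsc{E-Let} and \textsc{E-Entry} have conclusions with the same $s$ on both sides. \textsc{E-Coordinate} also carries $s$ through unchanged, since $\Coord^{\AR}$ returns only a pair $(r,a)$ and the conclusion reuses $s$. The ambient call-by-value rules of the lambda calculus likewise do not consult $s$, as stated in the administrative-reductions paragraph. In each of these cases $s'.\mathsf{audit} = s.\mathsf{audit}$ holds trivially.

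The only delicate point is the ``exactly one entry'' claim together with the ``otherwise'' dichotomy. The dichotomy needs the rule used for a given step to be well defined, which Lemma~\ref{lem:determinism} supplies. The ``exactly one entry'' claim needs us to exclude the possibility that some lambda-calculus rule outside the displayed set touches $s$. I would handle this by citing the paper's stipulation that those rules are lifted onto configurations without consulting or modifying $s$. Everything else is direct inspection.
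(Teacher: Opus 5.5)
Your proposal is correct and matches the paper's proof. Both do a case analysis on the primitive rule with \textsc{E-Ctx} as the inductive step, read the single appended entry off the premises of the three \textsc{E-Adjudicate-*} rules (shape fixed by Definition~\ref{def:commit-ar}), and note that \textsc{E-Let}, \textsc{E-Coordinate}, and \textsc{E-Entry} pass $s$ through unchanged.

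Your explicit treatment of the ambient call-by-value rules slightly improves on the paper's rule list. The appeal to Lemma~\ref{lem:determinism} is harmless but unnecessary: the rule at the root of whichever derivation witnesses the step already decides which branch of the dichotomy applies.
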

\begin{proof}
By case analysis on the primitive rule invoked, with
\textsc{E-Ctx} as the inductive step. The three
\textsc{E-Adjudicate-*} rules each have premise
$s'.\mathsf{audit} = s.\mathsf{audit} \,{+\!+}\,
[\,\text{entry}\,]$ for one entry whose shape is fixed by
Definition~\ref{def:commit-ar}. The remaining rules
(\textsc{E-Let}, \textsc{E-Coordinate}, \textsc{E-Entry}) carry
$s$ through their conclusions syntactically, so the audit field
is unchanged. \textsc{E-Ctx} preserves both cases by the
inductive hypothesis applied to the inner step.
\end{proof}

\begin{remark}[Slices are computed at invocation time]
\label{rem:slice-freshness}
The $\mathsf{coordinate}$ rule reads~$s$ from the enclosing
configuration at the moment the rule fires, and $\Coord^{\AR}$
computes the slice $\project_D(s.\mathsf{domain}, r)$ at that same
moment. No rule captures a slice into a term for later use: the
term language has no value form for slices. An agent can
therefore only be invoked on a slice derived from the current
framework state.
\end{remark}

\subsubsection{Typing Rules}
\label{sec:typing}

The typing judgment for the calculus takes the form
$\Gamma \vdash_D e : \tau$, parameterised by a fixed $D : \Domain$
and a variable context $\Gamma$ mapping variables to types. Types
are drawn from the following grammar:
\[
  \tau \;::=\; \mathsf{Unit} \;\mid\; \Roles_D
         \;\mid\; \Action_D \;\mid\; \Time
         \;\mid\; \tau_1 \times \tau_2 \;\mid\; \Outcome_D.
\]
The types $\Roles_D$, $\Action_D$, and $\Time$ are base types whose
inhabitants are the role, action, and timestamp constants supplied
by the domain and the ambient calculus. The type $\mathsf{Unit}$ has
inhabitants $()$ and $\mathsf{done}$.

\begin{remark}[Outcome type at the calculus level]
\label{rem:outcome-calculus}
The type $\Outcome_D$ appearing in the typing grammar above is the
three-tag type inhabited by the outcome constants
$\approved, \rejected, \escalated$. This is the calculus-level
projection of the meta-agent's output type
(Definition~\ref{def:meta-agent}), whose branches additionally
carry payloads. At reduction time, the payload information is
routed into the framework-state update performed by
$\Commit^{\AR}$ (Definition~\ref{def:commit-ar}), while the three
tags alone remain as the term-level residual.
\end{remark}

The rules for variables, let-bindings, and the two inhabitants of
$\mathsf{Unit}$ are standard.

\begin{mathpar}
\inferrule*[left=\textsc{T-Var}]
  { x{:}\tau \in \Gamma }
  { \Gamma \vdash_D x : \tau }

\inferrule*[left=\textsc{T-Unit}]
  {\ }
  { \Gamma \vdash_D () : \mathsf{Unit} }

\inferrule*[left=\textsc{T-Done}]
  {\ }
  { \Gamma \vdash_D \mathsf{done} : \mathsf{Unit} }

\inferrule*[left=\textsc{T-Let}]
  { \Gamma \vdash_D e_1 : \tau_1
    \\
    \Gamma, x{:}\tau_1 \vdash_D e_2 : \tau_2
  }
  { \Gamma \vdash_D \mathsf{let}\; x = e_1 \;\mathsf{in}\; e_2 : \tau_2 }
\end{mathpar}

The base-type constants are typed by axiom.

\begin{mathpar}
\inferrule*[left=\textsc{T-Role}]
  { r \in \Roles_D }
  { \Gamma \vdash_D r : \Roles_D }

\inferrule*[left=\textsc{T-Action}]
  { a \in \Action_D }
  { \Gamma \vdash_D a : \Action_D }

\inferrule*[left=\textsc{T-Time}]
  { t \in \Time }
  { \Gamma \vdash_D t : \Time }

\inferrule*[left=\textsc{T-Approved}]
  {\ }
  { \Gamma \vdash_D \approved : \Outcome_D }

\inferrule*[left=\textsc{T-Rejected}]
  {\ }
  { \Gamma \vdash_D \rejected : \Outcome_D }

\inferrule*[left=\textsc{T-Escalated}]
  {\ }
  { \Gamma \vdash_D \escalated : \Outcome_D }
\end{mathpar}

A coordinated proposal $(r, a)$ is typed by the product-introduction
rule specialised to the role/action pair.

\begin{mathpar}
\inferrule*[left=\textsc{T-Proposal}]
  { \Gamma \vdash_D r : \Roles_D
    \\
    \Gamma \vdash_D a : \Action_D
  }
  { \Gamma \vdash_D (r, a) : \Roles_D \times \Action_D }
\end{mathpar}

The three architecture primitives are typed as follows.

\begin{mathpar}
\inferrule*[left=\textsc{T-Coordinate}]
  {\ }
  { \Gamma \vdash_D \mathsf{coordinate} : \Roles_D \times \Action_D }

\inferrule*[left=\textsc{T-Adjudicate}]
  { \Gamma \vdash_D e_1 : \Roles_D \times \Action_D
    \\
    \Gamma \vdash_D e_2 : \Time
  }
  { \Gamma \vdash_D \mathsf{adjudicate}(e_1, e_2) : \Outcome_D }

\inferrule*[left=\textsc{T-Entry}]
  { \Gamma \vdash_D e : \Outcome_D }
  { \Gamma \vdash_D \mathsf{entry}(e) : \Outcome_D }
\end{mathpar}

\textsc{T-Coordinate} assigns the coordinate primitive the type of
its reduct, a coordinated proposal. \textsc{T-Adjudicate} requires
a coordinated proposal and a timestamp and returns an outcome tag.
\textsc{T-Entry} preserves the outcome type of its argument,
marking a complete commit cycle.

\begin{remark}[What the typing discipline does and does not enforce]
\label{rem:typing-scope}
The typing rules rule out ill-formed combinations such as
$\mathsf{adjudicate}(\approved, \approved)$, which would be
operationally stuck under the rules of \cref{sec:opsem-arch}. They
do not enforce domain-level well-formedness of the action in a
proposal $(r, a)$; in particular, they do not require that
$a \in \Action_D(r)$ in the sense of
Definition~\ref{def:agent}. That refinement is enforced
operationally by $\Coord^{\AR}$, which produces proposals by
invoking $\alpha_r$ on $\Slice_D(r)$ and whose returned action is
classified as belonging to role~$r$ by Condition~1 of
Definition~\ref{def:agent}. Pushing the role-indexing into the
type system would require dependent products and is not necessary
for any result in this paper.
\end{remark}

\subsubsection{Vocabulary for Theorem Statements}
\label{sec:theorem-vocabulary}

Every concept appearing in the informal theorem statements of
\cref{sec:metatheory} has a formal counterpart above. The following
table records the dependency of each theorem on the preceding
definitions.

\begin{center}
\small
\begin{tabular}{lll}
\toprule
Result & Scope & Required definitions \\
\midrule
T1 Invariant Preservation  & \Adjudicable
  & \ref{def:adjudicable}, \ref{def:inv-result},
    \ref{def:invariant-satisfaction},
    \ref{def:commit-ar}, \ref{def:reduction},
    \ref{def:execution}, \ref{def:reachable} \\
P1 Agent Confinement    & \Domain
  & \ref{def:footprint}, \ref{def:slice},
    \ref{def:wf-roles}, \ref{def:domain-signature},
    \ref{def:agent}, \ref{def:oracle},
    \ref{def:reduction} \\
T2 Audit Log Integrity  & \Adjudicable
  & \ref{def:adjudicable}, \ref{def:inv-result},
    \ref{def:invariant-satisfaction},
    \ref{def:audit-entry}, \ref{def:commit-ar},
    \ref{def:reduction}, \ref{def:execution} \\
T3 Type Safety          & \Domain
  & \ref{def:configuration}, \ref{def:wt-config},
    \ref{def:reduction}, \ref{sec:typing} \\
\midrule
C1 Write Skew Freedom   & \Domain
  & \ref{def:local-crosscutting},
    \ref{def:write-skew-calc},
    \ref{def:execution}, \ref{def:reachable}, T1 \\
\bottomrule
\end{tabular}
\end{center}
\subsection{Metatheory}
\label{sec:metatheory}

\subsubsection{Invariant Preservation}
\label{app:proof-invariant-safety}
\paragraph{Theorem 1 (Invariant Preservation).}
\emph{For every $D : \Adjudicable$, every client program~$e_0$
well-typed over $\AgRedux(D)$, every initial framework state~$s_0$
of type $\FrameworkState_D$ with
$s_0.\mathsf{domain} \models \invariants_D$, and every oracle trace
$\omega \in \Oracle$: every reachable domain state
in~$\mathsf{Exec}(e_0, s_0, \omega)$ satisfies $\invariants_D$.}

\begin{proof}
Let $s$ be a reachable domain state
in~$\mathsf{Exec}(e_0, s_0, \omega)$. By
Definition~\ref{def:reachable}, there exists an index $i \geq 0$
with $s = s_i.\mathsf{domain}$. We show, by induction on~$i$, that
$s_j.\mathsf{domain} \models \invariants_D$ for every
$0 \leq j \leq i$; specialising to $j = i$ yields
$s \models \invariants_D$.

\emph{Base case} ($j = 0$). By hypothesis,
$s_0.\mathsf{domain} \models \invariants_D$.

\emph{Inductive step.} Assume
$s_j.\mathsf{domain} \models \invariants_D$ for some $j < i$.
The execution's step at index~$j$ is a reduction
$\config{e_j}{s_j}{\omega_j} \longrightarrow
\config{e_{j+1}}{s_{j+1}}{\omega_{j+1}}$, and is classified by
whether it is an instance of \textsc{E-Adjudicate-Approve}.

\textbf{Case 1: the step is generated by a domain-preserving
rule} (Remark~\ref{rem:rule-classification}).
By Lemma~\ref{lem:domain-localization},
$s_{j+1}.\mathsf{domain} = s_j.\mathsf{domain}$. The inductive
hypothesis gives
$s_{j+1}.\mathsf{domain} \models \invariants_D$.

\textbf{Case 2: the step is generated by an
applyMutation-applying rule.} In the base calculus the unique
such rule is \textsc{E-Adjudicate-Approve}, possibly lifted
through \textsc{E-Ctx}. The primitive-rule instance then has the
form
\[
  \config{\mathsf{adjudicate}((r, a), t)}{s_j}{\omega_j}
  \;\longrightarrow\;
  \config{\mathsf{approved}}{s_{j+1}}{\omega_{j+1}}
\]
for some role~$r$, action~$a$, and timestamp~$t$; \textsc{E-Ctx}
(\cref{sec:opsem-eval-contexts}) passes the framework state and
oracle through unchanged. By
Lemma~\ref{lem:approval-soundness},
$s_{j+1}.\mathsf{domain} \models \invariants_D$.

In either case,
$s_{j+1}.\mathsf{domain} \models \invariants_D$, completing the
induction.
\end{proof}

\begin{remark}[Reachable-state form]
\label{rem:t1-reachable-form}
Theorem~1 is stated about reachable domain states rather than about
a relation between each approval step's pre-state and its committed
action. The reachable-state form is what a compliance argument
actually needs to cite (``no reachable state violates an
invariant''), and it is available only because
Lemma~\ref{lem:approval-soundness} establishes invariant
satisfaction on the \emph{post-state} of each approval, rather than
on the pre-state-plus-action pair that was evaluated. The
soundness of the commit-discipline post-check
(Remark~\ref{rem:state-predicates}) is therefore what drives the
inductive step; a per-approval-step statement phrased on the
pre-state-plus-action pair would reveal only a weaker, contingent
relationship.
\end{remark}

\subsubsection{Agent Confinement}
\label{app:proof-confinement}
\paragraph{Proposition 1 (Agent Confinement).}
\emph{For every $D : \Domain$, the following two confinement
properties hold.}

\begin{enumerate}
\item[(i)] \emph{\textbf{Observation confinement.} For every role
$r \in \Roles_D$, every agent $\alpha_r \in \Agent_D(r)$, every
oracle trace $\omega \in \Oracle$, and every pair of domain states
$s_1, s_2 \in \State_D$ with
$\project_D(s_1, r) = \project_D(s_2, r)$: if
$\alpha_r(\project_D(s_1, r), \omega)$ is defined then
$\alpha_r(\project_D(s_2, r), \omega)$ is defined and equal.}

\item[(ii)] \emph{\textbf{Write-authority confinement.} For every
primitive reduction
$\config{\mathsf{coordinate}}{s}{\omega} \longrightarrow
\config{(r, a)}{s'}{\omega'}$
under \textsc{E-Coordinate}: $s' = s$. In particular,
$s'.\mathsf{domain} = s.\mathsf{domain}$.}
\end{enumerate}

\begin{proof}
\emph{Part~(i).} Immediate from Definition~\ref{def:agent}: an
agent $\alpha_r$ is a function whose first argument is its
role's slice. By hypothesis, the two slices agree, so the
applications $\alpha_r(\project_D(s_1, r), \omega)$ and
$\alpha_r(\project_D(s_2, r), \omega)$ are at the same argument
pair, hence equal where defined.

\emph{Part~(ii).} Immediate from \textsc{E-Coordinate}
(\cref{sec:opsem-arch}): the conclusion of the rule is
$\config{(r, a)}{s}{\omega'}$, threading the framework state~$s$
unchanged from the premise. The equality $s' = s$ is therefore
syntactic, and $s'.\mathsf{domain} = s.\mathsf{domain}$ follows.
This case of Lemma~\ref{lem:domain-localization} is thus
strengthened: invoking an agent mutates neither the domain state
nor any other component of the framework state.
\end{proof}

\subsubsection{Write Skew Freedom}
\label{app:proof-write-skew}

A write-skew execution at the calculus level is the formal
counterpart of Berenson \emph{et al.}'s anomaly~A5B in the
database-isolation setting~\cite{berenson}: a committed state
violating a constraint that references state modified by more
than one concurrent transaction --- a constraint no
transaction, operating on its own view, could have caused to
fail alone. In the calculus, a sub-agent observes exactly its
role's slice (Definitions~\ref{def:slice},~\ref{def:agent}),
so the calculus-level analogue of such a constraint is a
cross-cutting invariant
(Definition~\ref{def:local-crosscutting}). Scheduling
non-determinism that in the classical setting arises from
concurrent transactions is absorbed here into the oracle, which
selects the arbitration order in which $\Coord^{\AR}$ elicits
proposals (Remark~\ref{rem:seq-and-simultaneity}); ranging over
all $\omega \in \Oracle$ therefore ranges over all possible
schedules.

\begin{definition}[Write-skew execution at the calculus level]
\label{def:write-skew-calc}
An execution $\mathsf{Exec}(e_0, s_0, \omega)$ of $\AgRedux(D)$
is a \emph{write-skew execution} iff there exist a reachable
domain state~$s$ and a cross-cutting invariant
$\iota \in \invariants_D$ with $\iota(s) \ne \pass$.
\end{definition}

\paragraph{Corollary 1 (Write Skew Freedom).}
\emph{For every $D : \Domain$, every client program~$e_0$
well-typed over $\AgRedux(D)$, every initial framework state~$s_0$
of type $\FrameworkState_D$ with $s_0.\mathsf{domain} \models
\invariants_D$, and every oracle trace $\omega \in \Oracle$:
$\mathsf{Exec}(e_0, s_0, \omega)$ is not a write-skew execution.}

\begin{proof}
By Theorem~1, every reachable domain state in
$\mathsf{Exec}(e_0, s_0, \omega)$ satisfies $\invariants_D$,
hence satisfies every cross-cutting invariant in particular.
The execution therefore does not meet
Definition~\ref{def:write-skew-calc}.
\end{proof}

\subsubsection{Audit Log Integrity}
\label{app:proof-audit-integrity}
\paragraph{Theorem 2 (Audit Log Integrity).}
\emph{For every $D : \Adjudicable$, every client program~$e_0$
well-typed over $\AgRedux(D)$, every initial framework state~$s_0$
of type $\FrameworkState_D$, and every oracle trace $\omega \in
\Oracle$, the execution $\mathsf{Exec}(e_0, s_0, \omega)$
satisfies:}

\emph{\textbf{(i) Faithful approval entries.} For every step
$\config{e_j}{s_j}{\omega_j} \longrightarrow
\config{e_{j+1}}{s_{j+1}}{\omega_{j+1}}$
in the execution that is an instance of
\textsc{E-Adjudicate-Approve} (possibly lifted through
\textsc{E-Ctx}) with coordinated proposal $(r, a)$ and
timestamp~$t$: the sole entry appended to
$s_{j+1}.\mathsf{audit}$ is tagged $\approved$, carries
action~$a$ and timestamp~$t$, and has witness slot
$\mathsf{None}$.}

\emph{\textbf{(ii) Faithful non-approval entries with witness.}
For every step
$\config{e_j}{s_j}{\omega_j} \longrightarrow
\config{e_{j+1}}{s_{j+1}}{\omega_{j+1}}$
in the execution that is an instance of
\textsc{E-Adjudicate-Reject} (resp.\
\textsc{E-Adjudicate-Escalate}), possibly lifted through
\textsc{E-Ctx}, with coordinated proposal $(r, a)$ and
timestamp~$t$: the sole entry appended to
$s_{j+1}.\mathsf{audit}$ is tagged $\rejected$ (resp.\
$\escalated$), carries action~$a$ and timestamp~$t$, and has
witness slot $\mathsf{Some}(\mathrm{id}(\iota_k), m)$ where
$\iota_k \in \invariants_D$ satisfies
$\iota_k(\applyMutation_D(s_j.\mathsf{domain}, a, t)) =
\rejectRes(m)$ (resp.\ $\escalateRes(m)$); in particular
$\iota_k(\applyMutation_D(s_j.\mathsf{domain}, a, t)) \ne \pass$.}

\emph{\textbf{(iii) Exclusivity of provenance.} For every
reduction step
$\config{e_j}{s_j}{\omega_j} \longrightarrow
\config{e_{j+1}}{s_{j+1}}{\omega_{j+1}}$
in the execution that is not an instance of
\textsc{E-Adjudicate-Approve}, \textsc{E-Adjudicate-Reject}, or
\textsc{E-Adjudicate-Escalate} (possibly lifted through
\textsc{E-Ctx}): $s_{j+1}.\mathsf{audit} = s_j.\mathsf{audit}$.}

\emph{\textbf{(iv) Append-only monotonicity.} For every reduction
step in the execution, $s_j.\mathsf{audit}$ is a prefix of
$s_{j+1}.\mathsf{audit}$.}

\emph{\textbf{(v) Per-step exactness.} For every reduction step
in the execution that is an instance of
\textsc{E-Adjudicate-Approve}, \textsc{E-Adjudicate-Reject}, or
\textsc{E-Adjudicate-Escalate} (possibly lifted through
\textsc{E-Ctx}):
$|s_{j+1}.\mathsf{audit}| = |s_j.\mathsf{audit}| + 1$.}

\begin{proof}
Fix a step
$\config{e_j}{s_j}{\omega_j} \longrightarrow
\config{e_{j+1}}{s_{j+1}}{\omega_{j+1}}$
of the execution, derived from a primitive reduction rule,
possibly lifted through \textsc{E-Ctx}
(\cref{sec:opsem-eval-contexts}), which passes the framework state
and oracle through unchanged.

\emph{Clause (i).} The primitive rule is
\textsc{E-Adjudicate-Approve}. By Step~4 of
Definition~\ref{def:commit-ar}, $\Commit^{\AR}$ appends to
$s_j.\mathsf{audit}$ exactly one entry, tagged $\approved$, with
action~$a$, timestamp~$t$, and witness slot $\mathsf{None}$; this
is the entry fixed by the rule's premise
$s_{j+1}.\mathsf{audit} = s_j.\mathsf{audit} \,{+\!+}\,
[\,\text{approval entry}\,]$.

\emph{Clause (ii).} The primitive rule is
\textsc{E-Adjudicate-Reject} or \textsc{E-Adjudicate-Escalate}. By
Lemma~\ref{lem:non-approval-witness}, the halt-witnessing
invariant $\iota_k \in \invariants_D$ satisfies
$\iota_k(\hat{s}) = \rejectRes(m)$ (resp.\ $\escalateRes(m)$)
where $\hat{s} = \applyMutation_D(s_j.\mathsf{domain}, a, t)$, and
the sole entry appended to $s_{j+1}.\mathsf{audit}$ has tag
$\rejected$ (resp.\ $\escalated$), action~$a$, timestamp~$t$, and
witness slot $\mathsf{Some}(\mathrm{id}(\iota_k), m)$.

\emph{Clauses (iii)--(v).} Immediate from
Lemma~\ref{lem:audit-append}: its
\textsc{E-Adjudicate-*} branch fixes
$s_{j+1}.\mathsf{audit} = s_j.\mathsf{audit} \,{+\!+}\,
[\,\text{entry}\,]$ for exactly one entry, supplying the prefix
property of Clause~(iv) and the cardinality identity of
Clause~(v); the complementary branch fixes
$s_{j+1}.\mathsf{audit} = s_j.\mathsf{audit}$, supplying
Clause~(iii) and the trivial-prefix case of Clause~(iv).
\end{proof}

\begin{remark}[Scope of Theorem~2 and ``linearly in time'']
\label{rem:t2-approval-overlap}
Theorem~2 is purely structural: it characterises which
reduction steps append to the audit log and what those entries
look like. The companion semantic claim --- that an approval
step leaves the domain state satisfying $\invariants_D$ --- is
the content of Theorem~1 (via
Lemma~\ref{lem:approval-soundness}); it is not restated here.
Clauses~(iii)--(v) together make precise the ``linearly in
time'' claim of the informal body-text
(\cref{sec:linear-auditability}): the audit log grows by
exactly one entry per adjudicate step, from no other reduction
rule, and existing entries are never modified.
\end{remark}

\subsubsection{Type Safety}
\label{app:proof-type-safety}

The metatheory above establishes the safety and auditability
properties on which compliance arguments for \AR{} depend:
invariant preservation and its write-skew corollary (Theorem~1,
Corollary~1), audit log integrity (Theorem~2), and agent
confinement (Proposition~1). The proofs of these results refer
to~$D$ only through the operations declared by \Adjudicable{}
(Theorems~1 and~2, Corollary~1) and additionally by \Domain{}
(Proposition~1); no step case-analyzes on a particular~$D$.
Instantiating \AR{} at any~$D$ of the appropriate signature
therefore yields a system for which every corresponding result
holds. We close the metatheory with Type Safety, the standard
Progress-and-Preservation result for the calculus.

Type Safety is not cited by any other result in this paper. It
is presented here for two reasons. First, it documents that the
calculus's term grammar and typing rules interact cleanly with
the reduction relation. Second, the architecture variants
developed in~\cite{semantic-correctness-2} --- Optimistic
Parallel Redux and Hierarchical Agents --- extend the term
grammar of this paper with new constructors (respectively, a
parallel-dispatch form with merge and retry, and a delegation
form). Under those extensions the Progress cases acquire genuine
content: a merge can fail, and a delegation can fall through a
level of the hierarchy. Type Safety then becomes the theorem
under which the new term forms are shown to interact cleanly
with their extended reduction rules. The present result is the
base case of that sequence.

\paragraph{Theorem 3 (Type Safety).}
\emph{For every $D : \Domain$, every well-typed configuration
over $\AgRedux(D)$ satisfies:}
\begin{itemize}
  \item \emph{\textup{(Preservation)} If
        $\config{e}{s}{\omega}$ is well-typed over $\AgRedux(D)$
        with $\emptyset \vdash_D e : \tau$, and
        $\config{e}{s}{\omega} \longrightarrow
        \config{e'}{s'}{\omega'}$, then
        $\config{e'}{s'}{\omega'}$ is well-typed over
        $\AgRedux(D)$ with $\emptyset \vdash_D e' : \tau$.}
  \item \emph{\textup{(Progress)} If $\config{e}{s}{\omega}$ is
        well-typed over $\AgRedux(D)$ and $e$ is not a value,
        then either there exists a reduction step
        $\config{e}{s}{\omega} \longrightarrow
        \config{e'}{s'}{\omega'}$, or the configuration is
        operationally stuck on an undefined invocation of
        $\Coord^{\AR}$, $\Commit^{\AR}$, or an agent in
        $\AgentPop$.}
\end{itemize}

The proof uses three standard auxiliary lemmas.

\begin{lemma}[Canonical forms]
\label{lem:canonical-forms}
If $\emptyset \vdash_D v : \tau$ for a value~$v$, then:
\begin{itemize}
  \item if $\tau = \mathsf{Unit}$, then
        $v \in \{(), \mathsf{done}\}$;
  \item if $\tau = \Roles_D \times \Action_D$, then
        $v = (r, a)$ for some $r \in \Roles_D$ and
        $a \in \Action_D$;
  \item if $\tau = \Time$, then $v$ is a timestamp constant;
  \item if $\tau = \Outcome_D$, then
        $v \in \{\approved, \rejected, \escalated\}$.
\end{itemize}
\end{lemma}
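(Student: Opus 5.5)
The canonical forms lemma is proved by case analysis on the typing derivation of $\emptyset \vdash_D v : \tau$. Since $v$ ranges only over the value grammar $(),\ (r,a),\ \approved,\ \rejected,\ \escalated$ (plus the terminal $\mathsf{done}$, which \textsc{T-Done} types at $\mathsf{Unit}$), and base-type constants typed by \textsc{T-Role}, \textsc{T-Action}, \textsc{T-Time}, the last rule of any derivation for a value must be one of the axioms \textsc{T-Unit}, \textsc{T-Done}, \textsc{T-Role}, \textsc{T-Action}, \textsc{T-Time}, \textsc{T-Approved}, \textsc{T-Rejected}, \textsc{T-Escalated}, or the introduction rule \textsc{T-Proposal}. \textsc{T-Var} is excluded because the context is empty. \textsc{T-Let}, \textsc{T-Coordinate}, \textsc{T-Adjudicate}, and \textsc{T-Entry} are excluded because their subjects are non-value term constructors.

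With that observation in place, I will invert on the type $\tau$ one clause at a time.

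\begin{itemize}
\item For $\tau = \mathsf{Unit}$, the only rules whose conclusion type is $\mathsf{Unit}$ are \textsc{T-Unit} and \textsc{T-Done}, so $v \in \{(), \mathsf{done}\}$.
\item For $\tau = \Roles_D \times \Action_D$, the only value-typing rule producing a product type is \textsc{T-Proposal}, so $v = (r,a)$. Inverting its premises gives $\emptyset \vdash_D r : \Roles_D$ and $\emptyset \vdash_D a : \Action_D$. Because $r$ and $a$ are closed constants, they can only have been typed by \textsc{T-Role} and \textsc{T-Action}, which yields $r \in \Roles_D$ and $a \in \Action_D$.
\item For $\tau = \Time$, only \textsc{T-Time} applies, so $v$ is a timestamp constant.
\item For $\tau = \Outcome_D$, only the three outcome axioms apply, so $v \in \{\approved, \rejected, \escalated\}$.
\end{itemize}

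The one point needing care is that the types are syntactically disjoint, so that no value can inhabit two of them. For instance, no outcome tag is typed at $\mathsf{Unit}$, and $(r,a)$ is not typed at $\Outcome_D$. This holds because the type grammar has no subsumption rule and each axiom fixes a single conclusion type. I will make this explicit by noting that the typing relation is syntax-directed: each term constructor appears in the conclusion of exactly one rule. The main (mild) obstacle is therefore only bookkeeping. I must confirm that $\mathsf{done}$ is treated as a value for this lemma even though the value grammar lists it separately as a terminal term, and that base-type constants count as values. I will state both as conventions fixed by \textsc{T-Done} and the base-type axioms.
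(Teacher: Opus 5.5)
Your proposal is correct and is essentially the paper's own argument, which proves the lemma by inspection of the typing rules: for each of the four types, the only rules whose conclusion can type a closed value at that type are \textsc{T-Unit}/\textsc{T-Done}, \textsc{T-Proposal}, \textsc{T-Time}, and \textsc{T-Approved}/\textsc{T-Rejected}/\textsc{T-Escalated}, respectively. Your extra bookkeeping (ruling out \textsc{T-Var} by the empty context, inverting \textsc{T-Proposal}'s premises, and treating $\mathsf{done}$ and timestamp constants as values even though the value grammar omits them) only makes explicit what the paper leaves implicit; you should drop the claim that no value inhabits two types, since the lemma does not need it (only that each axiom fixes its conclusion type and there is no subsumption) and it could fail if a domain's role and action constants overlapped.
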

\begin{proof}
By inspection of \cref{sec:typing}. For each type, the only
rules whose conclusions match a value of that type are the ones
listed: \textsc{T-Unit} and \textsc{T-Done} for $\mathsf{Unit}$;
\textsc{T-Proposal} for $\Roles_D \times \Action_D$;
\textsc{T-Time} for $\Time$; and
\textsc{T-Approved}, \textsc{T-Rejected}, \textsc{T-Escalated}
for $\Outcome_D$.
\end{proof}

\begin{lemma}[Substitution]
\label{lem:substitution}
If $\Gamma, x{:}\tau_1 \vdash_D e : \tau_2$ and
$\Gamma \vdash_D v : \tau_1$ for a value~$v$, then
$\Gamma \vdash_D e[v/x] : \tau_2$.
\end{lemma}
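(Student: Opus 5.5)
The Substitution Lemma follows the standard route: induction on the derivation of $\Gamma, x{:}\tau_1 \vdash_D e : \tau_2$, with a case for each typing rule of \cref{sec:typing}. First I will record the usual Weakening lemma. If $\Gamma \vdash_D v : \tau_1$ and $y \notin \mathrm{dom}(\Gamma)$, then $\Gamma, y{:}\tau' \vdash_D v : \tau_1$. Its own proof is a routine induction on typing derivations. I will also fix the usual Barendregt convention, so that bound variables are renamed apart from $x$ and from the free variables of $v$. In our setting $v$ ranges over closed value forms ($()$, $(r,a)$, and the outcome tags), so weakening is nearly trivial.

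The axiom cases are dispatched directly. These are \textsc{T-Unit}, \textsc{T-Done}, \textsc{T-Role}, \textsc{T-Action}, \textsc{T-Time}, \textsc{T-Approved}, \textsc{T-Rejected}, \textsc{T-Escalated} and \textsc{T-Coordinate}. In each of them $e$ contains no occurrence of $x$, so $e[v/x] = e$. The same rule, re-applied in context $\Gamma$, types it at $\tau_2$, since none of these rules inspects the context. For \textsc{T-Var}, $e = y$. If $y = x$, then $\tau_2 = \tau_1$ and $e[v/x] = v$, and the second hypothesis gives the typing. If $y \neq x$, then $y{:}\tau_2 \in \Gamma$ and \textsc{T-Var} applies in $\Gamma$.

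The compound rules go through the induction hypothesis. For \textsc{T-Proposal}, \textsc{T-Adjudicate} and \textsc{T-Entry}, substitution distributes over the constructor. I apply the induction hypothesis to each premise and then re-apply the same rule. For example, $\mathsf{adjudicate}(e_1,e_2)[v/x] = \mathsf{adjudicate}(e_1[v/x], e_2[v/x])$, and the premises keep the types $\Roles_D \times \Action_D$ and $\Time$.

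The only case needing care is \textsc{T-Let}, with $e = \mathsf{let}\ y = e_1\ \mathsf{in}\ e_2$. By the variable convention $y \neq x$ and $y$ is not free in $v$. The induction hypothesis on the first premise gives $\Gamma \vdash_D e_1[v/x] : \tau_1'$. For the body the premise is $\Gamma, x{:}\tau_1, y{:}\tau_1' \vdash_D e_2 : \tau_2$. I first exchange it to $\Gamma, y{:}\tau_1', x{:}\tau_1$; exchange holds because contexts are maps. I then weaken the hypothesis on $v$ to $\Gamma, y{:}\tau_1'$. The induction hypothesis, stated for arbitrary $\Gamma$, then types $e_2[v/x]$ in $\Gamma, y{:}\tau_1'$, and \textsc{T-Let} reassembles the result.

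This binder case is the main obstacle. It is the only place where the bookkeeping of exchange, weakening and capture-avoidance matters, so the induction must be stated for all contexts $\Gamma$ rather than a fixed one. The lemma also covers the ambient client-level constructors (application, case analysis) only insofar as they are inherited from the standard calculus. For those I would cite the standard proof in \cite{pierce2002types} rather than redo it.
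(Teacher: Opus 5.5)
Your proposal is correct and follows the same route as the paper, which proves the lemma by induction on the derivation of $\Gamma, x{:}\tau_1 \vdash_D e : \tau_2$ and leaves the standard machinery (weakening, permutation, and the \textsc{T-Let} binder case) to Pierce's Lemma~9.3.8, remarking only that the architecture-primitive cases reduce to the induction hypothesis on their subterms. Your write-up spells out the case analysis that the paper delegates to that citation.
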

\begin{proof}
By induction on the derivation of
$\Gamma, x{:}\tau_1 \vdash_D e : \tau_2$, standard for
simply-typed calculi~\cite[Lemma~9.3.8]{pierce2002types}. The
architecture primitives \textsc{T-Coordinate},
\textsc{T-Adjudicate}, and \textsc{T-Entry} have no dependence
on~$\Gamma$ beyond the types of their subterms, so each case
reduces to the inductive hypothesis applied to those subterms.
\end{proof}

A third ingredient is a decomposition lemma for evaluation
contexts: a well-typed term filling an evaluation context has a
well-defined type for the hole, and substituting a term of the
same type into the hole preserves the overall type. This is
standard for evaluation-context formulations of
reduction~\cite{wrightfelleisen1994}.

\begin{lemma}[Evaluation-context decomposition]
\label{lem:ctx-decomp}
If $\emptyset \vdash_D E[e_1] : \tau$, then there exists a type
$\tau_1$ such that $\emptyset \vdash_D e_1 : \tau_1$, and for
every $e_1'$ with $\emptyset \vdash_D e_1' : \tau_1$,
$\emptyset \vdash_D E[e_1'] : \tau$.
\end{lemma}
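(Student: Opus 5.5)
The plan is to argue by structural induction on the evaluation context $E$, following the grammar $E ::= [\,\cdot\,] \mid \mathsf{let}\; x = E \;\mathsf{in}\; e \mid \mathsf{adjudicate}(E, e) \mid \mathsf{adjudicate}(v, E) \mid \mathsf{entry}(E)$. In each case I will invert the typing derivation of $\emptyset \vdash_D E[e_1] : \tau$. This is possible because the typing rules of \cref{sec:typing} are syntax-directed: each term constructor is the conclusion of exactly one rule, except the base constants, which never appear at the root of a non-hole context. Inversion therefore yields the typing premises for the immediate subterms. I then apply the inductive hypothesis to the subcontext and rebuild the derivation with the same rule.

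\emph{Base case} $E = [\,\cdot\,]$. Here $E[e_1] = e_1$, so I take $\tau_1 = \tau$. Both claims are then trivial: $\emptyset \vdash_D e_1 : \tau$, and $E[e_1'] = e_1'$ has type $\tau$ whenever $e_1'$ does.

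\emph{Inductive cases.}
\begin{itemize}
  \item For $E = \mathsf{let}\; x = E' \;\mathsf{in}\; e$, inversion of \textsc{T-Let} gives a type $\tau'$ with $\emptyset \vdash_D E'[e_1] : \tau'$ and $x{:}\tau' \vdash_D e : \tau$. The inductive hypothesis on $E'$ supplies $\tau_1$ with $\emptyset \vdash_D e_1 : \tau_1$, and shows that replacing $e_1$ by $e_1'$ preserves $\tau'$. Reapplying \textsc{T-Let} with the unchanged second premise gives $\emptyset \vdash_D E[e_1'] : \tau$.
  \item For $E = \mathsf{adjudicate}(E', e)$, inversion of \textsc{T-Adjudicate} gives $\tau = \Outcome_D$, $\emptyset \vdash_D E'[e_1] : \Roles_D \times \Action_D$ and $\emptyset \vdash_D e : \Time$. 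The argument then proceeds exactly as in the let case.
  \item For $E = \mathsf{adjudicate}(v, E')$, the same inversion gives $\emptyset \vdash_D E'[e_1] : \Time$, and the untouched premise is $\emptyset \vdash_D v : \Roles_D \times \Action_D$.
  \item For $E = \mathsf{entry}(E')$, inversion of \textsc{T-Entry} gives $\tau = \Outcome_D$ and $\emptyset \vdash_D E'[e_1] : \Outcome_D$, and I rebuild the derivation with \textsc{T-Entry}.
\end{itemize}
In every case the hole sits in a closed position; the let body is never a hole position. So the context for the hole is always $\emptyset$, and no weakening or substitution reasoning is needed.

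The main obstacle I expect is justifying the inversion steps. The paper also admits ambient client-level constructors (application, case analysis) that reduce by the standard typed lambda calculus rules, and it leaves their evaluation contexts implicit. I would handle this by restricting the lemma to the stated grammar of $E$, and remarking that any ambient constructor added with its standard syntax-directed typing rule contributes a case of exactly the same shape: invert, apply the inductive hypothesis, reapply the rule. The other point to check is that no subsumption or non-syntax-directed rule exists. This holds by inspection of the rules listed, so the typing derivation of $E[e_1]$ is unique up to the choice of $\tau_1$, and the inversion is sound.
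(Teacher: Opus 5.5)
Your proposal is correct and follows the paper's proof: structural induction on $E$ over the same five context forms, inverting \textsc{T-Let}, \textsc{T-Adjudicate}, or \textsc{T-Entry}, applying the inductive hypothesis to the subcontext, and rebuilding the derivation with the same rule. Your remarks on the syntax-directedness of the typing rules and on ambient client-level constructors make explicit what the paper leaves implicit, but they do not change the argument.
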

\begin{proof}
By induction on~$E$ (\cref{sec:opsem-eval-contexts}).
\begin{itemize}
  \item $E = [\cdot]$: take $\tau_1 = \tau$; immediate.
  \item $E = \mathsf{let}\; x = E' \;\mathsf{in}\; e_2$: by
        inversion of \textsc{T-Let}, there exists $\tau_1'$ with
        $\emptyset \vdash_D E'[e_1] : \tau_1'$ and
        $x{:}\tau_1' \vdash_D e_2 : \tau$. The inductive
        hypothesis on~$E'$ supplies~$\tau_1$ and the
        substitution property; \textsc{T-Let} reconstructs the
        conclusion.
  \item $E = \mathsf{adjudicate}(E', e_2)$: by inversion of
        \textsc{T-Adjudicate},
        $\emptyset \vdash_D E'[e_1] : \Roles_D \times \Action_D$
        and $\emptyset \vdash_D e_2 : \Time$, with
        $\tau = \Outcome_D$. Apply the IH to~$E'$ at type
        $\Roles_D \times \Action_D$.
  \item $E = \mathsf{adjudicate}(v, E')$: by inversion of
        \textsc{T-Adjudicate},
        $\emptyset \vdash_D v : \Roles_D \times \Action_D$ and
        $\emptyset \vdash_D E'[e_1] : \Time$. Apply the IH
        to~$E'$ at type $\Time$.
  \item $E = \mathsf{entry}(E')$: by inversion of
        \textsc{T-Entry}, $\emptyset \vdash_D E'[e_1] :
        \Outcome_D$ with $\tau = \Outcome_D$. Apply the IH
        to~$E'$ at type $\Outcome_D$.
\end{itemize}
\end{proof}

\begin{proof}[Proof of Theorem~3]
We prove Preservation and Progress in turn.

\medskip
\noindent\textbf{Preservation.}
Suppose $\config{e}{s}{\omega}$ is well-typed with
$\emptyset \vdash_D e : \tau$ and $s : \FrameworkState_D$, and a
reduction
$\config{e}{s}{\omega} \longrightarrow
 \config{e'}{s'}{\omega'}$
is derived. We case-analyse on the primitive reduction rule
invoked, with \textsc{E-Ctx} as the inductive step.

\smallskip
\emph{Case \textsc{E-Let}.} $e = \mathsf{let}\; x = v
\;\mathsf{in}\; e_2$ and $e' = e_2[v/x]$. By inversion of
\textsc{T-Let}, $\emptyset \vdash_D v : \tau_1$ and
$x{:}\tau_1 \vdash_D e_2 : \tau$. By
Lemma~\ref{lem:substitution},
$\emptyset \vdash_D e_2[v/x] : \tau$. Framework state unchanged:
$s' = s$.

\smallskip
\emph{Case \textsc{E-Coordinate}.} $e = \mathsf{coordinate}$ and
$e' = (r, a)$ where $(r, a) = \Coord^{\AR}(s, \AgentPop, \omega)$.
By \textsc{T-Coordinate}, $\tau = \Roles_D \times \Action_D$.
Definition~\ref{def:coord-ar} specifies that $\Coord^{\AR}$
selects $r \in \Roles_D$ and returns the action~$a$ produced by
$\alpha_r$; by the agent's type signature and Condition~1 of
Definition~\ref{def:agent}, $a$ is an element of $\Action_D$
classified as belonging to role~$r$. Thus \textsc{T-Role} gives
$\emptyset \vdash_D r : \Roles_D$, \textsc{T-Action} gives
$\emptyset \vdash_D a : \Action_D$, and \textsc{T-Proposal}
concludes
$\emptyset \vdash_D (r, a) : \Roles_D \times \Action_D$.
Framework state unchanged.

\smallskip
\emph{Case \textsc{E-Adjudicate-Approve}.}
$e = \mathsf{adjudicate}((r, a), t)$ and $e' = \approved$. By
inversion of \textsc{T-Adjudicate}, $\tau = \Outcome_D$. By
\textsc{T-Approved}, $\emptyset \vdash_D \approved : \Outcome_D$.
The rule's premises fix
$s'.\mathsf{domain} = \applyMutation_D(s.\mathsf{domain}, a, t)
: \State_D$ by the signature of $\applyMutation_D$
(Definition~\ref{def:adjudicable});
$s'.\mathsf{audit} = s.\mathsf{audit} \,{+\!+}\, [\text{approval
entry}]$ has type $\mathsf{List}(\AuditEntry_D)$ by
Definition~\ref{def:audit-entry}; $s'.\mathsf{queue} =
s.\mathsf{queue}$. Hence $s' : \FrameworkState_D$.

\smallskip
\emph{Case \textsc{E-Adjudicate-Reject}.} $e'$ is $\rejected$,
of type $\Outcome_D$ by \textsc{T-Rejected}. The rule leaves
$s.\mathsf{domain}$ unchanged, appends a rejection audit entry,
and preserves the queue; $s' : \FrameworkState_D$.

\smallskip
\emph{Case \textsc{E-Adjudicate-Escalate}.} $e'$ is $\escalated$,
of type $\Outcome_D$ by \textsc{T-Escalated}. The rule leaves
$s.\mathsf{domain}$ unchanged, appends an escalation audit
entry, and appends a pending proposal of type
$\PendingProposal_D$ (Definition~\ref{def:review-queue}) to the
queue; $s' : \FrameworkState_D$.

\smallskip
\emph{Case \textsc{E-Entry}.} $e = \mathsf{entry}(v)$ and
$e' = v$. By inversion of \textsc{T-Entry},
$\emptyset \vdash_D v : \Outcome_D$ with $\tau = \Outcome_D$, so
$\emptyset \vdash_D e' : \tau$. Framework state unchanged.

\smallskip
\emph{Case \textsc{E-Ctx}.} $e = E[e_1]$ and $e' = E[e_1']$ with
$\config{e_1}{s}{\omega} \longrightarrow
 \config{e_1'}{s'}{\omega'}$
by a sub-derivation. By Lemma~\ref{lem:ctx-decomp}, there exists
$\tau_1$ with $\emptyset \vdash_D e_1 : \tau_1$. By the
inductive hypothesis on the sub-derivation,
$\emptyset \vdash_D e_1' : \tau_1$ and
$s' : \FrameworkState_D$. By Lemma~\ref{lem:ctx-decomp} again,
$\emptyset \vdash_D E[e_1'] : \tau$.

\medskip
\noindent\textbf{Progress.}
Suppose $\config{e}{s}{\omega}$ is well-typed with
$\emptyset \vdash_D e : \tau$ and $e$ is not a value. We show
that either a reduction applies or the configuration is
operationally stuck on an undefined invocation of
$\Coord^{\AR}$, $\Commit^{\AR}$, or an agent in $\AgentPop$.
Proceed by structural induction on~$e$.

\smallskip
\emph{Case $e = x$.} Impossible: $e$ is closed.

\smallskip
\emph{Case $e$ is a value.} Excluded by hypothesis.

\smallskip
\emph{Case $e = \mathsf{let}\; x = e_1 \;\mathsf{in}\; e_2$.}
By inversion of \textsc{T-Let},
$\emptyset \vdash_D e_1 : \tau_1$ for some~$\tau_1$. If $e_1$ is
a value, \textsc{E-Let} applies. Otherwise, the inductive
hypothesis on~$e_1$ either supplies a step (lifted by
\textsc{E-Ctx} with
$E = \mathsf{let}\; x = [\cdot] \;\mathsf{in}\; e_2$) or an
operational stuckness that the outer configuration inherits.

\smallskip
\emph{Case $e = \mathsf{coordinate}$.} If
$\Coord^{\AR}(s, \AgentPop, \omega)$ is defined (including every
agent invocation it performs), \textsc{E-Coordinate} applies.
Otherwise the configuration is operationally stuck on an
undefined invocation of $\Coord^{\AR}$ or of an agent in
$\AgentPop$ (Definitions~\ref{def:agent},
\ref{def:coord-ar}).

\smallskip
\emph{Case $e = \mathsf{adjudicate}(e_1, e_2)$.} By inversion of
\textsc{T-Adjudicate},
$\emptyset \vdash_D e_1 : \Roles_D \times \Action_D$ and
$\emptyset \vdash_D e_2 : \Time$.
\begin{itemize}
  \item If $e_1$ is not a value: the IH on~$e_1$ gives a step
        (lifted by \textsc{E-Ctx} with
        $E = \mathsf{adjudicate}([\cdot], e_2)$) or a stuckness.
  \item If $e_1$ is a value and $e_2$ is not: the IH on~$e_2$
        gives a step (lifted by \textsc{E-Ctx} with
        $E = \mathsf{adjudicate}(e_1, [\cdot])$) or a stuckness.
  \item If both are values: by
        Lemma~\ref{lem:canonical-forms}, $e_1 = (r, a)$ for
        $r \in \Roles_D$ and $a \in \Action_D$, and $e_2 = t$
        for a timestamp~$t$. If
        $\Commit^{\AR}(s, a, t, \omega)$ is defined, its output
        falls in exactly one of the three mutually exclusive
        branches of Definition~\ref{def:commit-ar}, so exactly
        one of \textsc{E-Adjudicate-Approve},
        \textsc{E-Adjudicate-Reject},
        \textsc{E-Adjudicate-Escalate} applies. If undefined,
        the configuration is operationally stuck on
        $\Commit^{\AR}$.
\end{itemize}

\smallskip
\emph{Case $e = \mathsf{entry}(e_1)$.} By inversion of
\textsc{T-Entry}, $\emptyset \vdash_D e_1 : \Outcome_D$. If
$e_1$ is a value, \textsc{E-Entry} applies. Otherwise the IH
on~$e_1$ supplies a step (lifted by \textsc{E-Ctx} with
$E = \mathsf{entry}([\cdot])$) or a stuckness.

\medskip
This exhausts the cases of the structural induction, completing
the proof.
\end{proof}

\begin{remark}[Scope of Type Safety]
\label{rem:type-safety-scope}
Theorem~3 is stated at scope $D : \Domain$ because the Progress
case for $\mathsf{coordinate}$ invokes $\Coord^{\AR}$ and
$\AgentPop$, which are defined only over domains carrying role
structure (Definitions~\ref{def:coord-ar},
\ref{def:agent-population}). Type Safety therefore does not
apply at $\Adjudicable$ scope: a domain lacking role structure
has no $\mathsf{coordinate}$ primitive to type. The
$\Commit^{\AR}$-based results (Theorem~1, Theorem~2,
Corollary~1) do not share this restriction.
\end{remark}

\subsection{The Counselor Queue Extension}
\label{app:sec:counselor-queue}
 
We extend the calculus of \cref{sec:preliminaries} with the
machinery needed to formalise the Counselor Queue informally
discussed in \cref{sec:udt,sec:proofs-counselor-queue}, and extend
the metatheory of \cref{sec:metatheory} to cover it.
 
Two features of the base calculus carry the weight of the
extension. The review queue is already part of the framework state
(Definition~\ref{def:review-queue}) and is populated by
\textsc{E-Adjudicate-Escalate}; the extension activates an
already-present data structure rather than introducing a new one.
And the three-valued invariant result
(Definition~\ref{def:inv-result}) already distinguishes
$\escalateRes$ from $\rejectRes$, so the per-invariant decision
whether a failure admits human override is taken by the domain
author at invariant-declaration time; no new escalation-trigger
mechanism is needed.
 
The extension is conservative: a domain that does not instantiate
the extended signature yields a system identical to the one of
\cref{sec:agentic-redux-instance}, and the results of
\cref{sec:metatheory} apply to it unchanged.
 
\subsubsection{Definitions and Reduction Rules}
\label{sec:counselor-defs}
 
\begin{definition}[Signature \AdjudicablePlus]
\label{def:adjudicable-plus}
A structure~$D$ satisfies \AdjudicablePlus{} iff it satisfies
\Adjudicable{} (Definition~\ref{def:adjudicable}) and, in
addition, provides:
\begin{itemize}
  \item a type $\Counselor_D : \Occ$ of counselor identities;
  \item an authorisation predicate
        $\authorized_D : \Counselor_D \to \{\mathsf{true},
        \mathsf{false}\}$.
\end{itemize}
An element $c : \Counselor_D$ with
$\authorized_D(c) = \mathsf{true}$ is called a \emph{counselor}.
\end{definition}
 
\begin{remark}[Policy authority, not type authority]
\label{rem:counselor-policy}
The counselor may commit any $s_c : \State_D$, regardless of
whether $s_c \models \invariants_D$, and regardless of which
action's escalation triggered the counselor's involvement. The
counselor's state argument is well-typed --- this is not an
untyped escape hatch --- but no kernel-level predicate filters it
on policy grounds. Byzantine behaviour is explicitly out of scope;
the counselor's decisions are policy judgements, not adversarial
inputs.
\end{remark}
 
\paragraph{Extended audit entries.}
The audit entry type is extended with two additional shapes.
 
\begin{definition}[Extended audit entry]
\label{def:audit-entry-plus}
$\AuditEntry_D^{+} \;\triangleq\; \AuditEntry_D \;+\;
\mathsf{CommitEntry}_D \;+\; \mathsf{RejectEntry}_D$, where
\begin{align*}
  \mathsf{CommitEntry}_D \;\triangleq\;
  &\mathsf{Id} \times \Time \times \{\cCommitted\}
   \times \Counselor_D \\
  &\times \State_D \times \mathsf{Id} \\
  &\times \mathsf{List}(\mathsf{InvariantId} \times \InvResult),
\end{align*}
\begin{align*}
  \mathsf{RejectEntry}_D \;\triangleq\;
  &\mathsf{Id} \times \Time \times \{\cRejected\}
   \times \Counselor_D \\
  &\times \mathsf{String} \times \mathsf{Id}.
\end{align*}
The second $\mathsf{Id}$ field of each entry records the
identifier of the originating escalation entry. The
$\mathsf{List}(\mathsf{InvariantId} \times \InvResult)$ field of a
$\mathsf{CommitEntry}$ is the \emph{detection vector}: the kernel
walks $\invariants_D$ on the counselor-committed state~$s_c$ and
records, per invariant, the returned $\InvResult$ value.
\end{definition}
 
The detection vector is informational: the kernel does not use it
to decide whether to commit, and its contents do not affect any
reduction rule below. It exists so an auditor reviewing the log
can mechanically identify counselor commits that did not preserve
invariants, and so the harness has a programmatic signal on which
to base operational recovery
(cf.\ Remark~\ref{rem:counselor-policy}).
 
\paragraph{Escalation--pending-proposal Id sharing.}
For the extension, we fix a convention implicit in the base
calculus's presentation of Step~3 of Definition~\ref{def:commit-ar}:
the $\mathsf{Id}$ of the pending proposal appended to
$s.\mathsf{queue}$ by \textsc{E-Adjudicate-Escalate} is the same
as the $\mathsf{Id}$ of the escalation audit entry appended to
$s.\mathsf{audit}$ by the same rule instance. This makes ``the
escalation entry associated with the current pending proposal''
a well-defined phrase.
 
\paragraph{Extended term language and typing.}
The term language gains one new constructor, $\awaitCounselor$, and
two new value forms, $\cCommitted$ and $\cRejected$:
\[
  e \;::=\; \cdots \;\mid\; \awaitCounselor,
  \qquad
  v \;::=\; \cdots \;\mid\; \cCommitted \;\mid\; \cRejected.
\]
Evaluation contexts are unchanged; $\awaitCounselor$ is atomic.
The types gain one constant:
\[
  \tau \;::=\; \cdots \;\mid\; \CounselorOutcome_D,
\]
with $\CounselorOutcome_D \triangleq \{\cCommitted, \cRejected\}$.
 
\begin{mathpar}
\inferrule*[left=\textsc{T-CCommitted}]
  {\ }
  { \Gamma \vdash_D \cCommitted : \CounselorOutcome_D }
 
\inferrule*[left=\textsc{T-CRejected}]
  {\ }
  { \Gamma \vdash_D \cRejected : \CounselorOutcome_D }
 
\inferrule*[left=\textsc{T-AwaitCounselor}]
  {\ }
  { \Gamma \vdash_D \awaitCounselor : \CounselorOutcome_D }
\end{mathpar}
 
\paragraph{Freeze discipline on base rules.}
Each of the four architecture rules of \cref{sec:opsem-arch}
acquires one additional premise, $s.\mathsf{queue} = [\,]$. Writing
the extended \textsc{E-Coordinate} rule explicitly to illustrate:
 
\begin{mathpar}
\inferrule*[left=\textsc{E-Coordinate$^{+}$}]
  { s.\mathsf{queue} = [\,]
  \\
    \Coord^{\AR}(s, \AgentPop, \omega) = (r, a)
  \\
    \omega \Rightarrow_{k} \omega'
  }
  { \config{\mathsf{coordinate}}{s}{\omega}
      \longrightarrow
    \config{(r, a)}{s}{\omega'}
  }
\end{mathpar}
 
\noindent The premise $s.\mathsf{queue} = [\,]$ is added verbatim
to the three \textsc{E-Adjudicate-*} rules
(\cref{sec:opsem-arch}); we write
\textsc{E-Adjudicate-Approve$^{+}$}, etc., for the resulting rules.
All other premises and conclusions are unchanged. While
$s.\mathsf{queue} \ne [\,]$, the four base architecture rules do
not fire; only the counselor rules below, the administrative
rules \textsc{E-Let} and \textsc{E-Entry}, and the
evaluation-context lift \textsc{E-Ctx}, remain available.
 
\paragraph{Counselor reduction rules.}
The counselor consults the oracle. We write $\omega
\leadsto (\cCommitted, c, s_c)$ to denote ``the oracle's next
consumption supplies a counselor-commit directive with counselor
identity~$c$ and committed state~$s_c$,'' and analogously
$\omega \leadsto (\cRejected, c, m)$.
 
\begin{mathpar}
\inferrule*[left=\textsc{E-Counselor-Commit}]
  { s.\mathsf{queue} = [p]
  \\\\
    \omega \leadsto (\cCommitted, c, s_c)
  \\
    \authorized_D(c) = \mathsf{true}
  \\\\
    s'.\mathsf{domain} \;=\; s_c
  \\
    s'.\mathsf{queue} \;=\; [\,]
  \\\\
    s'.\mathsf{audit} \;=\; s.\mathsf{audit} \,{+\!+}\,
    [\,\text{commit entry}\,]
  \\
    \omega \Rightarrow_{k} \omega'
  }
  { \config{\awaitCounselor}{s}{\omega}
      \;\longrightarrow\;
    \config{\cCommitted}{s'}{\omega'}
  }
\end{mathpar}
 
\begin{mathpar}
\inferrule*[left=\textsc{E-Counselor-Reject}]
  { s.\mathsf{queue} = [p]
  \\\\
    \omega \leadsto (\cRejected, c, m)
  \\
    \authorized_D(c) = \mathsf{true}
  \\\\
    s'.\mathsf{domain} \;=\; s.\mathsf{domain}
  \\
    s'.\mathsf{queue} \;=\; [\,]
  \\\\
    s'.\mathsf{audit} \;=\; s.\mathsf{audit} \,{+\!+}\,
    [\,\text{reject entry}\,]
  \\
    \omega \Rightarrow_{k} \omega'
  }
  { \config{\awaitCounselor}{s}{\omega}
      \;\longrightarrow\;
    \config{\cRejected}{s'}{\omega'}
  }
\end{mathpar}
 
\noindent The ``commit entry'' appended by
\textsc{E-Counselor-Commit} is the $\mathsf{CommitEntry}_D$ value
with counselor~$c$, state~$s_c$, reference field $p.\mathsf{id}$
(the originating escalation entry's Id, by the Id-sharing
convention above), and detection vector
$[(\mathrm{id}(\iota_k), \iota_k(s_c))]_{\iota_k \in
\invariants_D}$. The ``reject entry'' appended by
\textsc{E-Counselor-Reject} is the $\mathsf{RejectEntry}_D$ value
with counselor~$c$, reason~$m$, and reference field
$p.\mathsf{id}$. The rules are mutually exclusive: the oracle's
next consumption either supplies a $\cCommitted$-tagged directive
or a $\cRejected$-tagged one, not both.
 
If $s.\mathsf{queue} = [\,]$, neither rule's first premise is
satisfied and the configuration is stuck on $\awaitCounselor$. If
the oracle supplies an unauthorised counselor identity, neither
rule's third premise is satisfied and the configuration is stuck.
Both forms of stuckness are intended; they are the
$\awaitCounselor$ analogues of the operational stuckness of
Theorem~3 on undefined invocations of $\Coord^{\AR}$,
$\Commit^{\AR}$, or an agent.
 
\paragraph{The extended architecture.}
 
\begin{definition}[\AgReduxPlus{}]
\label{def:ar-plus}
For $D : \Domain$ additionally satisfying \AdjudicablePlus, the
\emph{Agentic Redux architecture with Counselor Queue over~$D$}
is the architecture whose reduction rules are the four base
architecture rules of \cref{sec:opsem-arch} with the freeze
premise $s.\mathsf{queue} = [\,]$ added, together with
\textsc{E-Counselor-Commit} and \textsc{E-Counselor-Reject}. We
write $\AgReduxPlus(D)$ for the resulting system.
\end{definition}
 
\begin{lemma}[Queue cardinality bound]
\label{lem:queue-bound}
For every initial framework state $s_0 : \FrameworkState_D$ with
$s_0.\mathsf{queue} = [\,]$, every client program~$e_0$ well-typed
over $\AgReduxPlus(D)$, and every oracle trace $\omega$: every
reachable configuration $s_j$ in $\mathsf{Exec}(e_0, s_0, \omega)$
satisfies $|s_j.\mathsf{queue}| \leq 1$.
\end{lemma}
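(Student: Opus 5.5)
We argue by induction on the index~$j$ of the execution $\mathsf{Exec}(e_0, s_0, \omega)$, proving the invariant $|s_j.\mathsf{queue}| \leq 1$. The base case $j = 0$ is the hypothesis $s_0.\mathsf{queue} = [\,]$. For the inductive step we assume $|s_j.\mathsf{queue}| \leq 1$ and examine the step $\config{e_j}{s_j}{\omega_j} \longrightarrow \config{e_{j+1}}{s_{j+1}}{\omega_{j+1}}$. Every rule of $\AgReduxPlus(D)$ is either primitive or lifted through \textsc{E-Ctx}. \textsc{E-Ctx} threads the framework state through unchanged from its sub-derivation, so as in the proofs of Lemmas~\ref{lem:domain-localization} and~\ref{lem:audit-append} it suffices to case-analyse the primitive rule, with an inner induction on the \textsc{E-Ctx} derivation.

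We sort the primitive rules by their effect on the queue component.

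\emph{Queue-preserving rules.} \textsc{E-Let} and \textsc{E-Entry} carry $s$ through syntactically. \textsc{E-Coordinate$^{+}$} carries $s$ through unchanged, as in Proposition~1(ii). \textsc{E-Adjudicate-Approve$^{+}$} and \textsc{E-Adjudicate-Reject$^{+}$} have the explicit premise $s'.\mathsf{queue} = s.\mathsf{queue}$. In each of these cases $|s_{j+1}.\mathsf{queue}| = |s_j.\mathsf{queue}| \leq 1$ by the inductive hypothesis.

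\emph{Queue-emptying rules.} \textsc{E-Counselor-Commit} and \textsc{E-Counselor-Reject} fix $s'.\mathsf{queue} = [\,]$, so the bound holds trivially.

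\emph{Queue-growing rule.} \textsc{E-Adjudicate-Escalate$^{+}$} appends exactly one pending proposal, so $s_{j+1}.\mathsf{queue} = s_j.\mathsf{queue} \,{+\!+}\, [p]$. This is the only case where the bound could fail, and it is the main point of the proof. Here we use the freeze premise $s_j.\mathsf{queue} = [\,]$, which \AgReduxPlus{} adds to every base architecture rule. With that premise, $s_{j+1}.\mathsf{queue} = [p]$ has length exactly $1$.

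Note that the inductive hypothesis is only needed for the preserving cases. The growth case relies on the freeze premise instead, which gives the stronger fact $s_j.\mathsf{queue} = [\,]$. The one subtle point is checking that the list of rules is exhaustive for $\AgReduxPlus(D)$. The client-level administrative rules of the ambient calculus (application, case analysis) step without consulting $s$ (\cref{sec:opsem-admin}), so they belong to the queue-preserving class. Since every reachable configuration is some $s_j$ in the execution, the induction gives the lemma. Well-typedness of $e_0$ and the choice of oracle play no role beyond fixing the execution.
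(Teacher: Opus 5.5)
Your proof is correct and follows the paper's argument: induction on $j$, with the freeze premise $s_j.\mathsf{queue} = [\,]$ of \textsc{E-Adjudicate-Escalate$^{+}$} handling the only queue-growing case, the two counselor rules emptying the queue, and every other rule leaving it unchanged. Your explicit treatment of \textsc{E-Ctx} and of the ambient administrative rules spells out details the paper leaves implicit.
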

\begin{proof}
Induction on~$j$. The base case holds by hypothesis. For the
inductive step, the only rule that extends the queue is
\textsc{E-Adjudicate-Escalate$^{+}$}, which appends one pending
proposal and has premise $s_j.\mathsf{queue} = [\,]$, so
$|s_{j+1}.\mathsf{queue}| = 1$. The two counselor rules have
premise $s_j.\mathsf{queue} = [p]$ and set
$s_{j+1}.\mathsf{queue} = [\,]$. All other rules leave the queue
unchanged, preserving the inductive hypothesis.
\end{proof}
 
\begin{lemma}[Extended audit append discipline]
\label{lem:audit-append-plus}
For every reduction step of $\AgReduxPlus(D)$
$\config{e}{s}{\omega} \longrightarrow \config{e'}{s'}{\omega'}$:
\begin{itemize}
  \item if the step is an instance of
        \textsc{E-Adjudicate-Approve$^{+}$},
        \textsc{E-Adjudicate-Reject$^{+}$},
        \textsc{E-Adjudicate-Escalate$^{+}$},
        \textsc{E-Counselor-Commit}, or
        \textsc{E-Counselor-Reject} (possibly lifted through
        \textsc{E-Ctx}), then
        $s'.\mathsf{audit} = s.\mathsf{audit} \,{+\!+}\,
        [\,\text{entry}\,]$ for exactly one entry, fixed by
        Definition~\ref{def:commit-ar} for the three base rules
        and by \cref{sec:counselor-defs} for the two counselor
        rules;
  \item otherwise, $s'.\mathsf{audit} = s.\mathsf{audit}$.
\end{itemize}
\end{lemma}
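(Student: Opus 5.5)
The plan mirrors the proof of Lemma~\ref{lem:audit-append}: a case analysis on the primitive reduction rule used to derive the step, with \textsc{E-Ctx} handled as the inductive step. I first enumerate the primitive rules of $\AgReduxPlus(D)$ (Definition~\ref{def:ar-plus}). These are the administrative rules \textsc{E-Let} and \textsc{E-Entry}, the four frozen architecture rules \textsc{E-Coordinate$^{+}$} and \textsc{E-Adjudicate-$*^{+}$}, and the two counselor rules. The standard call-by-value client rules also belong here; they never consult $s$.

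For the three \textsc{E-Adjudicate-$*^{+}$} rules, the freeze premise $s.\mathsf{queue}=[\,]$ is the only change from the base rules. Their remaining premises, including $s'.\mathsf{audit} = s.\mathsf{audit} \,{+\!+}\, [\,\text{entry}\,]$ with the entry fixed by Definition~\ref{def:commit-ar}, are copied verbatim. The corresponding case of Lemma~\ref{lem:audit-append} therefore transfers unchanged; an extra premise can only restrict when a rule fires, never alter its conclusion. For \textsc{E-Counselor-Commit} and \textsc{E-Counselor-Reject}, the premise $s'.\mathsf{audit} = s.\mathsf{audit} \,{+\!+}\, [\,\text{commit entry}\,]$ (respectively reject entry) gives the first bullet directly. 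The single entry is the $\mathsf{CommitEntry}_D$ or $\mathsf{RejectEntry}_D$ value specified in \cref{sec:counselor-defs}, typed in $\AuditEntry^{+}_D$ by Definition~\ref{def:audit-entry-plus}.

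For the second bullet, \textsc{E-Let} and \textsc{E-Entry} return $s$ syntactically unchanged, and \textsc{E-Coordinate$^{+}$} threads $s$ through its conclusion, as Proposition~1(ii) already observes in the base case. The ambient client rules are lifted without touching $s$, so the audit field is unchanged in every such case. For \textsc{E-Ctx}, the outer step has the same $s,s'$ as the inner step, and the inner step is shorter. The inductive hypothesis therefore places it in one of the two bullets, and the outer step inherits the same classification and the same audit equation. The step is classified as an instance of the lifted rule, exactly as in Lemma~\ref{lem:audit-append}.

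The only point needing care, and the main obstacle, is that the five listed rules are mutually exclusive and exhaust the audit-appending rules. I will argue this in two parts. First, every rule's conclusion fixes $s'.\mathsf{audit}$ either as $s.\mathsf{audit}$ or as $s.\mathsf{audit}$ with exactly one entry appended; no rule appends two entries or rewrites a prefix. Second, the base-rule and counselor-rule cases cannot overlap, because the counselor rules operate on the distinct atomic term $\awaitCounselor$, and the rule sets are disjoint on both term shape and the queue premise. Determinism, extending Lemma~\ref{lem:determinism}, then makes ``the step is an instance of'' unambiguous.
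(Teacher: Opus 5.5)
Your proposal is correct and follows the paper's own proof essentially step for step. Both do a case analysis on the primitive rule. The three \textsc{E-Adjudicate-$*^{+}$} cases are inherited from Lemma~\ref{lem:audit-append} because the freeze premise only restricts firing, and the two counselor rules are discharged by their explicit audit-append premises. \textsc{E-Let}, \textsc{E-Entry} and \textsc{E-Coordinate$^{+}$} thread $s$ unchanged, and \textsc{E-Ctx} serves as the inductive step.

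The exclusivity and extended-determinism argument that you flag as the main obstacle is not actually needed. The statement is a per-step conditional, each bullet follows from the premises of whichever rule derived the step, and the paper's proof uses no such argument. So that part is harmless extra work rather than a required step.
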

\begin{proof}
The three \textsc{E-Adjudicate-*$^{+}$} cases inherit from
Lemma~\ref{lem:audit-append}: the freeze premise restricts when
the rules fire but does not alter their audit-appending
conclusions. The two \textsc{E-Counselor-*} rules have explicit
audit-append premises (\cref{sec:counselor-defs}), each
appending exactly one entry. The remaining rules
(\textsc{E-Let}, \textsc{E-Entry}, \textsc{E-Coordinate$^{+}$})
carry the framework state through their conclusions
syntactically. \textsc{E-Ctx} preserves both cases.
\end{proof}
 
\begin{remark}[Rule classification, extended]
\label{rem:rule-classification-plus}
Each rule of $\AgReduxPlus(D)$ is classified as in
Remark~\ref{rem:rule-classification}, by the shape of its
conclusion's domain clause:
\begin{itemize}
  \item the freeze-premised
        \textsc{E-Adjudicate-Approve$^{+}$} is
        applyMutation-applying;
  \item \textsc{E-Adjudicate-Reject$^{+}$},
        \textsc{E-Adjudicate-Escalate$^{+}$}, and
        \textsc{E-Counselor-Reject} are domain-preserving (each
        fixes $s'.\mathsf{domain} = s.\mathsf{domain}$ in its
        conclusion);
  \item \textsc{E-Let}, \textsc{E-Coordinate$^{+}$}, and
        \textsc{E-Entry} are domain-preserving (their
        conclusions leave $s$ unchanged);
  \item \textsc{E-Counselor-Commit} fits neither class: its
        conclusion fixes $s'.\mathsf{domain} = s_c$ for an
        oracle-supplied state $s_c$ that is in general distinct
        from both $s.\mathsf{domain}$ and any
        $\applyMutation_D$-image of it.
\end{itemize}
Lemma~\ref{lem:domain-localization} therefore extends to
$\AgReduxPlus(D)$ for every step except those instantiating
\textsc{E-Counselor-Commit}.
\end{remark}
 
By Lemma~\ref{lem:queue-bound}, the phrase ``the escalation entry
associated with the current pending proposal'' in the
\textsc{E-Counselor-*} rules refers to a unique audit entry.
 
\subsubsection{Preservation of Base Results}
\label{app:counselor-base-preservation}
 
Theorems~1 and~3, Proposition~1, and Corollary~1 extend to
\AgReduxPlus{} with small additions that exploit the parametric
structure of their proofs. Theorem~2 (Audit Log Integrity) does
not extend by the same pattern and is addressed separately in
\cref{sec:counselor-ext-audit}.
 
All four results below presuppose the well-typed configurations
of \cref{def:wt-config} extended to the new term forms, and the
initial-queue hypothesis $s_0.\mathsf{queue} = [\,]$ needed for
Lemma~\ref{lem:queue-bound}.
 
\paragraph{Theorem 1$^{+}$ (Invariant Preservation under
Safe Counselor Commits).}
\emph{For every $D : \Domain$ additionally satisfying
\AdjudicablePlus, every client program~$e_0$ well-typed over
$\AgReduxPlus(D)$, every initial framework state~$s_0$ of type
$\FrameworkState_D$ with $s_0.\mathsf{domain} \models
\invariants_D$ and $s_0.\mathsf{queue} = [\,]$, and every oracle
trace $\omega \in \Oracle$: if every step of
$\mathsf{Exec}(e_0, s_0, \omega)$ that is an instance of
\textsc{E-Counselor-Commit} (possibly lifted through
\textsc{E-Ctx}) produces a post-state $s_{j+1}$ with
$s_{j+1}.\mathsf{domain} \models \invariants_D$, then every
reachable domain state in $\mathsf{Exec}(e_0, s_0, \omega)$
satisfies $\invariants_D$.}
 
\begin{proof}
The induction of the proof of Theorem~1 carries through; we add
two cases to its inductive step.
 
\emph{Case \textsc{E-Counselor-Commit}.} By hypothesis,
$s_{j+1}.\mathsf{domain} \models \invariants_D$.
 
\emph{Case \textsc{E-Counselor-Reject}.} The rule is
domain-preserving
(Remark~\ref{rem:rule-classification-plus}), so by
Lemma~\ref{lem:domain-localization},
$s_{j+1}.\mathsf{domain} = s_j.\mathsf{domain}$. The inductive
hypothesis gives
$s_{j+1}.\mathsf{domain} \models \invariants_D$. This extends
Case~1 of the proof of Theorem~1 to the new rule.
 
The base cases (\textsc{E-Adjudicate-Approve$^{+}$} and
everything else) are as in the proof of Theorem~1; the added
freeze premise restricts when the base rules fire but does not
alter their conclusions.
\end{proof}
 
\paragraph{Proposition 1$^{+}$ (Agent Confinement, extended).}
\emph{Proposition~1 holds for $\AgReduxPlus(D)$ verbatim.}
 
\begin{proof}
\emph{Part~(i).} The statement and proof depend only on the type
signature of $\alpha_r$ (Definition~\ref{def:agent}), unchanged
by the extension. A counselor is not an agent --- the domain
signature treats $\Counselor_D$ and $\Roles_D$ as disjoint sorts
--- so no counselor-related reduction rule is within the scope
of the proposition.
 
\emph{Part~(ii).} \textsc{E-Coordinate$^{+}$} has the same
conclusion $\config{(r, a)}{s}{\omega'}$ as
\textsc{E-Coordinate}; the added freeze premise restricts when
the rule fires but leaves its conclusion --- and in particular
the syntactic equality $s' = s$ --- unchanged.
\end{proof}
 
\paragraph{Corollary 1$^{+}$ (Write Skew Freedom under Safe Counselor
Commits).}
\emph{Under the hypotheses of Theorem~1$^{+}$,
$\mathsf{Exec}(e_0, s_0, \omega)$ is not a write-skew execution
(Definition~\ref{def:write-skew-calc}).}
 
\begin{proof}
Immediate from Theorem~1$^{+}$, as in the proof of Corollary~1.
\end{proof}
 
\paragraph{Theorem 3$^{+}$ (Type Safety, extended).}
\emph{For every $D : \Domain$ additionally satisfying
\AdjudicablePlus, every well-typed configuration over
$\AgReduxPlus(D)$ satisfies:}
\begin{itemize}
  \item \emph{\textup{(Preservation)} As in Theorem~3, with
        additional cases for the counselor rules.}
  \item \emph{\textup{(Progress)} As in Theorem~3, with two
        additional forms of operational stuckness: an
        $\awaitCounselor$ term with $s.\mathsf{queue} = [\,]$,
        and an $\awaitCounselor$ term for which the oracle's
        next consumption supplies an unauthorised counselor
        identity.}
\end{itemize}
 
\begin{proof}
Preservation gains two cases.
\textsc{E-Counselor-Commit} reduces
$\awaitCounselor : \CounselorOutcome_D$ to $\cCommitted :
\CounselorOutcome_D$; the framework state components preserve
their types by construction of the rule
(Definition~\ref{def:audit-entry-plus}).
\textsc{E-Counselor-Reject} is analogous.
 
Progress gains one case: $e = \awaitCounselor$, of type
$\CounselorOutcome_D$ by \textsc{T-AwaitCounselor}. If
$s.\mathsf{queue} = [p]$ and the oracle's next consumption
supplies an authorised counselor identity, exactly one of the
two counselor rules applies, discriminated by the $\cCommitted$
vs.~$\cRejected$ tag of the oracle directive. Otherwise, the
configuration is operationally stuck.
 
All other cases are as in the proof of Theorem~3. The added
freeze premise on the four base architecture rules does not
introduce a new class of stuckness: a configuration with
$e = \mathsf{coordinate}$ or
$e = \mathsf{adjudicate}((r,a), t)$ and $s.\mathsf{queue} \ne
[\,]$ is stuck in the same sense that the base calculus admits
--- no rule applies --- and this is the client's signal to
interpose an $\awaitCounselor$ term before further
architecture-primitive activity.
\end{proof}
 
\subsubsection{Extended Audit Log Integrity}
\label{sec:counselor-ext-audit}
\label{app:proof-audit-integrity-ext}
 
Theorem~2 (Audit Log Integrity) does not extend to
$\AgReduxPlus(D)$ by the parametric pattern used for the other
base results. Two features of its statement explain why.
 
First, Clauses~(iii) and~(v) of Theorem~2 are rule-list claims:
they name \textsc{E-Adjudicate-Approve}, \textsc{E-Adjudicate-Reject},
and \textsc{E-Adjudicate-Escalate} by name as the rules that
append to the audit and under which per-step exactness holds. The
extension introduces two further audit-appending rules, and the
rule-list claims are structurally incomplete as stated.
 
Second, Theorem~2 addresses a single commit-capable principal: the
meta-agent, whose entries have the shape fixed by
Definition~\ref{def:audit-entry}. The extension introduces a
second commit-capable principal, the counselor, whose entries have
a different shape
(Definition~\ref{def:audit-entry-plus}) and carry provenance
information absent in the base: counselor identity, reference to
the originating escalation entry, per-invariant detection vector
on a post-hoc committed state. Multi-principal auditability
additionally requires a trace-level property that Theorem~2 does
not assert: every counselor entry in the log is the resolution of
an escalation entry earlier in the log. This is the property that
makes the extended log auditable in the operational sense ---
every counselor action can be traced back to the adjudication that
occasioned it.
 
We therefore state the extended auditability result as a new
theorem. The base Theorem~2 remains valid for the subset of
reduction rules it addresses; Theorem~4 covers the full rule set
of $\AgReduxPlus(D)$ and adds the provenance-chaining property.
 
\paragraph{Theorem 4 (Extended Audit Log Integrity).}
\emph{For every $D : \Domain$ additionally satisfying
\AdjudicablePlus, every client program~$e_0$ well-typed over
$\AgReduxPlus(D)$, every initial framework state~$s_0$ of type
$\FrameworkState_D$ with $s_0.\mathsf{queue} = [\,]$, and every
oracle trace $\omega \in \Oracle$, the execution
$\mathsf{Exec}(e_0, s_0, \omega)$ satisfies:}
 
\emph{\textbf{(i)--(ii) Kernel-path faithful entries.} Clauses~(i)
and~(ii) of Theorem~2 hold verbatim, with the
\textsc{E-Adjudicate-*} rules replaced by their freeze-premised
counterparts \textsc{E-Adjudicate-*$^{+}$}.}
 
\emph{\textbf{(iii$^{+}$) Exclusivity of provenance.} For every
reduction step that is not an instance of any of
\textsc{E-Adjudicate-Approve$^{+}$},
\textsc{E-Adjudicate-Reject$^{+}$},
\textsc{E-Adjudicate-Escalate$^{+}$},
\textsc{E-Counselor-Commit}, or \textsc{E-Counselor-Reject}
(possibly lifted through \textsc{E-Ctx}):
$s_{j+1}.\mathsf{audit} = s_j.\mathsf{audit}$.}
 
\emph{\textbf{(iv) Append-only monotonicity.} For every reduction
step in the execution, $s_j.\mathsf{audit}$ is a prefix of
$s_{j+1}.\mathsf{audit}$.}
 
\emph{\textbf{(v$^{+}$) Per-step exactness.} For every reduction
step that is an instance of any of the five rules listed in
Clause~(iii$^{+}$) (possibly lifted through \textsc{E-Ctx}):
$|s_{j+1}.\mathsf{audit}| = |s_j.\mathsf{audit}| + 1$.}
 
\emph{\textbf{(vi) Faithful counselor-commit entries.} For every
step that is an instance of \textsc{E-Counselor-Commit}
(possibly lifted through \textsc{E-Ctx}) with oracle consumption
$(\cCommitted, c, s_c)$ and pending proposal~$p$: the sole entry
appended to $s_{j+1}.\mathsf{audit}$ is a
$\mathsf{CommitEntry}_D$ with tag $\cCommitted$, counselor~$c$,
committed state~$s_c$, escalation-reference $p.\mathsf{id}$, and
detection vector
$[(\mathrm{id}(\iota_k), \iota_k(s_c))]_{\iota_k \in
\invariants_D}$; and $\authorized_D(c) = \mathsf{true}$.}
 
\emph{\textbf{(vii) Faithful counselor-reject entries.} For every
step that is an instance of \textsc{E-Counselor-Reject}
(possibly lifted through \textsc{E-Ctx}) with oracle consumption
$(\cRejected, c, m)$ and pending proposal~$p$: the sole entry
appended to $s_{j+1}.\mathsf{audit}$ is a
$\mathsf{RejectEntry}_D$ with tag $\cRejected$, counselor~$c$,
reason~$m$, and escalation-reference $p.\mathsf{id}$; and
$\authorized_D(c) = \mathsf{true}$.}
 
\emph{\textbf{(viii) Escalation--resolution chaining.} For every
audit entry of tag $\cCommitted$ or $\cRejected$ in $s_j.\mathsf{audit}$
for any~$j$: the entry's escalation-reference field is the
$\mathsf{Id}$ of a unique earlier audit entry in
$s_j.\mathsf{audit}$ of tag $\escalated$, and no other counselor
entry in the entire execution shares that escalation-reference
value. Equivalently: every $\escalated$ entry in the execution
is referenced by at most one counselor entry.}
 
\begin{proof}
Fix a reduction step
$\config{e_j}{s_j}{\omega_j} \longrightarrow
\config{e_{j+1}}{s_{j+1}}{\omega_{j+1}}$
of the execution, derived from a primitive rule possibly lifted
through \textsc{E-Ctx}.
 
\emph{Clauses (i)--(ii).} Inherited from Theorem~2. The proofs
refer to each rule's premises and to
Lemmas~\ref{lem:approval-soundness}
and~\ref{lem:non-approval-witness}; none of these is altered by
the extension, and the added freeze premise does not change the
approval, rejection, or escalation branches' conclusions.
 
\emph{Clauses (iii$^{+}$)--(v$^{+}$).} Immediate from
Lemma~\ref{lem:audit-append-plus}: its audit-appending branch
fixes $s_{j+1}.\mathsf{audit} = s_j.\mathsf{audit} \,{+\!+}\,
[\,\text{entry}\,]$ for exactly one entry, supplying the prefix
property of Clause~(iv) and the cardinality identity of
Clause~(v$^{+}$); the complementary branch supplies
Clause~(iii$^{+}$) and the trivial-prefix case of Clause~(iv).
 
\emph{Clause (vi).} \textsc{E-Counselor-Commit}'s premises fix
$s_{j+1}.\mathsf{audit} = s_j.\mathsf{audit} \,{+\!+}\,
[\,\text{commit entry}\,]$, where the commit entry's shape is
fixed by Definition~\ref{def:audit-entry-plus}: tag $\cCommitted$,
counselor $c$ from the oracle consumption, committed state $s_c$
from the oracle consumption, detection vector computed by the
kernel on $s_c$ (\cref{sec:counselor-defs}), and
escalation-reference $p.\mathsf{id}$ where $p$ is the pending
proposal of the rule's first premise. The authorisation
condition $\authorized_D(c) = \mathsf{true}$ is an explicit
premise of the rule.
 
\emph{Clause (vii).} Analogous, substituting
\textsc{E-Counselor-Reject} and $\mathsf{RejectEntry}_D$.
 
\emph{Clause (viii).} We argue by induction on~$j$.
 
\emph{Base case} ($j = 0$). $s_0.\mathsf{audit} = [\,]$ (no rule
has yet fired); the claim holds vacuously.
 
\emph{Inductive step.} Suppose the claim holds for $s_j$. Consider
the reduction $s_j \to s_{j+1}$ by rule~$R$.
 
If $R$ is not an audit-appending rule,
$s_{j+1}.\mathsf{audit} = s_j.\mathsf{audit}$ by Clause
(iii$^{+}$), and the claim persists.
 
If $R$ is \textsc{E-Adjudicate-Approve$^{+}$} or
\textsc{E-Adjudicate-Reject$^{+}$}, the appended entry is tagged
$\approved$ or $\rejected$. No counselor entry is added, and no
new $\escalated$ entry is added, so the claim persists.
 
If $R$ is \textsc{E-Adjudicate-Escalate$^{+}$}, the appended
entry is a fresh $\escalated$ entry. No counselor entry has been
appended at this step, so the new escalation entry has zero
resolvers in $s_{j+1}.\mathsf{audit}$. All previously existing
$\escalated$ entries retain their resolver counts. All previously
existing counselor entries still reference the same
(unchanged) $\escalated$ entries. The claim persists.
 
If $R$ is \textsc{E-Counselor-Commit} or
\textsc{E-Counselor-Reject}, the appended entry is a counselor
entry with escalation-reference $p.\mathsf{id}$, where $p =
s_j.\mathsf{queue}(0)$. By Lemma~\ref{lem:queue-bound} and the
Id-sharing convention (\cref{sec:counselor-defs}),
$p.\mathsf{id}$ is the Id of the unique $\escalated$ entry in
$s_j.\mathsf{audit}$ whose appending populated the queue with
its current element. That escalated entry has no other
counselor resolver in $s_j.\mathsf{audit}$: any such prior
resolver would have emptied the queue in its step
(Lemma~\ref{lem:queue-bound}), after which the only way the
current~$p$ could be in $s_j.\mathsf{queue}$ is for a subsequent
\textsc{E-Adjudicate-Escalate$^{+}$} step to have appended it,
producing a different escalation entry with a different Id.
Hence the new counselor entry's escalation-reference is the
first such reference in the execution, and the claim persists.
 
This completes the induction.
\end{proof}
 
\begin{remark}[Linear auditability restored]
\label{rem:extended-linearity}
Clauses~(iii$^{+}$)--(v$^{+}$) of Theorem~4 make precise the
``linearly in time'' claim of \cref{sec:linear-auditability} for
the extended system: the audit log grows by exactly one entry per
commit-rule step --- kernel or counselor --- from no other
reduction rule, and existing entries are never modified. Clause
(viii) adds the trace-level provenance that multi-principal
auditability requires: every counselor action in the log points
back to a unique adjudication step earlier in the log, and no
escalation is resolved twice.
\end{remark}
 
\begin{remark}[Detection-vector auditability]
\label{rem:detection-vector-audit}
Clause (vi) establishes that every $\cCommitted$ entry records
the kernel's evaluation of $\invariants_D$ on the
counselor-committed state. A human auditor reviewing the log can
therefore identify, mechanically, every counselor commit that
did not preserve invariants --- an entry is invariant-preserving
iff every $\InvResult$ in its detection vector is $\pass$. This
is the formal hook on which an operational recovery protocol in
the harness can be built (cf.\ Remark~\ref{rem:counselor-policy}),
and it is what makes the conditional hypothesis of
Theorem~1$^{+}$ effective: counselor-compliance violations are
not only real but visibly real in the log.
\end{remark}

\end{document}